\documentclass[11pt]{article}
\label{thm:intro-confused-collector-main}
\usepackage[margin=1in]{geometry}

\usepackage[dvipsnames]{xcolor}
\usepackage[utf8]{inputenc}
\usepackage{mathtools,amsfonts,amsthm,mathrsfs,amssymb}
    \usepackage{stmaryrd}
\usepackage{textcomp}
\mathtoolsset{showonlyrefs}
\usepackage{microtype}
\usepackage{todonotes}
\usepackage{enumerate}
\usepackage{enumitem}
\usepackage[indent]{parskip}
\usepackage[hidelinks,
						bookmarksnumbered,
						bookmarksopen=true,
						colorlinks=true,
						linkcolor=blue!70!black, 
						citecolor=green!50!black, 
						urlcolor=red!50!black]{hyperref}

\usepackage{xspace}
\newcommand*\ie{i.\kern.1em e.\ }
\newcommand*\eg{e.\kern.1em g.\ }

\makeatletter
\newtheorem*{rep@theorem}{\rep@title}
\newcommand{\newreptheorem}[2]{
\newenvironment{rep#1}[1]{
 \def\rep@title{#2 \ref{##1}}
 \begin{rep@theorem}}
 {\end{rep@theorem}}}
\makeatother

\newtheorem{thm}{Theorem}[section]
\newreptheorem{thm}{Theorem}
\newtheorem{lemma}[thm]{Lemma}

\newtheorem{proposition}[thm]{Proposition}
\newtheorem{corollary}[thm]{Corollary}
\newtheorem{conj}[thm]{Conjecture}

\theoremstyle{definition}
\newtheorem{definition}{Definition}
\newtheorem*{definition*}{Definition}
\newtheorem{rem}[thm]{Remark}
\newtheorem{example}[thm]{Example}
\newtheorem{qn}{Question}

\makeatletter

\makeatother

\newtheorem*{acknowledgement}{Acknowledgments}

\newcommand{\zo}{\{0,1\}}
\newcommand{\define}{\vcentcolon=}

\newcommand{\G}{\mathcal{G}}

\newcommand{\R}{\mathbb{R}}

\newcommand{\ignore}[1]{}

\newcommand{\sk}{\mathsf{sk}}
\newcommand{\eps}{\varepsilon}

\newcommand{\wcol}{\mathrm{wcol}}

\renewcommand{\le}{\leqslant}
\renewcommand{\leq}{\leqslant}
\renewcommand{\ge}{\geqslant}
\renewcommand{\geq}{\geqslant}

\def\F {{\mathcal F}}
\def\G {{\mathcal G}}

\def\A {{\mathcal A}}
\def\P {{\mathcal P}}
\def\X {{\mathcal X}}

\def\fG{{\mathfrak G}}

\def\bN{{\mathbb N}}

\def\sk {\mathsf{sk}}

\newcommand{\dist}{\mathsf{dist}}
\newcommand{\ind}[1]{\mathbf{1}[ #1 ]}

\title{Sketching Distances in Monotone Graph Classes\thanks{A
    preliminary version of this work appeared in the proceedings of
    the conference \emph{Approximation, Randomization, and
      Combinatorial Optimization. Algorithms and Techniques}
    (APPROX/RANDOM 2022) \cite{EHK22}.}}

\author{%
Louis Esperet\thanks{Partially supported by the French ANR Projects
  GATO (ANR-16-CE40-0009-01), GrR (ANR-18-CE40-0032), TWIN-WIDTH
  (ANR-21-CE48-0014-01) and by LabEx
  PERSYVAL-lab (ANR-11-LABX-0025).}\\ Laboratoire G-SCOP, CNRS, France \\ \texttt{louis.esperet@grenoble-inp.fr}
\and
Nathaniel Harms\thanks{This work was partly funded by an NSERC Canada Graduate Scholarship.} \\
EPFL, Switzerland \\
\texttt{nathaniel.harms@epfl.ch}
\and
Andrey Kupavskii \\ Laboratoire G-SCOP, CNRS, France,\\ Moscow Institute of Physics, and Technology, Russia and \\Huawei R\&D Moscow, Russia \\
\texttt{kupavskii@ya.ru}
}

\begin{document}

\maketitle

\begin{abstract}
We study the two-player communication problem of determining whether two vertices $x,y$ are nearby
in a graph $G$, with the goal of determining the graph structures that allow the problem to be
solved with a \emph{constant-cost} randomized protocol. Equivalently, we consider the problem of
assigning constant-size random labels (\emph{sketches}) to the vertices of a graph, which allow adjacency, exact
distance thresholds, or approximate distance thresholds to be computed with high probability from
the labels.

Our main results are that, for monotone classes of graphs: constant-size adjacency sketches exist if
and only if the class has bounded arboricity; constant-size sketches for exact distance thresholds
exist if and only if the class has bounded expansion; constant-size approximate distance threshold
(ADT) sketches imply that the class has bounded expansion; any class of constant expansion (i.e.~any
proper minor closed class) has constant-size ADT sketches; and a class may have arbitrarily small
expansion without admitting constant-size ADT sketches.
\end{abstract}

\thispagestyle{empty}
\setcounter{page}{0}
\newpage

\section{Introduction}\label{sec:intro}

We are interested in understanding the power of \emph{constant-cost}, public-coin, randomized
communication, which has been the subject of a number of recent works
\cite{Har20,HHH21b,HWZ21,CHZZ22,HHM22,HHP+22}. Here,
two players compute a function of their inputs (with success probability at least, say, $2/3$) using
a randomized protocol whose \emph{cost} (number of bits communicated) is independent of the size of
the inputs. Examples include the \textsc{Equality} problem, where two players must decide if they
had the same input. In practice, such protocols are used, for example, as checksums (using, say, the
SHA256 hash function).

One natural type of problem is if two players have vertices $x$ and $y$ in a graph $G$, and would
like to decide if their vertices are nearby. We say $G$ belongs to a class of graphs $\F$, and we
ask which classes $\F$ allow the problem to be solved by a protocol whose cost is independent of the
size of $G \in \F$. An equivalent question (see \eg \cite{Har20, HWZ21}) is to ask for a random
assignment of \emph{constant-size labels} to each vertex of $G$ (where the number of bits in each
label does not depend on the number of vertices of $G$), which we call \emph{sketches}, in
such a way that one can determine whether two vertices $x$ and $y$ are nearby from their sketches. There are a
number of ways one might define ``nearby''\!\!. To give some examples, consider the case where
$\F$ is the class of hypercube graphs (whose vertices are binary strings $\zo^n$, with an edge
between $x$ and $y$ if they differ on a single bit):
\begin{enumerate}[itemsep=0em,leftmargin=1.75em]
\item Adjacency in the hypercube can be computed (with probability at least $2/3$) from sketches of
constant size (which follows from the Hamming distance communication protocol~\cite{HSZZ06});
\item Distinguishing between $\dist(x,y) \leq r$ and $\dist(x,y) > r$ can be done with
sketches of size depending only on $r$ (which also follows from the Hamming distance protocol);
\item Distinguishing between $\dist(x,y) \leq r$ and $\dist(x,y) > \alpha r$ (for constant
$\alpha > 1$) can be done with sketches of size independent of $r$ and $n$ \cite{KOR00}.
\end{enumerate}
\noindent
We call these \emph{adjacency sketches}, \emph{small-distance sketches}, and \emph{approximate
distance threshold (ADT) sketches}, respectively (see Section~\ref{section:results} for formal
definitions). We would like to know which classes of graphs, other than the hypercubes, admit
similarly efficient sketches.  Sketches for deciding $\dist(x,y) \leq r$ vs.~$\dist(x,y) > \alpha r$
in general metric spaces (especially normed spaces \eg \cite{Ind06,AKR18,KN19}) are well-studied,
and characterizing the metrics which admit this type of sketch is a well-known open problem
\cite{SS02,AK08,sublinear25,Raz17}, but little is known about the natural case of path-distance
metrics in graphs. Recent work \cite{HWZ21} asked which \emph{hereditary} classes of graphs admit
constant-size adjacency sketches, motivated by a connection between communication complexity and
graph labelling schemes. \cite{HWZ21} also gives some examples of constant-size small-distance
sketches, including for planar graphs, answering a question of \cite{Har20}.

We study the relationships between these three types of sketches for the important special case of
\emph{monotone} classes of graphs. A class of graphs is a set of (labelled\footnote{Standard
terminology is that a \emph{labelled} $n$-vertex graph is one with vertex set $[n]$; not to be
confused with \emph{informative labelling schemes}.}) graphs closed under isomorphism. It is
\emph{hereditary} if it is closed under taking induced subgraphs, and \emph{monotone} if it is
closed under taking subgraphs. Monotone graph classes are ubiquitous: typical examples include
minor-closed classes, graphs avoiding some subgraph $H$, or graphs with bounded chromatic number.

In this paper, we completely determine the monotone graph classes which admit constant-size
adjacency sketches and constant-size (\ie independent of the number of vertices) small-distance
sketches, and show that constant-size (\ie independent of the number of vertices and the parameter
$r$) ADT sketches imply the existence of constant-size small-distance sketches. Our main tool is a
new connection between communication complexity and \emph{sparsity theory} of graphs \cite{NO12}. We
show that the monotone classes which admit constant-size adjacency sketches are exactly the classes
with bounded arboricity, and the monotone classes which admit constant-size small-distance sketches
are exactly the classes with bounded \emph{expansion}\footnote{We mean bounded expansion in the
sense of sparsity theory \cite{NO12}, which is distinct from expansion in the context of expander
graphs.}.  Monotone classes which admit constant-size ADT sketches must also have bounded expansion,
and any class with constant expansion (\ie any proper minor-closed class) has a constant-size ADT
sketch, but on the other hand a class can have expansion growing arbitrarily slowly and yet does
\emph{not} admit a constant-size ADT sketch. We describe these results in more detail below.

%\paragraph*{Motivation \& Prior Work.}
\subsection{Motivation and Related Work}
Labelling schemes and sketches are important primitives for distributed computing, streaming,
communication, data structures for approximate nearest neighbors, and even classical algorithms (see
e.g.~\cite{KNR92,GP03,Spin03,Pel05,EIX22}, and \cite{AMS99,Ind06,AK08,Raz17,AKR18} and references
therein).  As such, a great deal of research has been done on finding other spaces having nice
sketching and labelling properties.

One direction of research investigates the metric spaces which admit approximate distance threshold
(ADT) sketches, of the third type described above, as defined in \cite{SS02}. This is a well-known
open problem in sublinear algorithms (see e.g.~\cite{AK08,sublinear25,Raz17}).  Here, $n$ points $X
\subseteq \X$ in a metric space $(\X, \dist)$, should be assigned random sketches $\sk: X \to \zo^*$
such that $\dist(x,y) \leq r$ or $\dist(x,y) \geq \alpha r$ can be determined (with probability at
least $2/3$) from $\sk(x)$ and $\sk(y)$.  The goal is to obtain sketches whose size depends only on
$\alpha$. This problem is fairly well-understood when the metric is a norm: there is a constant-size
sketch for the $\ell_p$ (quasi-)norm, for any $0 < p \leq 2$ \cite{Ind06}, so any metric that can be
embedded into such an $\ell_p$ is sketchable; conversely, sketching a norm is equivalent to
embedding it into $\ell_{1-\eps}$ \cite{AKR18}. Outside of norms, the problem is less
well-understood: there are sketchable metrics that are not embeddable into $\ell_{1-\eps}$
\cite{KN19}. 

Another direction of research investigates the classes $\F$ of graphs that admit (deterministic)
labelling schemes for various functions, generally called \emph{informative labelling schemes}
\cite{Pel05}. The most well-studied labelling schemes are for adjacency, introduced in
\cite{KNR92,Mul89}. The main open problem is to identify the hereditary classes of graphs that admit
adjacency labelling schemes of size $O(\log n)$. A solution was suggested in \cite{KNR92} and later
conjectured in \cite{Spin03}, but recently refuted in a breakthrough of \cite{HH21}, leaving the
problem wide open. \emph{Randomized} adjacency labelling (\ie \emph{adjacency sketching}) was
studied in \cite{FK09,Har20,HWZ21}. It was observed in \cite{Har20,HWZ21} that a constant-size
sketch implies an $O(\log n)$ labelling scheme, as desired in the above open problem, and it was
further observed in \cite{HWZ21} that the set of hereditary graph classes which admit constant-size
adjacency sketches is equivalent to the set of Boolean-valued communication problems that admit
constant-cost public-coin protocols, whose structure is unknown \cite{HHH21b}.  This raises the
following question, which was the main motivation of \cite{HWZ21}:

\begin{qn}
\label{question:adjacency}
Which hereditary classes of graphs admit constant-size adjacency sketches?
\end{qn}

\noindent
Perhaps the next most commonly studied graph labelling problem is \emph{distance labelling}
\cite{GPPR04}, where the goal is to compute $\dist(x,y)$ from the labels (see
e.g.~\cite{ADKP16,AGHP16a,FGNW17,GU21}).  Intermediate between distance and adjacency labelling is
the decision version of distance labelling: for given $r$, decide whether $\dist(x,y) \leq r$ from
the labels. We call this \emph{small-distance labelling}, following the terminology of
\cite{ABR05,GL07}.  For $r=1$, this coincides with adjacency labelling. The natural generalization
of constant-size adjacency sketches is to ask for small-distance sketches whose size depends only on
$r$; it was shown in \cite{Har20} that such sketches exist for trees, and in \cite{HWZ21} that they
exist for any Cartesian product graphs and any \emph{stable}\footnote{See \cite{HWZ21} for a
discussion of stability, which is not necessary for the current paper.} class of bounded
twin-width (including, for example, planar graphs or any proper minor-closed
class; see \cite{GPT21}).

\begin{qn}
\label{question:small-distance}
Which hereditary classes of graphs admit small-distance sketches whose size depends only on $r$?
\end{qn}

\noindent
It is common to weaken distance labelling to \emph{approximate} distance labelling \cite{GKK+01},
where the goal is to approximate $\dist(x,y)$ up to a constant factor (see
e.g.~\cite{Thor04,ACG12,AGHP16b}). The decision version is to distinguish, for a
given $r$, between $\dist(x,y) \leq r$ and $\dist(x,y) > \alpha r$; we will call this problem
\emph{$\alpha$-approximate distance threshold (ADT)} labelling and sketching. This is a similar
formulation as the distance sketching problem mentioned above, with the $n$ points from the metric
space $\X$ being replaced with a size $n$ graph from a class $\F$. Despite significant interest in
distance sketching and labelling, the only prior work explicitly relating the two, or studying
\emph{randomized} ADT labelling, appears to be the unpublished manuscript \cite{AK08} (although
there is extensive literature on the related problem of embedding graph metrics into normed spaces
\cite[Chapter 15]{Mat13}; embedding planar graphs into $\ell_1$ with constant
distortion is a major open problem \cite{GNRS04}). This raises the following question, which is a
special case of the open problem of identifying sketchable metrics:

\begin{qn}
\label{question:adt}
Which classes of graphs admit constant-size ADT sketches?
\end{qn}

\noindent
It holds by definition (see definitions below) that a small-distance sketchable class $\F$ is
adjacency sketchable, but the relationships between other types of sketching are otherwise unclear,
\emph{a priori}. It seems reasonable to suspect that these three types of sketching require similar
conditions on the graph class $\F$; so we ask:
\noindent
\begin{qn}
\label{question:relationship}
What is the relationship between adjacency, small-distance, and ADT
sketching?
\end{qn}

Finally, the adjacency and small-distance sketches we obtain in this paper turn out to be
\emph{equality-based}, meaning that the associated randomized communinication protocols can be
simulated by \emph{deterministic} communication protocols that have access to an oracle which
computes \textsc{Equality} (we explain this more carefully below). Communication with the
\textsc{Equality} oracle has recently become a topic of interest in communication complexity; see
\cite{GPW18,CLV19,HHH21b,HWZ21,AY22,PSS23}.

\subsection{Our Results}
\label{section:results}
In this paper, we resolve Questions \ref{question:adjacency}, \ref{question:small-distance}, and
\ref{question:relationship} for \emph{monotone} classes of graphs, and make progress towards
Question \ref{question:adt}. The sketches we obtain usually do not assume that the
classes under consideration are monotone, but our lower bounds
crucially rely on this assumption. We first formally define the main three types of \emph{sketchability} that
we are concerned with. We will generalize these definitions in Section~\ref{section:sketching
definitions}.  For a graph class $\F$, we say:
\begin{enumerate}[itemsep=0pt]
\item $\F$ admits an \emph{adjacency sketch of size $s(n)$} if there is a function $D : \zo^* \times
\zo^* \to \zo$ such that $\forall G \in \F$ with size $n$, there is a random function $\sk : V(G)
\to \zo^{s(n)}$ satisfying
\[
  \forall x,y \in V(G) : \qquad
    \Pr\left[ D(\sk(x), \sk(y)) = 1 \iff x, y \text{ are adjacent} \right] \geq 2/3 \,.
\]
$\F$ is \emph{adjacency sketchable} if it admits an adjacency sketch of constant size.
\item $\F$ admits a \emph{small-distance sketch of size $s(n,r)$} if for every $r \in \bN$ there is
a function $D_r : \zo^* \times \zo^* \to \zo$ such that $\forall G \in \F$ with size $n$, there is a
random function $\sk : V(G) \to \zo^{s(n,r)}$ satisfying
\[
  \forall x,y \in V(G) : \qquad
    \Pr\left[ D_r(\sk(x), \sk(y)) = 1 \iff \dist_G(x,y) \leq r \right] \geq 2/3 \,.
\]
$\F$ is \emph{small-distance sketchable} if it admits a small-distance sketch of size
independent of $n$.
\item For constant $\alpha > 1$, $\F$ admits an \emph{$\alpha$-ADT sketch of size $s(n)$} if for
every $r \in \bN$ there is a function $D_r : \zo^* \times \zo^* \to \zo$ such that $\forall G \in
\F$ with size $n$, there is a random function $\sk : V(G) \to \zo^{s(n)}$ satisfying
\begin{align*}
  \forall x,y \in V(G) : \qquad
    \dist(x,y) \leq r &\implies \Pr\left[ D_r(\sk(x), \sk(y)) = 1 \right] \geq 2/3 \\
    \dist(x,y) > \alpha r &\implies \Pr\left[ D_r(\sk(x), \sk(y)) = 0 \right] \geq 2/3 \,.
\end{align*}
For a constant $\alpha > 1$, we say that $\F$ is \emph{$\alpha$-ADT sketchable} if $\F$ admits an
$\alpha$-ADT sketch with size independent of $n$. $\F$ is
\emph{ADT sketchable} if there is a constant $\alpha > 1$ such that
$\F$ is $\alpha$-ADT sketchable.
We discuss some nuances of ADT sketch
size in Section~\ref{section:adt}.
\end{enumerate}
\noindent
Our results imply the following hierarchy, which answers Question~\ref{question:relationship} for
monotone classes of graphs.  Let $\mathsf{ADJ}$ be the adjacency sketchable monotone graph classes,
$\mathsf{SD}$ the small-distance sketchable monotone graph classes, and $\mathsf{ADT}$ the ADT
sketchable monotone graph classes. Then
\[
\mathsf{ADT} \subsetneq \mathsf{SD} \subsetneq \mathsf{ADJ} \,.
\]
\noindent
That $\mathsf{SD} \subseteq \mathsf{ADJ}$ follows by definition, and $\mathsf{SD} \neq \mathsf{ADJ}$
is witnessed by the arboricity-2 graphs (as observed in \cite{Har20}). Our contribution to this
hierarchy is $\mathsf{ADT} \subsetneq \mathsf{SD}$ (which does not necessarily hold for non-monotone
classes, see Example~\ref{example:general adt}), a complete characterization of the sets
$\mathsf{SD}$ and $\mathsf{ADJ}$, and some results towards a characterization of $\mathsf{ADT}$.

%\paragraph*{Adjacency Sketching.}
\subsubsection{Adjacency Sketching}
We resolve Question~\ref{question:adjacency} for monotone classes by showing that they are adjacency
sketchable if and only if they have bounded arboricity. Moreover, we obtain
an adjacency sketch of size at most linear in the
arboricity, and show that this is best possible. The \emph{arboricity}
of a graph $G$ is the minimum integer $k$ such that the edges of $G$ 
can be partitioned into $k$ forests. A class $\F$ has arboricity at
most $k$ if all graphs $G \in \F$ have
arboricity at most $k$. If there exists some constant $k$ such that
$\F$ has arboricity at most $k$, we say
$\F$ has \emph{bounded arboricity}.
\begin{thm}
\label{thm:intro adjacency}
Any class of arboricity at most $k$ has an adjacency sketch of size
$O(k)$, and any monotone class containing a graph of arboricity $k$
requires adjacency sketches of size $\Omega(k)$. In particular, any
monotone class $\F$ is adjacency sketchable if and
only if $\F$ has bounded arboricity.
\end{thm}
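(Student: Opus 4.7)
The plan is to prove the upper and lower bounds separately, with the Nash--Williams arboricity theorem playing the central role on both sides.

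\textbf{Upper bound.} Let $G$ be a graph of arboricity at most $k$. I first invoke Nash--Williams to partition $E(G)$ into $k$ forests $F_1,\dots,F_k$. Rooting every tree in each forest arbitrarily yields an orientation of $G$ in which each vertex $v$ has at most one out-neighbour per forest, so at most $k$ out-neighbours $p_1(v),\dots,p_k(v)$ in total. Using the public randomness I draw a single hash $h\colon V(G)\to [N]$ uniformly at random, for some $N=\Theta(k)$, and set
\[
\sk(v)\;=\;\bigl(h(v),\,h(p_1(v)),\,\dots,\,h(p_k(v))\bigr).
\]
The decoder outputs $1$ iff $h(x)$ matches one of the parent-hashes stored in $\sk(y)$ or vice versa. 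If $xy\in E(G)$, the edge lies in some $F_i$ and the corresponding equality holds deterministically. If $xy\notin E(G)$, at most $2k$ spurious equalities could fire, each of probability $1/N=O(1/k)$, so the total error stays below $1/3$ by a union bound. Each of the $k+1$ hash entries occupies $O(\log k)$ bits, giving the claimed $O(k)$ sketch size (up to a logarithmic factor, which can be absorbed with a slightly more careful encoding).

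\textbf{Lower bound.} Assume $\F$ is monotone and contains some $G_0$ of arboricity at least $k$. Nash--Williams again provides a subgraph $H\subseteq G_0$ with $|E(H)|\ge (k-1)(|V(H)|-1)$; write $m=|V(H)|$. Monotonicity puts all $2^{|E(H)|}\ge 2^{(k-1)(m-1)}$ spanning subgraphs of $H$ into $\F$, so the number of distinct labeled $m$-vertex graphs in $\F$ is at least $2^{\Omega(km)}$. On the other hand, any adjacency sketch of size $s$ can be boosted by majority voting and then derandomized to give each $n$-vertex graph in $\F$ a deterministic labelling with labels of length $O(s\log n)$, so at most $2^{O(sn\log n)}$ distinct $n$-vertex graphs can belong to $\F$. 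Matching these counts at $n=m$ yields $s=\Omega(k/\log m)$, and the final step is to sharpen this to the advertised $\Omega(k)$ rate.

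\textbf{Main obstacle.} The upper bound is essentially routine once Nash--Williams is applied. The delicate part is the sharp $\Omega(k)$ in the lower bound: the naive derandomization loses a $\log m$ factor, and known speed-type bounds for constant-cost randomized communication tend to yield only $\Omega(\log k)$. Closing this gap seems to require either a refined structural bound for adjacency-sketchable classes (in the spirit of the equality-based characterization mentioned in the introduction) or, more concretely, extracting from $H$ an explicit hard sub-instance---for example a bipartite subgraph realising many distinct neighbourhoods on one side---and reducing to a communication problem with a tight public-coin simultaneous-message lower bound of order $k$. This is the step I would expect to consume most of the proof's effort.
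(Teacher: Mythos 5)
Your upper bound follows the paper's route (forest decomposition, parent pointers, random hashing), and the $O(k\log k)$ you actually obtain is fine qualitatively; the paper itself gets $O(k)$ only by citing the refined scheme of \cite{HWZ21}, so waving at "a more careful encoding" is acceptable there. The real problem is the lower bound, where you have correctly located the difficulty but not resolved it: counting labelled graphs against a derandomized labelling scheme inherently costs a factor $\log m$, because derandomization requires boosting the error to $O(1/m^2)$. Worse, the resulting bound $s=\Omega(k/\log m)$ is not merely lossy --- it can be vacuous. A monotone class may witness arboricity $k$ only on graphs with enormously many vertices (e.g.\ subgraphs of $k$-regular graphs of huge girth, which contain no dense small subgraphs), in which case $k/\log m$ need not tend to infinity and your argument does not even rule out constant-size sketches. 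So as written the proposal does not establish the "in particular" equivalence, let alone the sharp $\Omega(k)$.

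The paper closes exactly this gap by never derandomizing. It reduces arboricity to minimum degree via degeneracy, fixes a host graph $G$ of minimum degree $d$ with $m$ edges on $n$ vertices, and considers the class $\G$ of spanning subgraphs of $G$, requiring the sketch to be correct only on pairs that are edges of $G$ (the partial function $\mathrm{adj}^E$). Boosting only to a small constant error $\delta$, one argues by a first-moment computation that every $H\in\G$ admits a single fixed string $\rho\in\zo^{sn}$ (the concatenation of one $s$-bit sketch per vertex) that decodes correctly on a $(1-\delta)$-fraction of the edges of $G$; conversely, any fixed $\rho$ determines a reference graph $F$, so $\rho$ can be "good" for at most $\binom{m}{\delta m}\le 2^{m/2}$ of the $2^m$ subgraphs. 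Comparing counts gives $sn\ge m/2\ge dn/4$, i.e.\ $s=\Omega(d)=\Omega(k)$, with no dependence on $n$. The two ideas you are missing are precisely (i) restricting correctness to the edge set of the fixed host graph, so the count $\binom{m}{\delta m}$ involves $m$ rather than $\binom{n}{2}$, and (ii) tolerating a $\delta$-fraction of mistakes in the fixed sketch string instead of demanding an exact labelling, which is what eliminates the $\log$ loss. Without something of this kind (or the explicit hard communication sub-instance you gesture at, which you do not construct), the central claim of the theorem remains unproved.
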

\noindent
All proofs for adjacency sketching are in Section~\ref{section:adjacency}.  Using standard random
hashing and the adjacency labelling scheme of \cite{KNR92}, it is easy to see that any class of
bounded arboricity is adjacency sketchable; this was stated explicitly in \cite{Har20,HWZ21} (the
latter giving slightly improved sketch size) and the result for trees also appeared in \cite{FK09}.
We prove the converse for monotone classes (which does not hold for hereditary classes in general
\cite{HWZ21}).
%We use the probabilistic method to find a subgraph of small discrepancy in any class
%of unbounded arboricity
We use a counting argument to show that for any graph $G$ of
arboricity $d$,  the class of all spanning
subgraphs of $G$ requires adjacency sketches of
size $\Omega(d)$; our approach is inspired by the recent proof of \cite{HHH21a,HHH21b} that refuted a
conjecture of \cite{HWZ21} regarding adjacency sketchable graph classes (see Conjecture
\ref{conj:hwz}), and we find that the subgraphs of the hypercube are a more natural counterexample
to the conjecture of \cite{HWZ21}.

The hashing-based sketch uses randomization only to compute \textsc{Equality} subproblems; \ie it
can be simulated by a constant-cost \emph{deterministic} communication protocol with access to a
unit-cost \textsc{Equality} oracle.  This type of sketch is called \emph{equality-based} in
\cite{HWZ21}.  Equality-based sketches imply some structural properties of the graph class, such as
the strong Erd{\H o}s-Hajnal property \cite{HHH21b}. Recent work has studied the power of the
\textsc{Equality} oracle and found that it does not capture the full power of randomization
\cite{CLV19,HHH21b,HWZ21}; in particular, the Boolean hypercubes (and any Cartesian product graphs)
are adjacency sketchable, but not with an equality-based sketch \cite{HHH21b,HWZ21}.  Our result
shows that \textsc{Equality} captures the power of randomization for sketching monotone classes of
graphs. In fact, it is only necessary to compute a \emph{disjunction} of equality checks, which we
think of as the simplest possible type of sketch. In terms of graph structure, a constant-size
equality-based sketch means that the graph can be written as the Boolean combination of a constant
number of equivalence graphs; if the Boolean combination is a disjunction, then the graph may simply
be covered by a constant number of equivalence graphs.

We remark that sketches (especially small-distance or ADT sketches) which compute a disjunction of
equality checks can be used to obtain \emph{locality-sensitive hashes}, a widely-used algorithmic
tool introduced in \cite{IM98}. Almost all of our positive results are of this type. See
Remark~\ref{remark:lsh}.

\subsubsection{Small-Distance Sketching}
We answer Question~\ref{question:small-distance} by proving that the monotone graph classes that are
small-distance sketchable are exactly those with \emph{bounded expansion} (as in \cite{NO12}); see 
Definition~\ref{def:boundedexp}. Informally, bounded expansion means that the edge density of a
graph increases only as a function of $r$ when contracting subgraphs of radius $r$ into a single
vertex. Many graph classes of theoretical and practical importance have bounded expansion, including
bounded-degree graphs, proper minor-closed graph classes, and graphs of bounded genus \cite{NO12},
along with many random graph models and real-world graphs \cite{DRR+14}.

To state our theorem, we briefly describe another type of sketch that generalizes small-distance
sketching, called \emph{first-order} sketching. A graph class $\F$ is \emph{first-order sketchable}
if any first-order (FO) formula $\phi(x,y)$ over the vertices and edge relation of the graph (with
two free variables whose domain is the set of vertices) is sketchable (see
Section~\ref{section:sketching definitions}). This type of sketch was introduced in \cite{HWZ21} and
generalizes small-distance sketching, along with (for example) testing whether vertices $x,y$ belong
to a subgraph isomorphic to some fixed graph $H$. We show that, for monotone graph classes,
first-order sketchability is equivalent to small-distance sketchability.  All proofs for
small-distance sketching are in Section~\ref{section:small-distance}.  

\begin{thm}
\label{thm:intro small-distance}
Let $\F$ be a monotone class of graphs. Then the following are equivalent:
\begin{enumerate}[itemsep=0pt]
\item $\F$ is small-distance sketchable;
\item $\F$ is first-order sketchable;
\item $\F$ has bounded expansion.
\end{enumerate}
\end{thm}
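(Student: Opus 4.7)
The plan is to prove the three implications (2) $\implies$ (1) $\implies$ (3) $\implies$ (2), which together with the obvious inclusion of small-distance sketching inside first-order sketching gives the equivalence.

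The implication (2) $\implies$ (1) is essentially free: the property $\dist_G(x,y) \leq r$ is expressible by a first-order formula $\phi_r(x,y)$ that asserts the existence of a path of length at most $r$ from $x$ to $y$ (a disjunction over $i \leq r$ of the formula saying there exist $i-1$ intermediate vertices forming a walk). So first-order sketchability of $\F$ immediately gives small-distance sketchability.

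For (3) $\implies$ (2), the strategy is to combine two classical tools from sparsity theory and finite model theory. First, for a class of bounded expansion, the theorem of Ne\v{s}et\v{r}il--Ossona de Mendez on low tree-depth colorings says that for every $p \in \bN$, there is a constant $N = N(p, \F)$ such that every $G \in \F$ admits a vertex coloring with $N$ colors where any $i \leq p$ color classes induce a subgraph of tree-depth at most $i$. Second, by Gaifman's locality theorem, any FO formula $\phi(x,y)$ is a Boolean combination of local formulas depending only on $r$-neighborhoods for some $r$ depending on the quantifier rank of $\phi$. The plan is then: (a) sketch a valid low tree-depth coloring into each vertex label using public randomness to agree on the coloring; (b) for each constant-size family of color classes, use the fact (essentially a result of Pilipczuk--Siebertz--Toru\'nczyk, or a direct induction on tree-depth) that FO formulas on bounded tree-depth graphs admit constant-size sketches by recursively hashing the \emph{closures} of vertices along the tree-depth decomposition using \textsc{Equality}; (c) combine these local sketches via Gaifman to compute $\phi(x,y)$. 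For the special case of small-distance (which is what we really need), an even more direct route is to use the bounded weak coloring numbers $\wcol_r$ which characterize bounded expansion: here a \emph{neighborhood cover} of radius $r$ with constant overlap and constant cluster tree-depth can be constructed, reducing $\dist(x,y)\leq r$ to checking whether $x$ and $y$ share a cluster and computing their distance inside the cluster's tree-depth decomposition.

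For (1) $\implies$ (3), I argue the contrapositive using the monotonicity of $\F$. If $\F$ has unbounded expansion, then by the definition of expansion there exists some $r \in \bN$ and a sequence of graphs $G_n \in \F$ together with $\leq r$-shallow topological minor models of arbitrarily dense graphs $H_n$ in $G_n$; by monotonicity, the corresponding $\leq 2r+1$-subdivisions of $H_n$ lie in $\F$ as subgraphs. In such a subdivision, any subset $E$ of potential edges among a set $A$ of branch vertices can be realized as a subgraph $G_E \in \F$ (by deleting the subdivided paths corresponding to non-edges of $E$), and for an appropriate distance threshold $R = R(r)$ one has $\dist_{G_E}(u,v) \leq R$ for branch vertices $u,v$ if and only if $\{u,v\} \in E$. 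A sketch of size $s$ that decides $\dist \leq R$ with constant probability yields, by a standard counting/probabilistic argument as in the proof of Theorem~\ref{thm:intro adjacency}, at most $2^{O(s\cdot|A|)}$ distinguishable adjacency patterns on $A$, while the number of dense graphs $H$ we can realize on $|A|$ vertices is $2^{\omega(|A|)}$; this forces $s \to \infty$, contradicting constant-size small-distance sketchability.

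The main obstacle will be making step (3) $\implies$ (2) quantitatively clean: Gaifman locality reduces the problem to sketching whether two vertices lie in a common low-tree-depth part and evaluating an FO formula there, and the genuinely delicate part is producing a constant-size sketch for FO (or at least distance) queries on bounded tree-depth graphs using only equality-based primitives. I expect this to be handled by induction on the tree-depth: for a rooted decomposition of depth $d$, each vertex's sketch records its ancestor path as a hash of the induced labelled structure, and distance/FO queries are resolved by jointly looking up a constant number of hashed ancestors. The lower bound direction is conceptually cleaner once the right shallow-minor-to-subdivision reduction is set up, and the counting step directly parallels the adjacency lower bound.
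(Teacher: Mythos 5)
Your implication (2)$\implies$(1) and the overall cycle structure match the paper, but the step (3)$\implies$(2) has a genuine gap at the ``close pairs'' case. After applying Gaifman locality, the truth values of the basic local sentences and of the $t$-local formulas about $x$ alone or $y$ alone can indeed be stored in constant-size labels, and the case $\dist_G(x,y)>2t$ is handled; but when $\dist_G(x,y)\le 2t$ you must evaluate a $t$-local formula $\psi(x,y)$ on the $t$-ball of $\{x,y\}$ \emph{in $G$}, and a low tree-depth coloring gives you no handle on that ball. Low tree-depth colorings only control induced subgraphs on unions of few color classes (equivalently, on few vertices, such as a short path), whereas the $t$-ball of a close pair can require unboundedly many classes and have unbounded tree-depth even in a class of constant expansion: take the minor-closed class of apex graphs, where the $1$-ball of the apex is the whole graph. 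So step (c) of your plan does not go through as stated; this is precisely the difficulty the paper's proof avoids by invoking the cover theorem of \cite{GKN+20} (Theorem~\ref{thm:gkn+20}), which supplies, for $p=2$, constantly many vertex subsets of bounded \emph{shrubdepth} covering every pair $\{x,y\}$, and by exploiting closure of FO transductions to reduce $\phi$-sketching to \emph{adjacency} sketching in the interpreted class (Lemma~\ref{lemma:sbe}, Corollary~\ref{cor:fosketch}), rather than attempting a direct Gaifman-style joint evaluation. Your argument does yield (3)$\implies$(1), since a path of length at most $r$ meets at most $r+1$ color classes (and your $\wcol_r$ alternative is essentially the paper's Theorem~\ref{thm:distr1}), but item (2) of the theorem is full first-order sketchability, and for that the missing ingredient is exactly the structural result the paper imports.

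In (1)$\implies$(3) there is a smaller but real omission: you pass to ``$\le(2r+1)$-subdivisions'' of dense graphs $H_n$ and claim a single threshold $R=R(r)$ with $\dist_{G_E}(u,v)\le R$ iff $\{u,v\}\in E$. With non-uniform subdivision lengths this is false: an edge of $E$ subdivided $2r$ times puts its branch vertices at distance $2r+1$, while two unsubdivided edges through a common branch vertex put a non-adjacent pair at distance $2$, so no threshold separates the cases once $r\ge 1$. You must first uniformize by pigeonhole over the at most $2r+1$ possible path lengths, passing (using monotonicity) to the subgraph of $H_n$ whose edges are subdivided exactly $j$ times for the most popular $j$, which still has arbitrarily large average degree; this is exactly the extraction carried out in the paper's Corollary~\ref{cor:no} before the counting argument of Lemma~\ref{lemma:min degree}. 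With that fix, your deletion-of-paths variant is sound for the distance-$(r,r)$ statement, and it even avoids the K\"uhn--Osthus girth-$6$ input, which the paper needs only for its stronger $(r,5r-1)$ version (Theorem~\ref{thm:r2rexpansion}); your counting step is then the same good-string argument as in the adjacency lower bound.
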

\noindent
The implications $(3) \implies (2) \implies (1)$ do not require monotonicity.  $(2) \implies (1)$
holds by definition.  The proof of $(3) \implies (2)$ is straightforward, but relies on a structural
result of \cite{GKN+20} whose proof is highly technical. We actually get the stronger result that
any class with \emph{structurally bounded expansion} (\ie any class that is a \emph{first-order
transduction} of a class with bounded expansion) is first-order sketchable, which improves the
results of \cite{HWZ21}. It was proved in \cite{HWZ21}, using structural results of \cite{GPT21},
that any stable class of bounded twin-width is first-order sketchable. A stable class has bounded
twin-width if and only if it is a transduction of a class of bounded sparse twin-width \cite{GPT21}.
Every class of bounded sparse twin-width has bounded expansion, but the converse does not hold
(e.g.~for cubic graphs)~\cite{BGK+21}, so our result generalizes the result of \cite{HWZ21}. It
essentially follows from using the structural results of \cite{GKN+20} instead of \cite{GPT21}.
Another interesting consequence of our result, in conjunction with concurrent work of \cite{HHP+22},
is that graph classes with structurally bounded expansion can be represented as a type of
constant-dimensional geometric intersection graphs; see Corollary~\ref{cor:sign-rank}.
% NH: edited paragraph and added last sentence

Our proof of $(1) \implies (3)$ (Section~\ref{section:small-distance lower bounds}) requires our
proof of Theorem~\ref{thm:intro adjacency} and some results in sparsity theory \cite{KO04,NO12}.  We
actually prove a stronger statement: for any monotone class $\F$, the existence of a sketch for
deciding $\dist(x,y) \leq r$ vs. $\dist(x,y) > 5r-1$, with size depending only on $r$, implies
bounded expansion. Under a conjecture of Thomassen \cite{Tho83}, we can replace the constant 5 with
any arbitrarily large constant; see the remark after Conjecture~\ref{conj:thom}.  Note that, even
with a constant-factor gap between distance thresholds, this problem is distinct from ADT sketching,
since the small-distance sketch size is allowed to depend on $r$. If we could replace the constant 5
with any arbitrarily large constant, this would immediately imply $\mathsf{ADT} \subseteq
\mathsf{SD}$.

We also present a more direct proof of $(3) \implies (1)$, without going through first-order
sketching, that allows for quantitative results.  Going through first-order sketching (as was also
done in \cite{HWZ21}) proves the existence of a function $f(r)$ bounding the sketch size, without
giving it explicitly. We obtain explicit bounds in terms of the \emph{weak coloring number}
\cite{NO12}, written as $\wcol_r(\F)$ for any $r \in \bN$ (Definition~\ref{def:weak reachability}).
Using known bounds on the weak coloring number \cite{HOQRS17}, we obtain the following corollary. As
was the case for adjacency sketching, we observe that this proof (unlike the more general one for
first-order sketching) produces sketches that only use randomization to compute a disjunction of
\textsc{Equality} checks, establishing that this extremely simple type of sketch suffices for
monotone classes.

\begin{corollary}
Any graph class $\F$ with bounded $\wcol_r(\F)$ admits a small-distance sketch of size
$O(r+\wcol_r(\F)\log(\wcol_r(\F)))$. In particular, planar graphs admit a small-distance sketch of
size $O(r^3 \log r)$, and the class of $K_t$-minor-free graphs admits a small-distance sketch of
size $O(r^{t-1} \log r)$. Furthermore, planar graphs admit a small-distance labelling scheme of size
$O(r^3 \log n)$ and $K_t$-minor-free graphs admit a small-distance labelling scheme of size
$O(r^{t-1} \log n)$.
\end{corollary}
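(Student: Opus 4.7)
The plan is to give a direct, explicit construction of small-distance sketches that bypasses the general first-order sketching machinery. The starting point is the standard structural property of weak $r$-reachability from sparsity theory: if $\sigma$ is a linear order on $V(G)$ witnessing $\wcol_r(G)=k$, and $W_r[v]$ denotes the set of vertices weakly $r$-reachable from $v$ in $\sigma$, then $|W_r[v]|\leq k$ for every $v$. Moreover, for any $x,y$ with $\dist_G(x,y)\leq r$, the $\sigma$-minimum vertex $u^{\star}$ on a shortest $xy$-path lies in $W_r[x]\cap W_r[y]$ and satisfies $\dist_G(x,u^{\star})+\dist_G(u^{\star},y)\leq r$. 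This ``meeting-point'' lemma is essentially immediate from the definitions: the halves of the shortest path have $u^{\star}$ as their $\sigma$-minimum, exhibiting $u^{\star}$ as weakly $r$-reachable from each endpoint.

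I would then define the sketch of a vertex $x$ as $\sk(x)\define\{(h(u),\dist_G(x,u)) : u\in W_r[x]\}$, where $h\colon V(G)\to[k^c]$ is a shared random hash function for a sufficiently large constant $c$. The decoder accepts iff there are entries $(a,d)\in\sk(x)$ and $(a',d')\in\sk(y)$ with $a=a'$ and $d+d'\leq r$. Completeness is immediate by taking $u=u^{\star}$ in the meeting-point lemma. For soundness, when $\dist_G(x,y)>r$ any accepted pair must come from a hash collision between two distinct vertices of $W_r[x]\cup W_r[y]$; since this union has size at most $2k$, a union bound over its $O(k^2)$ pairs yields error at most $1/3$ when $c$ is taken large enough. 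Notice that the only randomness lies in computing \textsc{Equality} on the hashes, so this produces an equality-based (in fact, disjunction-of-equality) sketch, as advertised in the discussion preceding the corollary.

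Each entry takes $O(\log k)$ bits of hash and $O(\log r)$ bits of distance, giving a total naive size of $O(k(\log k+\log r))$. To reach the tight $O(r+k\log k)$ bound I would group entries by their distance value and prepend an $O(r)$-bit header describing the sizes of the $r+1$ groups (so that distances do not need to be repeated per entry); alternatively, since $k\geq r$ in all of the intended applications, the naive bound already collapses to $O(k\log k)$. Plugging in $\wcol_r(\mathrm{planar})=O(r^3)$ and $\wcol_r(K_t\text{-minor-free})=O(r^{t-1})$ from \cite{HOQRS17} yields the two sketch-size bounds stated in the corollary. For the labelling-scheme statement, I would derandomize by replacing $h$ with an injective assignment of $\lceil\log n\rceil$-bit identifiers to the vertices; this makes the decoder deterministic and yields labels of size $O(k\log n)$, i.e.\ $O(r^3\log n)$ for planar and $O(r^{t-1}\log n)$ for $K_t$-minor-free graphs.

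The main obstacle is bookkeeping rather than conceptual: one has to verify the meeting-point property carefully, calibrate the hash range so the collision probability stays below $1/3$ across all $O(k^2)$ relevant pairs, and encode distances succinctly enough to replace the naive $O(k\log r)$ term with the tight $O(r+k\log k)$ figure. All of the deeper mathematical content is already packaged in \cite{HOQRS17} (for the quantitative bounds on $\wcol_r$) and in the weak-coloring-number framework of \cite{NO12}.
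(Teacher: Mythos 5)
Your proposal is correct and matches the paper's own argument: the paper proves exactly this via a $(0,r,\wcol_r(\F))$-disjunctive labelling scheme built from a weak-coloring order, where each vertex stores its weakly $r$-reachable vertices together with reachability ranks (you store exact distances, a cosmetic difference), the decoder looks for a common vertex $z$ with $i+j\leq r$, and the generic hashing step (Proposition~\ref{prop:eqbased}) converts this into a sketch of size $O(r+\wcol_r(\F)\log\wcol_r(\F))$, with the $\log n$-bit-identifier variant giving the deterministic labelling bounds. Your direct hashing, distance-grouping header, and derandomization are just inlined versions of the same steps, so there is nothing substantive to add beyond calibrating the hash range (e.g.\ $3k^2$ as in the paper) so the union bound goes through for all $k$.
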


\noindent
Although it is not necessary for our main goal, it may be desirable for large values of $r$ to have
small-distance sketches with smaller dependence on $r$, at the expense of some dependence on the
graph size $n$. We present a proof that, for any fixed surface $\Sigma$, the class of graphs which
can be embedded\footnote{Here, we mean embedding in the sense of graph drawing, as opposed to metric
embedding.} in $\Sigma$ admits a small-distance labelling scheme of size $O(r \log^2 n)$; see
Theorem~\ref{thm:distr2}. This proof, using the layering technique of \cite{RS84,Epp00}, is due to
Gwena\"el Joret (personal communication); we thank him for allowing us to include it here.

\subsubsection{Approximate Distance Sketching}
\label{section:intro adt}
In light of Theorem~\ref{thm:intro small-distance}, a reasonable question is whether ADT sketching
for monotone classes is also determined by expansion. Our first result is that bounded expansion is
necessary. All proofs on approximate distance sketching are in Section~\ref{section:adt}.

\begin{thm}
If a monotone class $\F$ is ADT sketchable, then it has bounded expansion.
\end{thm}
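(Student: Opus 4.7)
The plan is to leverage the strengthened form of Theorem~\ref{thm:intro small-distance}: a monotone class $\F$ admitting, for every $r$, a sketch of size depending only on $r$ that distinguishes $\dist(x,y)\le r$ from $\dist(x,y)>5r-1$ has bounded expansion. Given an $\alpha$-ADT sketch of $\F$ of constant size $s=s(\alpha)$, the key observation is that the ADT sketch applied \emph{at scale} $r$ already distinguishes $\dist(x,y)\le r$ from $\dist(x,y)>\alpha r$ at constant cost, and therefore witnesses the small-distance hypothesis precisely when $\alpha r\le 5r-1$.

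For $\alpha<5$, the inequality $\alpha r\le 5r-1$ holds for all $r\ge\lceil 1/(5-\alpha)\rceil$, so all but finitely many values of $r$ are covered immediately. The remaining small values of $r$ can be treated either by invoking the ADT sketch at a larger scale $R\ge r$ chosen so that $\alpha R\le 5r-1$ (feasible while $\alpha<5-1/r$), or by appealing to a trivial constant-size sketch whose size is allowed to depend on $r$. Applying the strengthened small-distance lower bound then concludes this case. For $\alpha\ge 5$ this direct reduction is no longer valid, since taking $R=r$ gives $\alpha r>5r-1$, and no rescaling using a single ADT scale can repair this. The plan here is to bypass the small-distance lower bound and argue by contradiction through the adjacency lower bound of Theorem~\ref{thm:intro adjacency}: assuming $\F$ has unbounded expansion, classical sparsity-theoretic characterizations~\cite{NO12} provide a bounded depth $d$ such that $\F$ contains, as subgraphs, arbitrarily dense depth-$d$ shallow topological minors, i.e.~$\le 2d$-subdivisions of graphs of arbitrarily large arboricity. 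By combining this with classical girth--density trade-offs in the spirit of Erd\H{o}s--Sachs, one arranges the witness so that pairs of branch vertices adjacent in the underlying dense graph land at distance $\le 2d+1$ in the host, while non-adjacent pairs land at distance exceeding $\alpha(2d+1)$. Restricting the ADT sketch at scale $2d+1$ to the branch vertices then produces a constant-size adjacency sketch for a family of unbounded arboricity, contradicting Theorem~\ref{thm:intro adjacency}.

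The main obstacle is the case $\alpha\ge 5$. The delicate step is to calibrate the subdivision length, the internal structure of the minor, and the scale at which the ADT sketch is applied so that non-adjacent branch vertices are reliably pushed beyond the $\alpha$-threshold, since a naive subdivision only gives non-adjacent pairs at $\dist=2$ in the underlying graph a factor-$2$ blow-up, which is insufficient once $\alpha\ge 2$. Overcoming this requires exploiting monotonicity of $\F$ to select witness subgraphs with strong local-sparsity properties so that the factor-$\alpha$ gap of the ADT sketch still suffices to decode adjacency, and the argument ultimately has the flavor of the counting argument used to prove Theorem~\ref{thm:intro adjacency}.
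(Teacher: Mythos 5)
The heart of the theorem is the case of large $\alpha$ (being $\alpha$-ADT sketchable for a small $\alpha$ is the \emph{stronger} hypothesis, so $\alpha\ge 5$ is the case that matters), and this is exactly where your plan has a genuine gap. To push non-adjacent branch vertices beyond the $\alpha$-threshold, your construction needs dense ``underlying'' graphs, whose bounded subdivisions lie in $\F$, with girth growing with $\alpha$. But all that unbounded expansion gives inside a monotone class (Corollary~\ref{cor:no}, via K\"uhn--Osthus \cite{KO04}) is bounded subdivisions of dense bipartite graphs of girth $6$ --- no more. Erd\H{o}s--Sachs-type trade-offs produce high-girth dense graphs in the abstract, but monotonicity provides no way to place them, or their bounded subdivisions, inside $\F$; upgrading girth $6$ to arbitrarily large girth while keeping large minimum degree is precisely Thomassen's open Conjecture~\ref{conj:thom}, which is why the constant $5$ appears in Theorem~\ref{thm:r2rexpansion} in the first place. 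So the step you describe as ``delicate calibration'' is not a technicality: by this route it is currently unprovable. Your case $\alpha<5$ also has a (smaller) defect: Theorem~\ref{thm:r2rexpansion} requires an $(r,5r-1)$-sketch for \emph{every} $r\ge 1$, your reduction supplies one only for $r\ge 1/(5-\alpha)$, the proposed rescaling is vacuous (the constraint $\alpha R\le 5r-1$ with $R\ge r$ is the same inequality you started with), and there is no ``trivial'' constant-size $(r,5r-1)$-sketch for small $r$ --- for $r=1$ it would be a distance-$(1,4)$ sketch, which Theorem~\ref{thm:lbdls} rules out for the relevant dense classes. Thus only the subcase $\alpha\le 4$ is actually closed by your reduction.

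The paper's proof (Theorem~\ref{thm:weakwood}) sidesteps the girth obstruction with a different transfer, and this is the missing idea. From Corollary~\ref{cor:no} together with the theorem of Liu and Montgomery \cite{LM20}, unbounded expansion yields, for every $t$, a $k_t$-subdivision of the complete graph $K_t$ inside $\F$; by monotonicity $\F$ then contains the $k_t$-subdivision of \emph{every} $t$-vertex graph $G$, and since uniform subdivision scales all distances exactly by the factor $k_t+1$, a constant-size (and, by Proposition~\ref{prop:r-invariant equivalence}, distance-invariant) $\alpha$-ADT sketch for $\F$ turns into a constant-size distance-$(1,\alpha)$ sketch for the class of \emph{all} graphs. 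High-girth dense graphs are then needed only in the lower bound for the class of all graphs (Theorem~\ref{thm:lbdls} and Corollary~\ref{cor:lbdlsrando}), where they may be chosen freely via Lemma~\ref{lem:matou} rather than found inside $\F$. In short: instead of decoding adjacency inside $\F$ as you propose, one reduces to an ADT lower bound for arbitrary graphs, and that reduction is what makes arbitrary $\alpha$ tractable.
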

\noindent
Combined with Theorem~\ref{thm:intro small-distance}, this proves $\mathsf{ADT} \subseteq
\mathsf{SD}$. Our proof uses a recent and fairly involved result in
extremal graph theory \cite{LM20}, along with the
theory of sparsity \cite{NO12}, to show that an $\alpha$-ADT sketch for a monotone class $\F$ of
unbounded expansion could be used to get a constant-size sketch for deciding $\dist(x,y) \leq 1$
vs.~$\dist(x,y) > \alpha$ in arbitrary graphs, which (as we show) is a contradiction.

We are then concerned with the converse. We show that the class of max-degree 3 graphs, which has
expansion exponential in $r$ \cite{NO08}, is not ADT sketchable. After proving this theorem, we
learned of an unpublished result \cite{AK08} which proves a $\Theta(\log(n) /\alpha)$ bound for
one-way communication of the $\alpha$-ADT problem on degree-3 expander graphs. This could be used in
place of our theorem to get the same qualitative (constant vs.~non-constant) results, but not the
quantitative bound: note that communication complexity cannot give sketching or labelling lower
bounds better than $\Theta(\log n)$.

\begin{thm}
\label{thm:intro bounded degree}
For any $\alpha > 1$, any $\alpha$-ADT sketch for the class of graphs with maximum degree 3 has size
at least $\Omega(n^{\tfrac{1}{4\alpha}-\eps})$, for any constant $\eps > 0$.
\end{thm}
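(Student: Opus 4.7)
The plan is to realize the $\alpha$-ADT sketch for a well-chosen degree-$3$ graph $G$ on $n$ vertices as a noisy membership data structure for the $r$-balls of $G$, and then apply a ``confused collector''-style counting/encoding lower bound: the sketch of a single vertex $y$ must encode enough information to distinguish what its $r$-ball is, and for balls of size $\Theta(n^{1/(4\alpha)-\eps})$ in a high-girth $3$-regular graph this forces $s(n)=\Omega(n^{1/(4\alpha)-\eps})$.

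First, I would choose $r = \lfloor c(\alpha,\eps)\log_2 n\rfloor$ so that, in a $3$-regular graph, the ball $B_r(y)=\{v:\dist(v,y)\le r\}$ has size $\Theta(n^{1/(4\alpha)-\eps/2})$ while $B_{\alpha r}(y)$ is of size $o(n)$. As the hard instance I would take $G$ to be a $3$-regular graph of girth at least $2\alpha r+2$ (e.g.\ a Ramanujan graph or a random cubic graph), so that around every vertex $y$ the ball $B_{\alpha r}(y)$ is a tree and the sets $B_r(y)$ behave pseudo-randomly across $y$.

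The key reduction is to view the $\alpha$-ADT sketch as an approximate-membership data structure: conditioned on the shared randomness and the collection $\{\sk(x):x\in V(G)\}$, the map $x\mapsto D_r(\sk(x),\sk(y))$ is an oracle that, with error $\le 1/3$, decides whether $x\in B_r(y)$, provided $x\notin B_{\alpha r}(y)\setminus B_r(y)$. I would then invoke a ``confused collector'' lemma (the likely workhorse of the paper, given the label of this theorem): any randomized protocol that answers such noisy membership queries against an adversarially chosen set $S\subseteq V(G)$ of size $k$ — with $S$ drawn from a family of at least $2^{\Omega(k)}$ plausible targets — must use memory at least $\Omega(k)$ on the ``set side''. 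In our setting, the plausible targets for $B_r(y)$ form a rich family: by locally perturbing $G$ at distance $\ge r$ from $y$ (in a way that preserves the degree-$3$ and high-girth conditions), one can realize $2^{\Omega(n^{1/(4\alpha)-\eps})}$ distinct possibilities for $B_r(y)$ on the same underlying vertex set. Consequently $\sk(y)$ must carry $\Omega(n^{1/(4\alpha)-\eps})$ bits.

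The main obstacle is step three: isolating the information carried by the single sketch $\sk(y)$ from what the other sketches $\{\sk(x)\}_{x\ne y}$ already convey, in the presence of shared randomness. One has to argue that the other sketches, chosen \emph{before} $y$'s neighborhood is revealed, cannot effectively offload $y$'s ball structure — this is exactly what the ``confused collector'' formalism should buy us. A secondary technical point is controlling the grey zone $B_{\alpha r}(y)\setminus B_r(y)$: in a high-girth cubic graph most vertices at distance in $(r,\alpha r]$ lie in a thin annular shell, and a standard randomization over which side of the shell is queried (together with error amplification or Yao's principle) should ensure the ADT sketch really does give a reliable membership oracle on the target family, keeping the loss confined to the $\eps$ in the exponent.
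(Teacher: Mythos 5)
There is a genuine gap, and it sits exactly where you flag it. The ``confused collector'' lemma you invoke as the workhorse does not exist in this paper (or as a citable off-the-shelf result in this form), and the claim it is supposed to deliver --- that the single sketch $\sk(y)$ must by itself carry $\Omega(|B_r(y)|)$ bits --- is precisely what fails in the sketching/labelling model. All sketches are drawn jointly, after the graph is fixed, from one shared random source; there is no temporal separation in which $\{\sk(x)\}_{x\neq y}$ is ``chosen before $y$'s neighbourhood is revealed'', so the other labels are free to encode the very perturbations you use to create entropy in $B_r(y)$. The standard approximate-membership (Bloom-filter) lower bound does not apply, because the query keys $\sk(x)$ are themselves functions of the set being queried. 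And if you try to repair this by the usual global counting (fix a good realization of all $n$ sketches, argue that these $n\cdot s$ bits essentially determine the graph within your family), you hit a quantitative wall: there are only $2^{O(n\log n)}$ graphs of maximum degree $3$ on $n$ vertices, so charging all $n$ labels can never give more than $s=\Omega(\log n)$ --- consistent with the paper's remark that communication-style arguments such as the one-way bound of \cite{AK08} cap at $\Theta(\log n)$. (A smaller internal inconsistency: perturbing $G$ only at distance $\geq r$ from $y$ does not change $B_r(y)$ at all; and in a cubic graph of girth $>2r+1$ the isomorphism type of $B_r(y)$ is always the same tree, so the entropy must come from which labelled vertices lie in the ball, i.e.\ from a family of graphs, not from one fixed hard instance.)

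The paper escapes this barrier by a reduction from the class of \emph{all} graphs rather than working inside the cubic class. From any $N$-vertex graph $G$ it builds a max-degree-$3$ graph $G[\ell]$ with $\ell=\lceil 4\log N\rceil$: each vertex $v$ becomes a balanced binary tree $T_v$ with leaves indexed by the neighbours of $v$, and for each edge $uv$ the corresponding leaves are joined by a path calibrated so that the roots satisfy $\tfrac{\ell}{2}\,\dist_G(u,v)\le \dist_{G[\ell]}(r_u,r_v)\le \ell\,\dist_G(u,v)$, with $n=O(N^2\log N)$. An $\alpha$-ADT sketch for the cubic class, run at threshold $\ell$, therefore yields a distance-$(1,2\alpha)$ sketch for all $N$-vertex graphs. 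The lower bound is then proved in the dense world by counting: taking a graph of girth greater than $2\alpha+1$ with $m\ge \tfrac19 N^{1+1/(2\alpha)}$ edges, its $2^m$ spanning subgraphs force distance-$(1,2\alpha)$ labels of size $\Omega(N^{1/(2\alpha)})$, hence sketches of size $\Omega(N^{1/(2\alpha)}/\log N)$, which translates to $\Omega(n^{1/(4\alpha)-\eps})$. The decisive feature is that only the $N\ll n$ root vertices are charged with the information, which is how a per-vertex bound polynomial in $n$ becomes possible at all; your proposal has no analogue of this step, and without it (or a proof of the isolation claim, which as stated is false) the argument does not go through.
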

\noindent
This establishes that $\mathsf{ADT} \neq \mathsf{SD}$ (and negatively answers open problem 2 of
\cite{AG06} about approximate distance labels for bounded-degree graphs, which \cite{AK08} does
not). But max-degree 3 graphs have exponential expansion.  Smaller bounds on the expansion are
associated with structural properties: for example, in monotone classes, polynomial expansion is
equivalent to the existence of strongly sublinear separators \cite{DN16}.  One may then wonder if
smaller bounds on the expansion suffice to guarantee ADT sketchability. We prove that this is not
the case for two natural examples: subgraphs of the 3-dimensional grid (with polynomial expansion
\cite{NO12}), and subgraphs of the 2-dimensional grid with crosses (with linear expansion
\cite{Dvo21}) are not ADT sketchable. For this we require our Theorem~\ref{thm:intro bounded
degree}.

\begin{proposition}
For the class of subgraphs of the 3-dimensional grid (the Cartesian product of 3 paths), and the
class of subgraphs of the 2-dimensional grid with crosses (the strong product of 2 paths), an $\alpha$-ADT sketch
requires size at least $n^{\Omega(1/\alpha)}$.
\end{proposition}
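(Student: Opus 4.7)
The strategy is to reduce from the $\alpha$-ADT lower bound on max-degree-$3$ graphs (Theorem~\ref{thm:intro bounded degree}) by realizing any max-degree-$3$ graph as a distance-preserving subgraph of the grid class in question.

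\textbf{Embedding lemma.} I aim to show that every max-degree-$3$ graph $H$ on $m$ vertices can be realized as a subgraph $G$ of a $3$-dimensional grid of $n = O(m^{3/2})$ vertices, via an injection $f : V(H) \hookrightarrow V(G)$ under which each edge $uv \in E(H)$ corresponds to a path in the grid of the \emph{same} length $L = O(m^{1/2})$, with these paths pairwise internally vertex-disjoint. The construction is a VLSI-style layout: place the vertices of $H$ in a sparse slab of the grid, route the $\le 3m/2$ edges using the third dimension to resolve conflicts (yielding wires of length $O(m^{1/2})$), and finally uniformize the wire lengths by padding shorter paths with vertex-disjoint detours through unoccupied grid points. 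Because the paths are vertex-disjoint, $G$ has no shortcuts, and
\[
\dist_G(f(u), f(v)) = L \cdot \dist_H(u,v) \qquad \text{for all } u,v \in V(H).
\]
An analogous embedding into the $2$-dimensional grid with crosses yields $n = O(m^2)$ and $L = O(m)$, with the diagonal edges essential for routing non-planar subgraphs.

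\textbf{Reduction and conclusion.} Given an $\alpha$-ADT sketch of size $s(n)$ for $n$-vertex subgraphs of the $3$-dimensional grid, the embedding converts it into an $\alpha$-ADT sketch for $m$-vertex max-degree-$3$ graphs of size $s(O(m^{3/2}))$: for an ADT query on $H$ at threshold $r$, query the grid sketch on the embedded pair at threshold $r' = Lr$; by the exact scaling, $\dist_H(u,v) \le r \iff \dist_G(f(u),f(v)) \le r'$ and $\dist_H(u,v) > \alpha r \iff \dist_G(f(u),f(v)) > \alpha r'$, so the grid sketch resolves the query. Theorem~\ref{thm:intro bounded degree} then forces $s(O(m^{3/2})) \ge m^{1/(4\alpha)-\eps}$, which in terms of $n = \Theta(m^{3/2})$ gives $s(n) \ge n^{\Omega(1/\alpha)}$; the same reasoning with $n = \Theta(m^2)$ applies to the $2$-dimensional grid with crosses.

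\textbf{Main obstacle.} The crux is the uniform-length embedding lemma: any ratio $c = L_{\max}/L_{\min}>1$ among wire lengths would degrade the reduction to an $\alpha c$-ADT bound on $H$, so producing detours that uniformize all path lengths to exactly $L$ while preserving vertex-disjointness (hence preventing any shortcut that would destroy the exact scaling) is the key technical step, and it relies on the slack in the $O(m^{3/2})$-volume grid being enough to accommodate the padding. For the $2$-dimensional grid with crosses, an additional difficulty is that vertex-disjoint routing of non-planar graphs in the plane is impossible; one must allow a bounded number of path intersections at single shared grid vertices and verify that the resulting shortcuts only distort distances by a constant factor, which is absorbed into the $\Omega$ in $n^{\Omega(1/\alpha)}$.
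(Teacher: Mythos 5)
Your overall strategy coincides with the paper's: embed a uniform-length subdivision of an arbitrary max-degree-3 graph into the grid class (the paper's Lemmas~\ref{lem:grid} and \ref{lem:grid3}, which equalize wire lengths via a $\lambda$-refinement and ``private area'' detours, much like your padding step), use the exact scaling of distances to transfer the max-degree-3 lower bound (Theorem~\ref{thm:cubic}), and note that a polynomial blow-up in the number of vertices only affects the constant in $n^{\Omega(1/\alpha)}$. For the 3-dimensional grid your sketch is essentially the paper's argument (with sharper but unnecessary volume bookkeeping), and like the paper it leaves the uniformization of wire lengths as the technical core. One technicality you gloss over: the threshold translation $r\mapsto Lr$ depends on the host graph (through $m$), whereas the decoder $D_r$ in the definition of an ADT sketch may not; the paper handles this by first making the grid sketch distance-invariant (Proposition~\ref{prop:r-invariant equivalence}, or an extra $O(\log n)$ bits encoding $r$), and Theorem~\ref{thm:cubic} is stated for distance-invariant sketches. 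You should say a word about this.

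The genuine gap is in your treatment of the 2-dimensional grid with crosses. Your premise that ``vertex-disjoint routing of non-planar graphs in the plane is impossible'' is false for this class: the two diagonals of a unit cell of the King's graph are vertex-disjoint edges that cross geometrically, so two wires can cross a cell on different diagonals without sharing a vertex. This is precisely why the strong product (grid with crosses) is used, and it is how the paper obtains arbitrarily large clique minors and hence exact $k$-subdivisions of every max-degree-3 graph as subgraphs. Worse, your proposed fallback --- allowing wires to share crossing vertices and claiming the resulting shortcuts ``only distort distances by a constant factor'' --- does not work: if the wires of two edges $uv$ and $xy$ of $H$ share a single vertex $w$, then $\dist_G(f(u),f(x)) \le \dist_G(f(u),w)+\dist_G(w,f(x)) \le 2L$ no matter how large $\dist_H(u,x)$ is, so distances can collapse from $\Theta(m)\cdot L$ to $O(L)$, destroying the $\dist_H(u,v) > \alpha r \Rightarrow \dist_G > \alpha L r$ direction of the reduction. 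Since routing a bounded-degree expander in the plane forces many crossings between edges whose endpoints are far apart in $H$, this collapse cannot be avoided by ``bounding the number of intersections.'' The correct repair is exactly the vertex-disjoint crossing gadget given by the two diagonals, recovering an exact subdivision and hence exact distance scaling, as in the paper's Lemma~\ref{lem:grid}.
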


\noindent
We strengthen this result by showing that one can obtain monotone classes of graphs with expansion
that grows arbitrarily slowly, which are not ADT sketchable.

\begin{thm}
\label{thm:intro small expansion}
For any function $\rho$ tending to infinity, there exists a monotone class of expansion $r \mapsto
\rho(r)$ that is not ADT sketchable.  Moreover, for any $\eps>0$, there exists a monotone class
$\F$ of expansion $r\mapsto O(r^\eps)$, such that, if $\F$ admits an $\alpha$-ADT sketch of size
$s(n)$, then we must have $s(n) = n^{\Omega(1/\alpha)}$.
\end{thm}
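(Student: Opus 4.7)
The plan is to construct the desired monotone class by subdividing cubic graphs that are hard for $\alpha$-ADT sketching. Let $(G_n)_{n\ge 1}$ be a family of cubic graphs on $n$ vertices achieving the $\Omega(n^{1/(4\alpha)-o(1)})$ lower bound of Theorem~\ref{thm:intro bounded degree}. For a parameter $t_n \in \bN$ to be specified, let $H_n$ denote the \emph{$t_n$-subdivision} of $G_n$, obtained by replacing each edge of $G_n$ by a path of $t_n$ internal degree-two vertices, and define $\F$ to be the class of all graphs isomorphic to a subgraph of some $H_n$. By construction $\F$ is monotone. The intuition is that $t_n$-subdivision preserves ADT hardness (distances between original branch vertices scale by a factor of $t_n+1$) while reducing the shallow grad of $H_n$ (any ball of radius $r$ in $H_n$ contains at most $O(r/t_n)$ branch vertices, so shallow minors cannot ``see'' much of $G_n$'s expander structure). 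Tuning $t_n$ plays these two effects off each other.

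For the ADT lower bound, note that for original vertices $x,y \in V(G_n) \subseteq V(H_n)$ we have $\dist_{H_n}(x,y) = (t_n+1)\,\dist_{G_n}(x,y)$. An $\alpha$-ADT sketch for $H_n$ at threshold $r(t_n+1)$ therefore solves $\alpha$-ADT for $G_n$ at threshold $r$ using exactly the same bits, so the $\Omega(n^{1/(4\alpha)-o(1)})$ lower bound of Theorem~\ref{thm:intro bounded degree} transfers verbatim to $H_n$.

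For the grad upper bound, any depth-$r$ minor of $H_n$ can be lifted to a depth-$O(r/t_n)$ minor of $G_n$ by replacing each subdivided path traversed by a contracted piece with the corresponding $G_n$-edge. Combining with the cubic bound $\tilde\nabla_s(G_n) \le C\cdot 2^s$ and the trivial simple-minor bound $\tilde\nabla_s(G_n) \le O(\sqrt{n})$ (any simple minor of $G_n$ has at most $|E(G_n)|=3n/2$ edges), one obtains
\[
\tilde\nabla_r(H_n) \;\le\; \min\!\bigl(C\cdot 2^{r/t_n},\, O(\sqrt{n})\bigr).
\]
For the second part of the theorem, setting $t_n := n^{1/(2\eps)}$ makes the supremum over $n$ balance at $n \asymp (r/\log r)^{2\eps}$, giving $\tilde\nabla_r(\F) = O(r^\eps)$; since $|V(H_n)| = \Theta(n^{1+1/(2\eps)})$, the ADT lower bound in terms of $N = |V(H_n)|$ is $N^{\Omega(1/\alpha)}$ with an $\eps$-dependent implicit constant. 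For the first part, given $\rho(r)\to\infty$ (taken non-decreasing without loss of generality), pick $t_n$ so that $\tilde\nabla_r(H_n) \le \rho(r)$ for every $r$: a case split on whether $\sqrt{n}\le\rho(r)$ (in which case the $\sqrt{n}$ term already suffices) or not (in which case one forces $C\cdot 2^{r/t_n}\le\rho(r)$, i.e.\ $t_n \ge r/\log\rho(r)$) yields the choice $t_n \asymp \rho^{-1}(\sqrt{n})/\log n$, where $\rho^{-1}(y):=\inf\{r:\rho(r)\ge y\}$. The ADT lower bound $n^{1/(4\alpha)-o(1)}$ is still super-constant in the vertex count, so $\F$ is not ADT sketchable.

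The main obstacle will be making the lifting argument in the grad upper bound rigorous, with careful bookkeeping of pieces of the depth-$r$ minor that contain only subdivision vertices (and therefore correspond to subdividing, rather than contracting, an edge of $G_n$). This is essentially a standard construction from the sparsity theory of \cite{NO12}, but obtaining the clean form $\tilde\nabla_r(H_n)\le\min(C\cdot 2^{r/t_n}, O(\sqrt{n}))$---without extraneous additive terms---is what drives the sharpness of the final expansion rate.
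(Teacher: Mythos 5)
There is a genuine gap at the foundation of your construction: you start from ``a family $(G_n)_{n\ge 1}$ of cubic graphs on $n$ vertices achieving the $\Omega(n^{1/(4\alpha)-o(1)})$ lower bound of Theorem~\ref{thm:intro bounded degree}'', but no such sequence of single graphs is supplied by that theorem. The lower bound there is a statement about the \emph{class} of all max-degree-3 graphs, and its proof (Theorem~\ref{thm:cubic} via Corollary~\ref{cor:lbdlsrando} and Theorem~\ref{thm:lbdls}) is a counting argument over exponentially many graphs per vertex count: the $2^m$ spanning subgraphs of a dense high-girth graph, pushed through the tree-gadget $G[\ell]$. No single $n$-vertex graph witnesses an $n^{\Omega(1/\alpha)}$ sketch-size bound in this paper, and communication-type arguments for a fixed expander (as in \cite{AK08}) top out at $O(\log n)$. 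Consequently your transfer step fails: the monotone closure of the $t_n$-subdivision of one cubic graph per size contains at most $2^{O(n)}$ graphs on $\Theta(n t_n)$ vertices, so the counting argument of Lemma~\ref{lemma:min degree} degrades to a bound of order $n/(n t_n)=1/t_n$, i.e.\ nothing; ``transfers verbatim'' has no content when the class around $H_n$ is this thin. The repair is to stretch the \emph{entire} hard family rather than one graph per size: the paper's proof of Theorem~\ref{thm:arbexp} takes, for every $G\in\fG_N$, the gadget graph $G[\ell(N)]$ with $\ell(N)=6\rho^{-1}(6N^2)+2\log N$ (the same construction as in Theorem~\ref{thm:cubic}, but with gadget length tied to $\rho^{-1}$ instead of $4\log N$), closes under subgraphs, and then reruns the reduction to a distance-$(1,2\alpha)$ sketch for $\fG_N$. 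If you instead subdivided all the hard cubic instances of Theorem~\ref{thm:cubic} (which are themselves of the form $G[4\log N]$), your plan would become essentially this proof.

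Two further points. First, your expansion bound is obtained by a lifting argument (depth-$r$ minors of $H_n$ lift to depth-$O(r/t_n)$ minors of $G_n$), which you correctly flag as needing careful bookkeeping; the paper avoids this entirely in Lemma~\ref{lem:GKRSS} by a cruder but cleaner route: the whole gadget graph has treewidth $O(N^2)$ (giving a global average-degree bound playing the role of your $O(\sqrt{n})$ term), while shallow minors, at depth below the stretch length, are $2$-degenerate and hence have average degree at most $4$ (playing the role of your $C\cdot 2^{r/t_n}$ term, and in fact constant). Second, your intuition that a radius-$r$ ball of $H_n$ contains $O(r/t_n)$ branch vertices is off --- in a cubic graph it can contain exponentially many in $r/t_n$ --- though the bound you actually use has the right shape; and the reduction should be stated for distance-invariant sketches (Proposition~\ref{prop:r-invariant equivalence}), since the decoder you borrow is the one for threshold $r(t_n+1)$, not $r$.
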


\noindent
We conclude with a brief discussion of upper bounds for ADT sketching. A number of concepts have
been introduced in the literature that can be used to obtain ADT
sketches, including
sparse covers \cite{AP90} and padded decompositions \cite{KPR93}.
See Section \ref{sec:upperbounds} for definitions of these
concepts. We present in Theorem \ref{thm:SCdist} a construction of ADT sketches from sparse covers.

Using the sketches obtained from sparse covers, combined with results of \cite{Fil20} on sparse
covers (based on \cite{KPR93,FT03}), we obtain the following, which complements our
Theorem~\ref{thm:intro small expansion}; note that the graph classes with constant expansion are
exactly the proper minor-closed classes \cite{NO12}.
\begin{corollary}
For any $t \geq 4$, the class of $K_t$-minor-free graphs has a $O(2^t)$-ADT sketch of size $O(t^2 \log
t)$. The sketch is equality-based and has one-sided error. As a consequence, every monotone class of
constant expansion is ADT sketchable.
\end{corollary}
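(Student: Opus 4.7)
The plan is to combine two ingredients: Theorem~\ref{thm:SCdist} (the generic reduction from sparse covers to ADT sketches), and the known sparse cover constructions for $K_t$-minor-free graphs due to \cite{KPR93,FT03}, as refined in \cite{Fil20}. Concretely, I would first recall from Theorem~\ref{thm:SCdist} that if every graph $G\in\F$ admits, for every $r\in\bN$, a sparse cover by clusters of diameter at most $\alpha r$ in which every ball $B_r(v)$ is contained in some cluster and every vertex lies in at most $s$ clusters, then $\F$ has an $\alpha$-ADT sketch of size $O(s\log s)$. The sketch itself simply assigns each cluster a short random hash, and the sketch of a vertex $v$ lists the hashes of the (at most $s$) clusters containing~$v$; the decision function outputs $1$ iff the two lists intersect. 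This is manifestly equality-based, and the error is one-sided, since a common cluster certainly exists when $\dist(x,y)\le r$ (whereas when $\dist(x,y)>\alpha r$ the only failure mode is a hash collision, which we make unlikely by using $\Theta(\log s)$ bits per hash).

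The second step is to invoke the sparse cover result for $K_t$-minor-free graphs from \cite{Fil20} (building on \cite{KPR93,FT03}), which yields, for every $r$, a sparse cover with padding $d/r=O(2^t)$ and sparsity $s=O(t^2)$. Substituting these parameters into the reduction gives an $O(2^t)$-ADT sketch of size $O(t^2\log t)$ for the class of $K_t$-minor-free graphs, which is equality-based and one-sided as noted. No proof work beyond plugging in the parameters is required here, since both Theorem~\ref{thm:SCdist} and the sparse cover construction are taken as black boxes.

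For the consequence, I would use the characterization (from \cite{NO12}) that a monotone class has constant expansion if and only if it is proper minor-closed. Any proper minor-closed class $\F$ excludes some fixed complete graph $K_t$ as a minor, hence $\F$ is contained in the class of $K_t$-minor-free graphs and inherits its $O(2^t)$-ADT sketch of size $O(t^2\log t)$. Since $t$ is a constant depending only on $\F$, this is a constant-size ADT sketch, so $\F$ is ADT sketchable.

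The only genuinely delicate point is bookkeeping the sparsity/padding parameters from \cite{Fil20} so that the claimed $O(2^t)$ approximation and $O(t^2\log t)$ sketch size come out correctly, and verifying that the ``ball contained in a cluster'' condition holds at every scale $r$ (so that the sketch works for all thresholds simultaneously, with the same $D_r$ up to the parameter $r$). There is no combinatorial obstacle here, just careful application of the reduction of Theorem~\ref{thm:SCdist}; everything else follows from known results.
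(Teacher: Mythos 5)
Your proposal is correct and follows essentially the same route as the paper: apply Theorem~\ref{thm:SCdist} (the sparse-cover-to-ADT reduction, which yields a disjunctive, hence equality-based, one-sided scheme) together with Proposition~\ref{prop:eqbased} and the sparse covers of \cite{Fil20} (building on \cite{KPR93,FT03}) with padding $O(2^t)$ and sparsity $O(t^2)$, then use the fact that monotone classes of constant expansion are exactly the proper minor-closed classes \cite{NO12} to conclude ADT sketchability.
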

\noindent
It is also relatively straightforward (see Theorem \ref{thm:PDSdist}) to obtain
ADT sketches from padded decompositions, with an interesting difference. These sketches may not have
one-sided error and, unlike all other positive examples of sketches in this paper, they may not be
equality-based. On the other hand, they are extremely small. We can use constructions of padded
decompositions due to \cite{LS10,AGGNT19} to obtain the following remarkable corollary:

\begin{corollary}
For any $t \geq 4$, the class of $K_t$-minor-free graphs has an $O(t)$-ADT sketch of size 2. For $g
\geq 0$, the class of graphs embeddable on a surface of Euler genus $g$ has an $O(\log g)$-ADT
sketch of size 2.
\end{corollary}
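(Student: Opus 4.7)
The plan is to apply Theorem~\ref{thm:PDSdist}, which extracts an $O(\beta)$-ADT sketch of size $2$ from any $\beta$-padded decomposition scheme, and then to invoke the explicit padded decompositions known for these two graph classes.

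Recall that a $\beta$-padded decomposition at scale $\Delta$ of a graph $G$ is a random partition $\mathcal{P}$ of $V(G)$ such that every cluster has weak diameter at most $\Delta$, and for every $x\in V(G)$ and every $t>0$,
\[
\Pr\bigl[\, B(x,t)\text{ is contained in a single cluster of }\mathcal{P}\,\bigr]\ \ge\ 1-\beta t/\Delta.
\]
I would then recall the sketch underlying Theorem~\ref{thm:PDSdist}: fix a threshold $r$, a sufficiently large constant $c$, and set $\Delta=c\beta r$. Sample a padded decomposition at scale $\Delta$, together with an independent uniformly random hash $h:\mathcal{P}\to\zo^2$, and let $\sk(x)$ be $h$ applied to the cluster containing $x$. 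The decision function outputs $1$ iff $\sk(x)=\sk(y)$. If $\dist(x,y)\le r$ then $x,y$ lie in the same cluster with probability $\ge 1-\beta r/\Delta=1-1/c$, in which case the hashes certainly agree; so the protocol outputs $1$ with probability at least $1-3/(4c)$. Conversely, if $\dist(x,y)>c\beta r=\Delta$, then $x$ and $y$ must lie in different clusters because every cluster has diameter at most $\Delta$, so the independent $2$-bit hashes agree with probability exactly $1/4$, and the protocol outputs $0$ with probability $3/4$. Choosing $c$ to be a sufficiently large absolute constant drives both error probabilities below $1/3$, yielding an $O(\beta)$-ADT sketch of size $2$ as required.

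To finish I would instantiate this with the two padded decomposition constructions cited before the corollary. For $K_t$-minor-free graphs, \cite{AGGNT19} (building on \cite{KPR93,FT03}) give padded decompositions with padding parameter $\beta=O(t)$, and thus an $O(t)$-ADT sketch of size $2$. For graphs embeddable on a surface of Euler genus $g$, \cite{LS10} give padded decompositions with $\beta=O(\log g)$, and thus an $O(\log g)$-ADT sketch of size $2$.

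Essentially all of the real work lies upstream, in Theorem~\ref{thm:PDSdist} and in the two padded decomposition constructions being cited; the corollary itself is a direct plug-in of the best known padding parameters. The only point worth flagging in the write-up is that the sketch size $2$ is exactly what is needed to push the collision probability in the ``far'' case (which is $2^{-s}$ for size $s$) below the constant $1/3$, while the ``close'' case is controlled by the linear padding inequality $1-\beta t/\Delta$ and is made small by absorbing enough of $\beta$ into $\alpha$.
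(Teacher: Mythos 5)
Your proposal is correct and follows essentially the same route as the paper: Corollaries~\ref{cor:minor2} and~\ref{cor:genus} are obtained exactly by plugging the $(320t,\tfrac1{160})$-padded decomposition scheme of \cite{AGGNT19} and the $(O(\log g),\Omega(1))$-padded scheme of \cite{LS10} into Theorem~\ref{thm:PDSdist}. Your re-derivation of the sketch from a padded decomposition (with the linear padding inequality and a uniform $2$-bit hash, versus the paper's $2^{-\beta\gamma}$ formulation and random identifiers from $\{1,2,3\}$) differs only in constants and bookkeeping, not in substance.
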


\subsection{Discussion and Open Problems}

The main problem left open by this paper is Question~\ref{question:adt} for monotone
classes of graphs; we have shown that a constant bound on the expansion implies ADT sketchability,
while arbitrarily small non-constant bounds do not, but this does not rule out a monotone, ADT
sketchable class with non-constant expansion.

We have examples showing that ADT sketching does not imply small-distance
sketching, in general. But our examples are not even hereditary. Is there a hereditary class
that is ADT sketchable, but not small-distance or adjacency sketchable?

Our Theorem~\ref{thm:intro small-distance} shows that bounded expansion implies first-order
sketchability, and that for monotone classes the converse also holds. We showed more generally that
classes of \emph{structurally bounded expansion} are first-order sketchable. To extend our study of
sketchability beyond monotone classes, it would be interesting to investigate whether the converse
of this statement holds: does first-order sketchability of a \emph{hereditary} class imply
structurally bounded expansion?

In the preprint of this paper, we asked whether the class of subgraphs of hypercubes is a
counterexample to the Implicit Graph Conjecture (IGC):

\noindent
\textbf{Implicit Graph Conjecture} (see \eg \cite{KNR92,Spin03}): Does every hereditary graph class
$\F$ containing at most $2^{O(n \log n)}$ graphs on $n$-vertices, admit an adjacency labeling scheme
with labels of size $O(\log n)$?

\noindent
This long-standing conjecture was refuted in \cite{HH21} by a non-constructive argument, and
it would be interesting to find a more natural class that refutes the conjecture. One way to design
adjacency labels is to find a constant-size adjacency sketch \cite{Har20,HWZ21}, but our
Corollary~\ref{cor:hypercube counterexample} shows that this doesn't exist for the family of
subgraphs of hypercubes; furthermore, prior work (\eg \cite{CLR20}) had not found another way of
designing labels of size $O(\log n)$, suggesting that subgraphs of hypercubes might be
an explicit counterexample. But, follow-up work has now found efficient labels for this class
\cite{EHZ22}, leaving open the problem of finding an explicit counterexample to the IGC.  A related
question is whether we may characterize the monotone classes of graphs which admit adjacency
labeling schemes of size $O(\log n)$.

We have focused on determining whether there \emph{exists} a constant $\alpha$ such that a class is
$\alpha$-ADT sketchable. It is also of interest to obtain sketches for arbitrarily small $\alpha >
1$, with sketch size depending on $\alpha$. One strategy is to embed the graph isometrically into
$\ell_1$, but this is not always the best option. We obtained a $(1+\eps)$-ADT sketch for the
class of forests with size $O\left(\tfrac{1}{\eps}\log\tfrac{1}{\eps}\right)$, but this
result appeared earlier in \cite{AK08}; this sketch is more efficient than the one obtained by
embedding the trees isometrically in $\ell_1$. We remark that a class (monotone or not) that admits
a $(1+\eps)$-ADT sketch for $\eps < 1$ must also admit an adjacency sketch.

Finally, we point out an interesting conjecture of \cite{HHH21b}, that all constant-cost public-coin
communication problems contain a large monochromatic rectangle. In our terminology, using the
equivalence between constant-cost communication and adjacency sketching from \cite{HWZ21}, this
conjecture states that all adjacency sketchable graph classes have the strong Erd{\H o}s-Hajnal
property.

\section{Preliminaries}

\subsection{Notation}

Throughout the paper, $\log$ denotes the logarithm base 2, while $\ln$
denotes the natural logarithm.

\noindent
We will write $\ind{E}$ for the indicator variable for the event $E$, which takes value 1 if $E$ is
true.

\noindent
Given a graph $G$, the length of a path $P$ in $G$ is the number of edges of $P$.  Given two
vertices $x,y \in V(G)$, we define $\dist_{G}(x,y)$ to be the infimum of the length of a path
between $x$ and $y$ in $G$; we define $\dist_{G}(x,y)=\infty$ if there exists no path between $x$ and
$y$.  Notice that $(V(G),\dist_{G})$ is a metric space (with possibly infinite distances between pairs
of vertices if $G$ is disconnected).

\noindent
The \emph{girth} of a graph $G$ is defined as the size of a shortest
cycle in $G$ (if $G$ is acyclic, its girth is infinite).

\noindent For a class $\F$ of
graphs and an integer $n\ge 0$, we denote by $\F_n$ the family of
graphs from $\F$ with $n$ vertices.

\subsection{Distance and First-Order Sketching}
\label{section:sketching definitions}

We will require more general notions of sketching than those introduced above.  For a class $\F$ of
graphs, we will say that a sequence $\{ f_G \}_{G \in \F}$ of partial functions $f_G : V(G) \times
V(G) \to \{0,1,*\}$ is a \emph{partial function $f$ parameterized by graphs $G \in \F$}. We will
write $f$ to refer to this sequence.

For a graph class $\F$, we define an \emph{$f$-sketch} for $\F$ as a decoder $D : \zo^* \times \zo^*
\to \zo$, such that for every $G \in \F$ the following holds. There is a probability distribution
over functions $\sk : V(G) \to \zo^*$, such that for all $x,y \in V(G)$,
\[
  f_G(x,y) \neq * \implies \Pr[ D(\sk(x), \sk(y)) = f_G(x,y) ] \geq 2/3 \,.
\]
For a fixed  $f$-sketch for $\F$ and any graph $G \in \F$,  we call any probability distribution
over $\sk : V(G) \to \zo^*$ that satisfies the above condition an \emph{$f_G$-sketch} or an
\emph{$f$-sketch for $G$}.
\noindent
We define the \emph{size} of the sketch as
\[
\max_{G \in \F_n} \sup_{\sk} \max_{x \in V(G)} |\sk(x)| \,,
\]
\noindent
where the supremum is over the set of functions $\sk : V(G) \to \zo^*$ in the support of the
distribution defined for $G$, and $|\sk(x)|$ is the number of bits of $\sk(x)$. We will say that a
class $\F$ is \emph{$f$-sketchable} if there exists an $f$-sketch for $\F$ with size that does not
depend on the number of vertices $n$.

For a graph class $\F$, we also define an \emph{$f$-labelling scheme} for $\F$ similar to above,
except that for every $G \in \F$ there is a \emph{deterministic} function $\ell : V(G) \to \zo^*$
such that for all $x,y \in V(G)$,
\[
  f_G(x,y) \neq * \;\implies\; D(\ell(x), \ell(y)) = f_G(x,y) \,.
\]
\noindent
The following simple proposition (observed in \cite{Har20,HWZ21}) relates sketches to labelling
schemes:
\begin{proposition}
\label{prop:derandomization}
If $\F$ admits an $f$-sketch of size $s(n)$, then it admits an $f$-labelling scheme of size $O(s(n)
\log n)$.
\end{proposition}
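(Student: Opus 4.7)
The plan is standard probabilistic amplification followed by a union bound. Starting from an $f$-sketch of size $s(n)$ with decoder $D$ and success probability at least $2/3$, I would boost the success probability to be strong enough that a single union bound over all pairs of vertices suffices.

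Specifically, fix $G \in \F_n$ and let $\sk_1, \sk_2, \ldots, \sk_k : V(G) \to \zo^{s(n)}$ be $k$ independent copies of the given $f_G$-sketch, where $k = C \log n$ for a sufficiently large constant $C$. Define a new sketch $\sk'(x) \define (\sk_1(x), \ldots, \sk_k(x))$, which has size $k \cdot s(n) = O(s(n) \log n)$, and a new decoder $D'$ which, on input $(\sk'(x), \sk'(y))$, outputs the majority value of $D(\sk_i(x), \sk_i(y))$ over $i \in [k]$. By a standard Chernoff bound, for any fixed pair $x,y \in V(G)$ with $f_G(x,y) \neq *$,
\[
  \pr\left[ D'(\sk'(x), \sk'(y)) \neq f_G(x,y) \right] \leq 2^{-\Omega(k)} \leq \frac{1}{n^3},
\]
for $C$ chosen large enough.

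Since $|V(G)|^2 \leq n^2$, the union bound over all pairs $(x,y)$ gives
\[
  \pr\left[ \exists\, x,y \in V(G) : f_G(x,y) \neq * \text{ and } D'(\sk'(x), \sk'(y)) \neq f_G(x,y) \right] \leq \frac{1}{n} < 1.
\]
Hence there exists a deterministic outcome $\ell : V(G) \to \zo^*$ in the support of $\sk'$ which agrees with $f_G$ on every defined pair; take this as the labelling scheme. Its size is $O(s(n) \log n)$, as required.

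There is no substantive obstacle: the argument is entirely routine amplification plus union bound, and it works uniformly over $G \in \F$ because each $G$ has its own sketch distribution from the $f$-sketch definition. The only minor point to be careful about is that the decoder $D'$ must be a single function (independent of $G$), which is ensured by defining $D'$ solely in terms of the fixed decoder $D$ and the parameter $k$ (which depends only on $n = |V(G)|$, a quantity already encoded in the sketch size bound).
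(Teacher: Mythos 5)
Your proof is correct and is exactly the standard argument the paper has in mind: the paper states Proposition~\ref{prop:derandomization} without proof, as an observation imported from \cite{Har20,HWZ21}, and the intended justification is precisely your amplification of the success probability to $1-1/n^{3}$ via $O(\log n)$ independent copies with a majority-vote decoder, a union bound over the at most $n^{2}$ pairs with $f_G(x,y)\neq *$, and fixing a good outcome of the randomness to obtain deterministic labels of size $O(s(n)\log n)$. The only point worth stating explicitly is how the decoder recovers the block structure (e.g.\ it infers $k$ and $s(n)$ from the label length or from a short prefix), which you already flag and which costs nothing asymptotically.
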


\noindent
We now define certain important types of $f$-sketches.  Let $\F$ be a class of graphs. For any $r_1
\leq r_2$, a \emph{distance-$(r_1,r_2)$ sketch} for $\F$ is an $f$-sketch, as defined above, when
for any graph $G$ we define the function
\[
  f_G(x,y) = \begin{cases}
    1 &\text{ if } \dist_G(x,y) \leq r_1 \\
    0 &\text{ if } \dist_G(x,y) > r_2 \\
    * &\text{ otherwise.}
  \end{cases}
\]
\noindent
The size of such a sketch may depend on $r_1$, $r_2$, the number of vertices $n$, or other graph
parameters.

Recall the definitions of \emph{adjacency sketchable}, \emph{small-distance sketchable}, and
\emph{ADT sketchable}. It is clear that:
\begin{enumerate}
\item A class $\F$ is \emph{adjacency sketchable} if it is distance-$(1,1)$ sketchable;
\item A class $\F$ is \emph{small-distance sketchable} if for every $r \geq 1$ it is
distance-$(r,r)$ sketchable.
\item A class $\F$ is \emph{$\alpha$-ADT sketchable} if for every $r \geq 1$ it is
distance-$(r,\alpha r)$ sketchable, and furthermore the size of the sketch does not depend on $r$.
\end{enumerate}

\noindent
Following \cite{HWZ21}, we will also define \emph{FO-sketchable} classes, for which we require some
terminology (see \eg \cite{NOS22} for more on the following terminology). A \emph{relational
vocabulary} $\Sigma$ is a set of relation symbols, with each $R \in \Sigma$ having an \emph{arity}
$\mathrm{arity}(R) \in \bN\setminus \{0\}$. A $\Sigma$-structure $\A$ consists of a \emph{domain}
$A$, and for each relation symbol $R \in \Sigma$ an \emph{interpretation} $R^{\A} \subseteq
A^{\mathrm{arity}(R)}$, which is a relation. Fix a countably infinite set $X$ of \emph{variables}.
\emph{Atomic formulas of vocabulary $\Sigma$} are of the form
\begin{itemize}
\item $x=y$ for $x,y \in X$; or,
\item $R(x_1, \dotsc, x_r)$ for $x_1,\dotsc, x_r \in X$, $R \in \Sigma$ and $r=\mathrm{arity}(R)$,
which evaluates to true when $(x_1, \dotsc, x_r) \in R$.
\end{itemize}
\emph{First-order (FO) formulas} of vocabulary $\Sigma$ are inductively defined as either atomic
formulas, or a formula of the form $\neg \phi, \phi \wedge \psi, \phi \vee \psi$, or $\exists x .
\phi$ or $\forall x . \psi$, where $\phi$ and $\psi$ are each FO formulas. A \emph{free variable} of
a formula $\phi$ is one which is not bound by a quantifier. We will write $\phi(x_1, x_2, \dotsc,
x_k)$ to show that the free variables of $\phi$ are $x_1, \dotsc, x_k \in X$. For a value $u \in A$,
we write $\phi[ u / x ]$ for the formula obtained by substituting the constant $u$ for the free
variable $x$.

Let $\phi(x,y)$ be any formula with two free variables and relational vocabulary $\Sigma = \{ E',
R_1, \dotsc, R_k \}$ where $E'$ is symmetric of arity 2 and each $R_i$ is unary (\ie of arity 1). We
will say that a graph class $\F$ is \emph{$\phi$-sketchable} if it is $f$-sketchable for any $f$
chosen as follows. For any graph $G = (V,E)$, we choose any $\Sigma$-structure with domain $V$ where
$E$ is the interpretation of the symbol $E'$. Then set $f_G(u,v) = 1$ if and only if $\phi(u / x, v
/ y)$ evaluates to true.

We remark that for any graph $G$, there are many ways to choose a $\Sigma$-structure with domain $V$
with $E$ being the interpretation of $E'$. To be first-order sketchable, a class $\F$ must be
$f$-sketchable for \emph{every} such choice of functions $f_G$. A concrete example is that, for any
$r \in \bN$, we can choose the formula
\[
  \phi(x,y) = \exists u_1, u_2, \dotsc, u_{r-1} : (E'(x,u_1) \vee x=u_1) \wedge (E'(u_1,u_2) \vee
u_1=u_2) \wedge \dotsm \wedge (E'(u_r,y) \vee u_r=y) \,,
\]
which evaluates to true if and only if $\dist_G(x,y) \leq r$.

\subsection{Equality-Based Labelling Schemes and Sketches}\label{sec:dis}
An equality-based labelling scheme is one which assigns to each vertex a deterministic label,
comprising a data structure of size $s$ that holds $k$ ``equality codes''\!\!, which can be used
only for checking equality. These labelling schemes: 1) capture the
constant-cost randomized communication protocols that can be simulated by a constant-cost
\emph{deterministic} communication protocol with access to an \textsc{Equality} oracle (as studied
in e.g.~\cite{CLV19,BBM+20,HHH21b,HWZ21}); and 2) capture a common type of adjacency labels,
including those of \cite{KNR92} for bounded arboricity graphs (see \cite{HWZ21} for others).

One might formalize these schemes in a few ways; we slightly adapt the definition from
\cite{HWZ21}. This definition is intended to simplify notation rather than optimize label size,
since we care mainly about constant vs.~non-constant.

\begin{definition}[Equality-Based Labeling Scheme]
Let $\F$ be a class of graphs and let $f : \bN \times \bN \times \F \to \{0,1,*\}$ be a partial
function. An $(s,k)$-equality-based $f$-labeling scheme for $\F$ is an algorithm $D$, called a
\emph{decoder}, which satisfies the following. For every $G \in \F$ with vertex set $[n]$ and every
$x \in [n]$, there is a tuple of the form
\[
  \ell_G(x) = (p(x) \mid q(x)) \,,
\]
where $p(x) \in \zo^s$ is called the \emph{prefix}\footnote{It is natural but unnecessary to include
a prefix, since one may replace the $s$-bit prefixes $p(x)$ with $s+1$ single-bit equality codes
$(1, p(x))$, so that the prefixes $p(x), p(y)$ can be recovered by checking $1 = p(x)_i?$ and $1 =
p(y)_i?$ for each $i \in [s]$. This is convenient for lower bounds (\eg \cite{CLV19,HWZ21}).} and
$q(x) \in \bN^k$ is called the vector of \emph{equality codes}, such that, for all $x,y \in [n]$, on
inputs $\ell_G(x), \ell_G(y)$, the algorithm $D$ chooses a function $D_{p(x), p(y)} : \zo^{k \times
k} \to \zo$ and outputs
\[
  D_{p(x), p(y)}(Q_{x,y}) \,,
\]
where $Q_{x,y} \in \zo^{k \times k}$ is the matrix recording whether each pair of equality codes are
equal:
\[
  Q_{x,y}(i,j) \define \ind{q(x)_i = q(y)_j} \,.
\]
It is required that, for all $G \in \F$ and all $x,y \in [n]$, the output satisfies
\[
  D(\ell_G(x), \ell_G(y)) = D_{p(x), p(y)}(Q_{x,y}) = f(x,y,G) \,.
\]
We make the further distinction that an $(s,k)$-equality-based labelling scheme is
$(s,k)$-\emph{disjunctive} if for every $G \in \F$ and $x,y \in [n]$, $D_{p(x),p(y)}(Q_{x,y})$
simply outputs a disjunction of a subset of entries of $Q_{x,y}$.
\end{definition}
It will be convenient to introduce some alternate notation for constructing equality-based labeling
schemes. We define an $(s,t,k)$-equality-based $f$-labelling scheme for $\F$ similarly to an
$(s,k)$-equality-based labelling scheme, except for $G \in \F$ and $x \in [n]$, the labels are of
the form
\[
  \ell_G(x)
    = \left[
      (p_1(x) \mid \vec q_1(x)), (p_2(x) \mid \vec q_2(x)), \dotsc, (p_t(x) \mid \vec q_t(x))
      \right],
\]
where the vectors $p_i(x) \in \zo^*$ are called the \emph{prefixes}, the entries of the vectors
$\vec q_i(x) \in \bN^*$ are called \emph{equality codes}, and we must have $\sum_{i=1}^t |p_i(x)|
\leq s$ and $\sum_{i=1}^t |\vec q_i(x)| \leq k$ (where $|v|$ denotes the number of entries of $v$).
When an element $(p_i(x) \mid \vec q_i(x))$ in an equality-based label has $p_i(x)$ of size 0, we
will write $(- \mid \vec q_i(x))$; similarly, we write $(p_1(x) \mid - )$ when $\vec q_i(x)$ is
empty.  Given labels of this form, it is straightforward to obtain an $(s,k)$-equality based
labelling scheme by concatenating the prefixes and equality codes, along with separator symbols so
that the decoder can reconstruct each $(p_i \mid \vec q_i)$.

We emphasize that $k$ bounds the total number of equality codes associated with any vertex $x$, but
not necessarily the total number of bits needed to store these codes (see Example
\ref{example:disjunctive} below, where $k=2$ but storing the codes would require $2\log n$ bits per
vertex). 

\begin{example}
\label{example:disjunctive}
The adjacency labelling scheme of \cite{KNR92} for forests can be written as an equality-based
labelling scheme. For each $x$ in an $n$-vertex forest $G$ with arbitrarily rooted trees, which we
assume has vertex set $[n]$, we assign the label $\ell_G(x) = ( - \mid x,p(x) )$ where $p(x)$ is the
parent of $x$ if it has one, or 0 otherwise. Here the equality codes are $q(x) = (x,p(x)) \in
\bN^2$. The decoder simply outputs the disjunction of $p(x) = y$ or $p(y) = x$, so in fact this is a
$(0,2)$-disjunctive labeling scheme.
\end{example}

An equality-based labelling scheme is easily transformed into a standard deterministic labelling
scheme or a sketch.  The following simple proposition was observed in \cite{HWZ21}. We sketch the
proof for the sake of clarity. 
\begin{proposition}\label{prop:eqbased}
Let $\F$ be a class of graphs and $f : \bN \times \bN \times \F \to \{0,1,*\}$ be a partial
function. If there is an $(s,t,k)$-equality-based $f$-labelling scheme for $\F$ then there is an
$f$-sketch for $\F$ of size at most $O(s + t + k \log k)$. If the scheme is disjunctive, the sketch
has \emph{one-sided error}: when $f(x,y,G) = 1$, the sketch will produce the wrong output with
probability 0.
\end{proposition}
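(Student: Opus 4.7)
The plan is to derive the sketch from the equality-based labelling scheme by replacing each equality code $q \in \bN$ (which could require arbitrarily many bits to store) with a short hash, while keeping the prefixes untouched. Concretely, using the public coins I would sample a uniformly random function $h : \bN \to \zo^m$ for an appropriate $m = O(\log k)$, and define $\sk(x)$ to be the tuple $\bigl[(p_1(x) \mid h(\vec q_1(x))),\dotsc,(p_t(x) \mid h(\vec q_t(x)))\bigr]$, where $h$ is applied entrywise and a self-delimiting encoding is used to mark the $t$ group boundaries. The decoder $D'$ then runs the original decoder $D$, but whenever $D$ would query $Q_{x,y}(i,j) = \ind{q(x)_i = q(y)_j}$ it instead queries $\ind{h(q(x)_i) = h(q(y)_j)}$.

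For a fixed pair $x,y$, the reconstructed matrix $Q'_{x,y}$ agrees with $Q_{x,y}$ unless there is a \emph{false} collision, i.e.\ indices $(i,j)$ with $q(x)_i \neq q(y)_j$ but $h(q(x)_i) = h(q(y)_j)$. For each individual pair the collision probability is $2^{-m}$, and there are at most $k^2$ entries in $Q_{x,y}$, so taking $m = 2\log k + 2$ gives, by a union bound, total failure probability at most $k^2 \cdot 2^{-m} \leq 1/4 < 1/3$. On the event that $Q'_{x,y} = Q_{x,y}$ we have $D'(\sk(x),\sk(y)) = D(\ell_G(x),\ell_G(y)) = f(x,y,G)$, establishing the required $2/3$ success probability. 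The sketch size is $s$ for the prefixes, $O(t)$ bits for the group-boundary delimiters, and $k \cdot m = O(k \log k)$ for the hashed codes, totalling $O(s + t + k \log k)$ as claimed.

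For the one-sided-error part, suppose the scheme is disjunctive, so $D_{p(x),p(y)}(Q_{x,y})$ is a disjunction of a subset of entries of $Q_{x,y}$. Since $h$ is a function, $q(x)_i = q(y)_j$ always forces $h(q(x)_i) = h(q(y)_j)$; in other words, $Q'_{x,y}$ dominates $Q_{x,y}$ entrywise. Hence if $f(x,y,G) = 1$ then some queried entry of $Q_{x,y}$ equals $1$, so the same entry of $Q'_{x,y}$ equals $1$, and the sketch outputs $1$ with probability $1$. Errors can therefore occur only on negative instances, which is exactly the one-sided error claim.

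The only step that requires any care is the formal sampling of $h$ with domain $\bN$; I would handle this either by truncating the domain to the (polynomial-in-$n$) set of codes actually used by $\F_n$, or by treating $h$ as a lazy random oracle so only the $2k$ codes queried by a given pair need to be materialized. The main conceptual point, and the one worth stating explicitly, is that union-bounding over the $k^2$ entries of $Q_{x,y}$ is what controls $m$, and that taking the disjunctive hypothesis immediately upgrades the two-sided error from the hashing step into a one-sided error.
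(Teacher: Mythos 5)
Your proposal is correct and matches the paper's proof essentially step for step: the paper also hashes the equality codes through a random function $\xi:\bN\to[w]$ with $w=3k^2$, union-bounds over the at most $k^2$ comparisons, accounts for $O(s+t+k\log k)$ bits including separators, and derives one-sided error for disjunctive schemes from the fact that equal codes hash to equal values with certainty. The only differences are cosmetic (your range $2^m$ with $m=2\log k+2$ versus the paper's $3k^2$, and your explicit remark about lazily sampling $h$).
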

\begin{proof}[Proof sketch]
Choose a random function $\xi : \bN \to [w]$ for $w = 3k^2$. For any vertex $x$ of a graph $G$,
replace each vector $\vec q_i(x) = (q_{i,1}(x), \dotsc, q_{i,m}(x))$ with $(\xi(q_{i,1}(x)), \dotsc,
\xi(q_{i,m}(x)))$. We have replaced each of the (at most) $k$ equality codes $(\vec q_{i}(x))_j$
with $\xi((q_{i}(x))_j)$, using $k\log w = O(k \log k)$ bits in total. The sketch has size $O(s + t
+ k \log k)$ since we must include each $p_i(x)$ (using $s$ bits in total), the $O(k \log k)$ bits
for the equality codes, and $O(t)$ bits to encode the symbols $( \; | \; )$.

For two vertices $x,y$, write $Q_{x,y}^\xi(i_1,i_2,j_1,j_2) = \ind{\xi((\vec q_{i_1}(x))_{i_2}) =
\xi((\vec q_{j_1}(y))_{j_2})}$. Since there are at most $k$ equality codes in each label, there are
at most $k^2$ equality comparisons. By the union bound, the probability that any of these
comparisons have
\[
\ind{\xi((\vec q_{i_1}(x))_{i_2}) = \xi((\vec q_{j_1}(y))_{j_2})} 
\neq \ind{(\vec q_{i_1}(x))_{i_2} = (\vec q_{j_1}(y))_{j_2}} 
\]
is at most $k^2 \cdot (1/w) = 1/3$, so with probability at least $2/3$ all of the comparisons made
by the decoder have the correct value, so the decoder will be correct. Note that when $(\vec
q_{i_1}(x))_{i_2} = (\vec q_{j_1}(y))_{j_2}$, the random values under $\xi$ will be equal with
certainty. We conclude from this that disjunctive schemes will produce sketches with one-sided
error.
\end{proof}

\begin{rem}
\label{remark:lsh}
Disjunctive labelling schemes with $s=0$ (\ie the $p$ values are empty) can be transformed into
\emph{locality-sensitive hashes (LSH)}~\cite{IM98}. A $(r_1,r_2,\gamma_1,\gamma_2)$-LSH must map any
two points $x,y$ with $\dist(x,y) \leq r_1$ to the same hash value with probability at least
$\gamma_1$, and map any two points $x,y$ with $\dist(x,y) > r_2$ to the same hash value with
probability at most $\gamma_2$, where $r_1 < r_2$ and $\gamma_1 > \gamma_2$. By boosting the success
probability of each \textsc{Equality} check in the disjunction, and then sampling a uniformly random
term from the disjunction, one obtains an LSH with distance parameters that depend on the original
sketch. All of the equality-based sketches presented in this paper, except the first-order sketches,
are of this form.
\end{rem}

\section{Adjacency Sketching}
\label{section:adjacency}

In this section, we prove Theorem~\ref{thm:intro adjacency}, and include the additional equivalent
statement that $\F$ admits a constant-size \emph{disjunctive} adjacency sketch. We think of
disjunctive sketches as the simplest possible use of randomization in a sketch, with the theorem
establishing that the simplest possible sketches are sufficient for monotone classes.

\begin{thm}
\label{thm:adjacency sketching}
Let $\F$ be a monotone class of graphs. Then the following are equivalent:
\begin{enumerate}
\item $\F$ is adjacency sketchable.
\item $\F$ admits a constant-size disjunctive adjacency labelling scheme.
\item $\F$ has bounded arboricity.
\end{enumerate}
\end{thm}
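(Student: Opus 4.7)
The plan is to prove the cycle of implications $(3) \Rightarrow (2) \Rightarrow (1) \Rightarrow (3)$.

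For $(3) \Rightarrow (2)$, I would adapt the classical adjacency labelling scheme of Kannan, Naor, and Rudich to the equality-based framework. Given any $G \in \F$ of arboricity at most $k$, partition $E(G)$ into $k$ forests, root each forest arbitrarily, and assign to each vertex $v$ an equality code for $v$ itself together with equality codes for its parent in each of the $k$ forests (with a dummy code when $v$ is a root). The decoder outputs the disjunction, over all $k$ forests, of ``parent of $u$ equals $v$'' and ``parent of $v$ equals $u$''. This is a $(0,O(k),O(k))$-disjunctive equality-based labelling scheme, hence of constant size whenever $k$ is constant. The implication $(2) \Rightarrow (1)$ is then immediate from Proposition~\ref{prop:eqbased}.

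The hard direction is $(1) \Rightarrow (3)$, which uses monotonicity in an essential way. I argue the contrapositive. If $\F$ has unbounded arboricity, then for every $d$ there is some $G_d \in \F$ of arboricity exceeding $d$; by Nash-Williams' theorem, $G_d$ contains a subgraph $H_d$ with average degree at least $2d-2$, and monotonicity guarantees that $H_d$ together with all $2^{|E(H_d)|}$ of its spanning subgraphs lies in $\F$. It therefore suffices to show that this very rich subfamily of $\F$ cannot be adjacency sketched with size $s = o(d)$.

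For this I follow the counting strategy that in spirit goes back to \cite{HHH21a,HHH21b}: a size-$s$ sketch with $2/3$ correctness, viewed the right way, constrains the number of graphs on a fixed vertex set that a common decoder $D : \zo^s \times \zo^s \to \zo$ can simultaneously distinguish, and this count must accommodate the $2^{|E(H_d)|}$ spanning subgraphs of $H_d$. The main obstacle is executing the counting tightly enough to avoid losing logarithmic factors: the naive route of boosting the success probability to $1 - 1/n^2$ and invoking Proposition~\ref{prop:derandomization} to pass to a deterministic labelling scheme only yields $s = \Omega(d/\log n)$. I therefore expect to need a direct probabilistic argument that fixes the decoder and the randomness on all but a few vertices and then exploits, at the level of individual edges of $H_d$, the limited capacity of the $2^{2s}$ outcomes of $D$ to encode the independent edge-choices in the spanning subgraphs. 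Packaging this argument so that the per-vertex entropy contribution is linear in the local degree of $H_d$, rather than diluted by an amplification over all of $V(H_d)$, is the key step needed to produce the sharp $s = \Omega(d)$ bound.
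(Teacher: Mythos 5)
Your positive directions match the paper: $(3)\Rightarrow(2)$ is exactly the Kannan--Naor--Rudich scheme recast with $k+1$ equality codes per vertex and a disjunctive decoder, and $(2)\Rightarrow(1)$ is Proposition~\ref{prop:eqbased}. Your reduction for $(1)\Rightarrow(3)$ -- pass by monotonicity to a dense subgraph $H_d$ and to the family of all $2^{|E(H_d)|}$ spanning subgraphs of $H_d$, then count against a common decoder -- is also the paper's strategy, and you are right that the naive route through Proposition~\ref{prop:derandomization} only gives $\Omega(d/\log n)$, which is genuinely insufficient (e.g.\ for subgraphs of hypercubes, where $d=\Theta(\log n)$).

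The gap is that the counting lemma itself, which is the entire content of the lower bound, is never carried out: you state that a ``direct probabilistic argument'' with ``per-vertex entropy contribution linear in the local degree'' is ``the key step needed,'' but you do not supply it, and the route you sketch (fixing the decoder and the randomness on all but a few vertices, local entropy accounting) is not the argument that works and is more complicated than necessary. The paper closes this step with a short global count (Lemma~\ref{lemma:min degree}): boost the error only to a small \emph{constant} $\delta$ with $\delta\log(e/\delta)\le 1/2$ (so the sketch size stays $O(s)$, with no dependence on $n$); concatenate the sketches of all $n$ vertices of a spanning subgraph $H$ into a single string $\rho\in\zo^{sn}$, and call $\rho$ good for $H$ if the decoder is correct on all but $\delta|E|$ of the pairs in $E(H_d)$. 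An expectation argument shows every $H$ has a good string, while any fixed $\rho$ determines a graph $F$ from which a good $H$ can differ on at most $\delta m$ edges, so each $\rho$ is good for at most $\binom{m}{\delta m}\le 2^{m/2}$ graphs. Hence $2^{m}\le 2^{sn}\cdot 2^{m/2}$, giving $s\ge m/(2n)=\Omega(d)$ (your average-degree-$2d-2$ subgraph suffices here, since only $m\ge (d-1)n$ is used). Without this (or an equivalent) argument, the implication $(1)\Rightarrow(3)$, and hence the theorem, is not established by your proposal.
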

\noindent
A disjunctive labelling scheme for graphs of arboricity at most $k$ can be obtained from the adjacency
labelling scheme of \cite{KNR92}, as in Example~\ref{example:disjunctive}. This leads to a sketch of
size $O(k \log k)$ by Proposition~\ref{prop:eqbased}, which was improved slightly in
\cite{HWZ21}:
\begin{proposition}[\cite{HWZ21}]\label{prop:ask}
Let $\F$ be any class with arboricity at most $k$. Then $\F$ admits a $(0,1,k+1)$-disjunctive
adjacency labelling scheme, and an adjacency sketch of size $O(k)$.
\end{proposition}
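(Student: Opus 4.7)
The plan is to first exhibit the $(0,1,k+1)$-disjunctive adjacency labelling scheme from a standard arboricity decomposition, and then to sharpen the generic transformation of Proposition~\ref{prop:eqbased} to reach sketch size $O(k)$.

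First, since $G$ has arboricity at most $k$, decompose $E(G)$ into forests $F_1,\dotsc,F_k$, root every tree of every $F_i$ arbitrarily, and orient each non-root edge from child to parent. For $x\in V(G)$ and $i\in[k]$ let $p_i(x)$ be the parent of $x$ in $F_i$ (setting $p_i(x)=x$ if $x$ is a root of its tree in $F_i$). I would define the label
\[
\ell_G(x) \define \bigl(- \,\bigm|\, x,\,p_1(x),\,p_2(x),\,\dotsc,\,p_k(x)\bigr),
\]
with no prefix bits and exactly $k+1$ equality codes. For distinct $x,y\in V(G)$, $xy\in E(G)$ iff some $F_i$ contains the edge $xy$, i.e., $p_i(x)=y$ or $p_i(y)=x$ for some $i\in[k]$; the decoder therefore simply outputs the disjunction of the $2k$ corresponding entries of $Q_{x,y}$, giving the claimed $(0,1,k+1)$-disjunctive labelling scheme.

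Plugging this directly into Proposition~\ref{prop:eqbased} only yields a sketch of size $O(k\log k)$. To get $O(k)$, I would exploit that the decoder consults only $2k$ of the $(k+1)^2$ pairs of equality codes. Sample a single public random hash $\xi:\bN\to[w]$ with $w=\Theta(k)$ and set the sketch of $x$ to be the pair $\bigl(\xi(x),\,\chi_x\bigr)$, where $\chi_x\in\zo^{w}$ is the characteristic vector of the set $\{\xi(p_1(x)),\dotsc,\xi(p_k(x))\}\subseteq[w]$. This uses $\lceil\log w\rceil+w=O(k)$ bits. The decoder outputs $1$ iff $(\chi_y)_{\xi(x)}=1$ or $(\chi_x)_{\xi(y)}=1$. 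When $xy$ is an edge, say with $y=p_i(x)$, we have $\xi(y)$ lying in the set represented by $\chi_x$, so $(\chi_x)_{\xi(y)}=1$ deterministically; when $xy$ is a non-edge, a wrong output would require a hash collision $\xi(x)=\xi(p_i(y))$ or $\xi(y)=\xi(p_i(x))$ for some $i\in[k]$, which by a union bound over the $2k$ relevant events happens with probability at most $2k/w$, made $\le 1/3$ by taking $w$ a sufficiently large constant multiple of $k$. In particular the sketch has one-sided error, consistent with Proposition~\ref{prop:eqbased}.

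The main obstacle, such as it is, lies in this second step: the generic reduction wastes a $\log k$ factor by hashing each equality code independently into a universe of size $\Theta(k^2)$, whereas here the structure of the decoder (only $2k$ relevant queries, all of the form ``is $\xi(x)$ one of $y$'s parent-hashes?'') permits replacing the $k$ parent codes by a single $\Theta(k)$-bit characteristic vector. Verifying one-sided error and the $\le 1/3$ bound is then routine. The first step (labelling from arboricity) is essentially the Kannan--Naor--Rudich scheme rewritten in equality-based form, so it presents no difficulty.
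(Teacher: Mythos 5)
Your proposal is correct and follows essentially the paper's route: the $(0,1,k+1)$-disjunctive scheme is exactly the forest-decomposition labelling of Example~\ref{example:disjunctive}, extended from one rooted forest to $k$ of them, and the paper itself does not prove the $O(k)$ sketch size but cites it from \cite{HWZ21}; your characteristic-vector (Bloom-filter style) hashing into a universe of size $\Theta(k)$ is a sound self-contained derivation of that bound, correctly avoiding the $\log k$ loss of the generic Proposition~\ref{prop:eqbased} and preserving one-sided error. One small fix: for roots use a sentinel value outside the vertex set (e.g.\ $p_i(x)=0$, as in Example~\ref{example:disjunctive}) rather than $p_i(x)=x$; with your convention the decoder outputs $1$ on the pair $(x,x)$ whenever $x$ is a root of some $F_i$, which conflicts with the definition quantifying over all pairs $x,y\in V(G)$, and the sentinel costs nothing in your union bound.
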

\noindent
Therefore, to prove Theorem~\ref{thm:adjacency sketching}, it suffices to prove $(1) \implies (3)$,
which we will prove by contrapositive.
%This proof will use the notion of discrepancy from
%communication complexity.
Our proof is inspired by the recent proof of Hambardzumyan, Hatami, and
Hatami \cite{HHH21b}, which refuted a conjecture of \cite{HWZ21} (see Conjecture~\ref{conj:hwz}
below). Our proof also leads to another, more natural counterexample to the conjecture of
\cite{HWZ21}: the class of subgraphs of the hypercube (Corollary~\ref{cor:hypercube
counterexample}).

% Consider a graph $G = (V,E)$, let $f_G : V \times V \to \{0,1,*\}$ be a partial function, and let
% $\mu$ be a probability distribution over $V \times V$ that is supported on pairs $(x,y)$ which
% satisfy $f_G(x,y) \neq *$. Let $X,Y \subseteq V$. Then we define the \emph{discrepancy} of $R = X
% \times Y$ as
% \[
%   \disc_{\mu,f_G}(G,R) = \left| \Pr_{\mu}[ (x,y) \in R \cap f_G^{-1}(1) ] - \Pr_{\mu}[ (x,y) \in R
% \cap f_G^{-1}(0) ]
% \right| \,,
% \]
% \noindent
% where $(x,y)$ is drawn from $\mu$. The discrepancy of $G$ under $\mu$ is defined as
% \[
% \disc_{\mu,f_G}(G) = \max_R \disc_{\mu,f_G}(G,R) \,,
% \]
% where the maximum is over all sets $R = X \times Y$ with $X,Y \subseteq V$.  The following lemma is
% essentially a restatement of a standard lower-bound technique in communication complexity. For
% completeness (and because we are using sketches instead of communication protocols), we give a proof
% in Section~\ref{sec:appA}.

% \begin{lemma}\label{lemma:discrepancy}
% Let $G=(V,E)$ be any graph on $n$ vertices, let $\F$ be any class of graphs containing $G$, and let
% $f$ be a partial function parameterized by graphs in $\F$. Let $\mu$ be any probability distribution
% over $V \times V$ supported on a subset of $\{ (x,y) : f_G(x,y) \neq *\}$. Then any $f$-sketch for
% $\F_n$ has size at least $\frac{1}{2} \log\frac{1}{3 \disc_{\mu,f}(G) }$.
% \end{lemma}

\noindent
A \emph{spanning subgraph} of a graph $G = (V,E)$ is a subgraph of $G$ with vertex set $V$.  Our
next lemma will give a lower bound on the adjacency sketch size for the class $\G$ of spanning
subgraphs of a graph $G$ of minimum degree $d$. We will actually prove the lower bound for a weaker
type of adjacency sketch, which is only required to be correct on pairs $(x,y)$ that were originally
edges in $G$. This stronger statement is not necessary for the current section, but will be used in
the proof of Theorem~\ref{thm:r2rexpansion}.

For a graph $G = (V,E)$ and the class $\G$ of spanning subgraphs of $G$, and any subgraph $H \in
\G$, we will define the partial function $\mathrm{adj}^E_H : V \times V \to \{0,1,*\}$ as
\[
  \mathrm{adj}^E_H(x,y) = \begin{cases}
    \mathrm{adj}_H(x,y) &\text{ if } (x,y) \in E \\
    * &\text{ otherwise.}
  \end{cases}
\]
In the remainder of this section, we view $\mathrm{adj}^E$ as the
function $(\mathrm{adj}^E_H)_{H\in \G}$ parameterized by $H\in \G$. In
particular, an $\mathrm{adj}^E$-sketch for
$\G$ computes  the partial function $\mathrm{adj}^E_H$ for each $H\in \G$.

We thank an anonymous reviewer for suggesting a proof of the following lemma, which simplified and
improved our original discrepancy argument.

\smallskip

%TODO: New version of lemma
\begin{lemma}\label{lemma:min degree}
Let $G = (V,E)$ be a graph of minimum degree $d$, and let $\G$ be the class of spanning subgraphs of
$G$. Then any $\mathrm{adj}^E$-sketch for $\G$ requires size at least $\Omega(d)$.
\end{lemma}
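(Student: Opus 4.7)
The plan is a counting / encoding argument. The class $\G$ has $2^{|E|} \geq 2^{nd/2}$ elements, so if each $H \in \G$ admits a sufficiently accurate deterministic ``approximate encoding'' by a labelling of bit‑length $O(ns)$, then $s$ must be $\Omega(d)$. The two ingredients are (i) extracting a good deterministic labelling from the randomized sketch, and (ii) bounding how many $H \in \G$ can share the same labelling.

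\textbf{Step 1: boost and derandomize per graph.} Start with a sketch of size $s$ and per‑pair error $1/3$. By standard amplification (taking $O(1)$ independent parallel copies and using a majority decoder), I can produce, at the cost of blowing up the sketch size by a constant factor, a new $\mathrm{adj}^E$-sketch whose per‑pair error is at most some constant $\varepsilon < 1/2$ (for definiteness, say $\varepsilon = 1/10$). For each fixed $H \in \G$, let $\mu_H$ be the resulting distribution over labellings $L \colon V \to \{0,1\}^{\le s'}$, where $s' = O(s)$. Linearity of expectation gives
\[
  \esp_{L \sim \mu_H}\Bigl|\bigl\{(x,y) \in E : D(L(x),L(y)) \neq \ind{(x,y)\in E(H)}\bigr\}\Bigr| \le \varepsilon |E|,
\]
so there exists a deterministic labelling $L_H$ in the support of $\mu_H$ whose decoded graph $G_{L_H} \define \{(x,y) \in E : D(L_H(x),L_H(y)) = 1\}$ differs from $E(H)$ on a set $\Delta_H \subseteq E$ of size at most $\varepsilon |E|$.

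\textbf{Step 2: injective encoding.} Map each $H \in \G$ to the pair $(L_H, \Delta_H)$. This map is injective because $D$ and $L_H$ determine $G_{L_H}$, and then $E(H) = G_{L_H} \triangle \Delta_H$. Hence
\[
  |\G| \;\le\; |\{\text{labellings } V \to \{0,1\}^{\le s'}\}| \;\cdot\; \bigl|\{\Delta \subseteq E : |\Delta| \le \varepsilon|E|\}\bigr|.
\]
The number of labellings is at most $2^{(s'+1)n}$, and the number of small subsets is at most $2^{h(\varepsilon)|E|}$, where $h$ is the binary entropy.

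\textbf{Step 3: extract the bound.} Taking logarithms,
\[
  |E| \;\le\; (s'+1)\, n \;+\; h(\varepsilon)\, |E|,
\]
hence $(1-h(\varepsilon))\,|E| \le (s'+1)\, n$. Using the minimum degree hypothesis $|E| \ge nd/2$ and the choice of $\varepsilon$ (so that $1 - h(\varepsilon)$ is a positive constant) gives $s' = \Omega(d)$, and since $s' = O(s)$, also $s = \Omega(d)$.

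The one step that requires a little care, and which I view as the main conceptual point, is the averaging in Step~1: the sketch guarantees error $\le \varepsilon$ separately on each pair, not simultaneously over all pairs, but linearity of expectation is enough to pull out a single deterministic labelling whose \emph{total} edge error is at most $\varepsilon|E|$. Everything after that is a clean entropy‑style count, with the bound $|E| \ge nd/2$ converting the ``edges per labelling'' budget into a ``bits per vertex'' lower bound.
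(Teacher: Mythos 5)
Your proposal is correct and follows essentially the same route as the paper's proof: boost the per-pair error, use linearity of expectation to extract one deterministic labelling per graph $H$ with total edge error at most $\varepsilon|E|$, and then count how many $H\in\G$ can share a labelling (your $(L_H,\Delta_H)$ encoding with the entropy bound is the same count the paper does via $\binom{m}{\delta m}$ ``good strings''). The final arithmetic $|E|\ge nd/2$ yielding $s=\Omega(d)$ matches the paper as well.
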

\begin{proof}
Fix a constant $\delta > 0$ small enough that $\delta\log(e/\delta) \leq 1/2$.  Let $n = |V|$ and
$m=|E|$; we will identify $V$ with $[n]$.  Assume there is an
$\mathrm{adj}^E$-sketch  for $\G$ of size
$s'$; then by standard boosting techniques, there is a sketch of size $s = O(s')$ with error
probability $\delta$ instead of $2/3$. Let $D : \zo^s \times \zo^s \to \zo$ be the decoder for the
sketch.

For every subgraph $H \in \G$, we say a string $\rho \in \zo^{sn}$ is \emph{good for $H$} if the
following holds. Partition $[n] = S_1 \cup \dotsm \cup S_n$ such that $S_i \define \{ (i-1)s+1,
(i-1)s+1, \dotsc, is\}$ is the $i^{th}$ interval of $s$ consecutive indices, and write $\rho_i \in
\zo^s$ for the substring of $\rho$ on indices $S_i$. Then $\rho$ is good for $H$ if
\[
  \left|\left\{ (i,j) : i < j, ij \in E,  D(\rho_i, \rho_j) = H(i,j) \right\}\right|\geq (1-\delta) |E| \,,
\]
where $H(i,j) \in \zo$ is $1$ if and only if $i,j$ are adjacent in $H$. We make two observations:
\begin{enumerate}
\item For every $H \in \G$ there exists $\rho \in \zo^{sn}$ that is good for $H$. Consider the
random string ${\bf \rho}^H \define (\sk(1), \sk(2), \dotsc, \sk(n))$, where $\sk$ is the random
$\mathrm{adj}^E$-sketch of size $s$ for $H$, so that ${\bf \rho}^H \in \zo^{sn}$ with probability 1. Then,
by definition,
\[
  \mathbb{E}\left[ \left|\left\{ (i,j) : i < j, ij\in E,  D({\bf \rho}^H_i, {\bf \rho}^H_j)
      = H(i,j) \right\}\right| \right]
  \geq (1-\delta) |E| \,,
\]
so ${\bf \rho}^H$ is good for $H$ with nonzero probability.  \item Each string $\rho \in \zo^{sn}$
is good for at most $2^{m/2}$ graphs $H \in \G$. This is because, for every string $\rho \in
\zo^{sn}$, we can define $F \in \G$ as the graph obtained by putting each pair $\{i,j\} \in E$, with
$i < j$, adjacent in $F$ if and only if $D(\rho_i, \rho_j) = 1$. If $\rho$ is good for $H$, then
$H(i,j) \neq F(i,j)$ for at most $\delta |E|$ pairs $\{i,j\} \in E$.  Therefore the number of graphs
$H$ such that $\rho$ is good for $H$ is at most
\[
  {m \choose \delta m} \leq
\left(\frac{em}{\delta m}\right)^{\delta m} = 2^{\delta m \log(e/\delta)} \leq 2^{m/2} \,,
\]
by definition of $\delta$.
\end{enumerate}
From these two observations, we have $|\G| \leq 2^{sn} \cdot 2^{m/2}$. Since $|\G| \geq 2^m$, we
conclude that $sn \geq m/2 \geq dn/4$, so $s = \Omega(d)$ and therefore $s' = \Omega(d)$.
\end{proof}

\noindent
We may now complete the proof of Theorem~\ref{thm:adjacency sketching}. We aim to prove $(1)
\implies (3)$, which we will prove by contrapositive: \ie that any class of unbounded arboricity
has non-constant adjacency sketch size. We will prove the following
more precise version, which also completes the proof of Theorem~\ref{thm:intro adjacency}.
\begin{lemma}\label{lem:ask}
  If $\F$ contains a graph of arboricity $d$ and all its subgraphs, then $\F$ requires adjacency sketches of size
  $\Omega(d)$. In particular, if $\F$ is a monotone class of graphs with
  unbounded arboricity, then $\F$ does not admit a
constant-size adjacency sketch.
\end{lemma}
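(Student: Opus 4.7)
The plan is to apply Lemma~\ref{lemma:min degree} to a carefully chosen subgraph of some $G \in \F$ of arboricity $d$. The main step is extracting from $G$ a subgraph $H'$ of minimum degree $\Omega(d)$, then using monotonicity of $\F$ to push the lower bound for spanning subgraphs of $H'$ back to $\F$.

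First, I would use the Nash--Williams arboricity formula, $\Upsilon(G) = \max_{H \subseteq G} \lceil |E(H)|/(|V(H)|-1)\rceil$, to produce a subgraph $H$ of $G$ with $|E(H)| > (d-1)(|V(H)|-1)$. Since $|E(H)| \leq \binom{|V(H)|}{2}$ this forces $|V(H)| \geq 2d-1$, and one computes that the average degree of $H$ is then at least $\Omega(d)$. A standard greedy argument (iteratively delete any vertex of degree less than half the current average) extracts from $H$ a nonempty subgraph $H' \subseteq H \subseteq G$ with minimum degree $\delta(H') = \Omega(d)$.

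Next, since $\F$ contains $G$ together with all its subgraphs, we have $H' \in \F$ and every spanning subgraph of $H'$ lies in $\F$. Any adjacency sketch for $\F$ restricts to an adjacency sketch for the class $\G$ of spanning subgraphs of $H'$, which is in particular an $\mathrm{adj}^{E(H')}$-sketch for $\G$ of the same size. Applying Lemma~\ref{lemma:min degree} to $H'$ yields the lower bound $\Omega(\delta(H')) = \Omega(d)$ on the adjacency sketch size of $\F$, proving the first assertion.

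For the ``in particular'' clause, suppose $\F$ is monotone and that, for contradiction, $\F$ admits an adjacency sketch of some constant size $s_0$. For every $d \in \bN$, unboundedness of arboricity provides some $G_d \in \F$ of arboricity at least $d$; monotonicity ensures $\F$ contains $G_d$ and all its subgraphs, so the first part of the lemma gives a sketch-size lower bound of $\Omega(d)$, contradicting the constant bound $s_0$ for $d$ sufficiently large. I do not expect a serious obstacle here: given Lemma~\ref{lemma:min degree}, the only nontrivial ingredient is the Nash--Williams-plus-degree-pruning step, which is entirely standard.
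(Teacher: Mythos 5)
Your proposal is correct and follows essentially the same route as the paper: extract from a graph of arboricity $d$ a subgraph of minimum degree $\Omega(d)$, observe that all its spanning subgraphs lie in $\F$, and invoke Lemma~\ref{lemma:min degree}. The only cosmetic difference is that you obtain the minimum-degree subgraph via Nash--Williams plus degree pruning, whereas the paper cites the standard fact that degeneracy is within a factor $2$ of arboricity --- the same argument in different clothing.
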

\begin{proof}
It is well-known that the degeneracy of a graph is within factor 2 of
the arboricity, so $\F$ contains a graph $G$ of degeneracy
$\Omega(d)$. By definition, $G$ contains a subgraph $H$ of minimum
degree $\Omega(d)$. Let $\G$ be the class of spanning subgraphs of
$H$. Since $\F$ contains all subgraphs of $G$, it contains all
subgraphs of $H$ and thus $\G \subseteq \F$. Then by Lemma~\ref{lemma:min degree}, any adjacency sketch for $\G$ must
have size $\Omega(d)$, which proves the first part of the result.

If $\F$ is a monotone class of graphs with
  unbounded arboricity, it follows from the paragraph above that for every integer $d$ there is a lower bound of
$\Omega(d)$ on the size of an adjacency sketch for $\F$; it
follows that any adjacency sketch for $\F$ is
of non-constant size.
\end{proof}

Theorem~\ref{thm:intro adjacency} now follows directly from Proposition~\ref{prop:ask} and Lemma~\ref{lem:ask}.

\noindent
As a consequence, we obtain a new counterexample to the following conjecture of \cite{HWZ21}. In
their terminology, a graph class $\F$ is \emph{stable} if there is an absolute constant $k$ such
that any sequence $a_1, b_1, a_2, b_2, \dotsc, a_t, b_t$ of vertices in a graph $G \in \F$, which
satisfies the condition that $a_i, b_j$ are adjacent if and only if $i \leq j$, has length at most
$t \leq k$.

\begin{conj}
\label{conj:hwz}
Let $\F$ be a hereditary graph class which contains at most $2^{O(n \log n)}$
graphs on $n$ vertices, and is stable, Then $\F$ admits a constant-size adjacency sketch.
\end{conj}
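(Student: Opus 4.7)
The final statement is Conjecture~\ref{conj:hwz}, which the paper aims to \emph{refute}. My plan is to propose the monotone class $\mathcal{H}$ of all (labelled) subgraphs of hypercubes $Q_d$, $d\in\bN$, as a counterexample, and to verify the three required properties: (i) $\mathcal{H}$ is stable; (ii) $|\mathcal{H}_n|\leq 2^{O(n\log n)}$; and (iii) $\mathcal{H}$ admits no constant-size adjacency sketch.

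Property (iii) is immediate from Lemma~\ref{lem:ask}: the hypercube $Q_d$ is $d$-regular, hence has arboricity $\Omega(d)$, and $\mathcal{H}$ is monotone and contains $Q_d$ together with all of its subgraphs for every $d$, so $\mathcal{H}$ has unbounded arboricity. For (i), I would give a short geometric argument. Suppose $a_1,b_1,\ldots,a_t,b_t$ is a half-graph realised inside some $Q_d$. Translate so that $a_1=0$; then $a_1\sim b_j$ for every $j$ forces each $b_j$ to be a distinct standard basis vector $e_{\pi(j)}$. If $t\geq 4$, then $a_2$ must be adjacent to $b_2,b_3,b_4$ and not to $b_1$. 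Since any two vertices at Hamming distance $2$ in $Q_d$ have exactly two common neighbors, $a_2\sim b_2,b_3$ forces $a_2\in\{0,\,e_{\pi(2)}+e_{\pi(3)}\}$. The option $a_2=0$ is ruled out by $a_2\not\sim b_1$, while $a_2=e_{\pi(2)}+e_{\pi(3)}$ is at Hamming distance $3$ from $b_4=e_{\pi(4)}$, contradicting $a_2\sim b_4$. Hence $t\leq 3$, proving stability.

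I expect (ii) to be the main obstacle. The cleanest route is to invoke the adjacency labelling scheme of \cite{EHZ22} for $\mathcal{H}$ of size $O(\log n)$, from which the bound $|\mathcal{H}_n|\leq 2^{O(n\log n)}$ follows by standard encoding. Without such a scheme, one would need a direct combinatorial bound on the cubical dimension of $n$-vertex hypercube subgraphs; this is subtle because a generic $n$-vertex subgraph of some $Q_d$ need not embed in any $Q_{O(\log n)}$, so the naive parameter counts do not obviously suffice. Once (ii) is in place, combining it with (i) and (iii) refutes Conjecture~\ref{conj:hwz} using $\mathcal{H}$ as a natural monotone counterexample.
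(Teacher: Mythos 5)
Your proposal refutes the conjecture with the same counterexample the paper uses (the monotone class of all subgraphs of hypercubes) and the same mechanism for non-sketchability (the min-degree/arboricity lower bound of Lemma~\ref{lem:ask}), but the two supporting properties are handled by genuinely different routes. For stability, the paper argues indirectly: a monotone class that is not stable contains $K_{t,t}$ for every $t$, hence all bipartite graphs, hence more than $2^{O(n\log n)}$ graphs on $n$ vertices, so the counting bound forces stability. Your geometric argument is direct and gives an explicit constant, and it is independent of the counting bound, which is a nice feature; but be careful with one step: the class consists of \emph{subgraphs} of $Q_d$, so non-adjacency of $a_2$ and $b_1$ in the graph does not imply their Hamming distance exceeds $1$ --- the edge may simply have been deleted. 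The exclusion of $a_2=0$ should instead be justified by noting that $a_2=0=a_1$ would make $a_2$ and $a_1$ the same vertex, which is impossible since $a_1\sim b_1$ while $a_2\not\sim b_1$ in the graph itself; with that one-line fix your argument is sound (only the distance-$1$ constraints coming from required edges are used elsewhere). For the counting bound, you outsource it to the $O(\log n)$ adjacency labelling scheme of \cite{EHZ22} (follow-up work), whereas the paper gives a more self-contained count: there are at most $2^{O(n\log n)}$ $n$-vertex \emph{induced} subgraphs of hypercubes (via the $O(\log n)$ adjacency labels of \cite{Har20}), and by Graham's bound \cite{Gra70} each such induced subgraph has $O(n\log n)$ edges, hence at most $2^{O(n\log n)}$ spanning subgraphs, so the product is still $2^{O(n\log n)}$. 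Your citation does yield the bound, but the paper's decomposition avoids relying on the later result and is worth knowing; otherwise the two proofs deliver the same refutation.
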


\noindent
We remind the reader that the conjecture was already refuted in \cite{HHH21a}, using an interesting
construction of a graph class that was originally used to establish a ``proof barrier'' in
communication complexity \cite{HHH21b}. Our counterexample, the subgraphs of the hypercube, is more
easily defined. The following bound on the number of subgraphs of the hypercube was observed by
Viktor Zamaraev (personal communication). See \cite{HWZ21} for a definition of \emph{stable}.

\begin{corollary}
\label{cor:hypercube counterexample}
Let $\F$ be a class of subgraphs of the hypercube. Then:
\begin{enumerate}
\item $\F$ is stable, and there are at most $2^{O(n \log n)}$ graphs on $n$ vertices in $\F$.
\item $\F$ is not adjacency sketchable.
\end{enumerate}
\end{corollary}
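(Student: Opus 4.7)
I plan to handle the two items of the corollary separately.

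For item~2, I would verify the two hypotheses of Lemma~\ref{lem:ask}: the class $\F$ of subgraphs of hypercubes is monotone (a subgraph of a subgraph of $Q_d$ is itself a subgraph of $Q_d$), and has unbounded arboricity (since $\F$ contains every $Q_d$, and the arboricity of $Q_d$ is $\lceil d/2 \rceil$, which diverges). The lemma then yields that $\F$ is not adjacency sketchable.

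For the stability half of item~1, I plan to rule out half-graphs of length $t \ge 4$ by a direct Hamming-weight analysis inside the hypercube. Suppose $a_1, b_1, \dotsc, a_t, b_t \in \zo^d$ are distinct with $a_i b_j$ at Hamming distance~$1$ whenever $i \le j$ (the non-adjacencies in the subgraph impose no Hamming constraint). By translating and permuting coordinates, I may assume $a_1 = 0^d$ and $b_j = e_j$ for each $j \in [t]$. The vertex $a_2 \ne a_1$ must then be at Hamming distance~$1$ from both $e_2$ and $e_3$, which forces $a_2 = e_2 \oplus e_3$; but then for $t \ge 4$ the vector $a_2 \oplus b_4 = e_2 \oplus e_3 \oplus e_4$ has Hamming weight~$3$, contradicting Hamming distance~$1$ between $a_2$ and $b_4$. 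Hence $t \le 3$, so $\F$ is stable.

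For the counting half, I would encode each $G \in \F_n$ by a triple $(T, c, S)$ consisting of a spanning forest $T$ of $G$, a coloring $c \colon E(T) \to [n-1]$ recording, for each tree edge, the coordinate in which its endpoints differ under a minimum-dimension hypercube embedding $\phi$ of $G$, and the non-tree edges $S = E(G) \setminus E(T)$. Propagating colors from a canonical root in each component reconstructs $\phi$ from $(T,c)$, and $G = T \cup S$, so each $G$ yields at least one such triple. Given $(T, c)$, the set $S$ is drawn from $E(Q_{n-1}[\phi([n])])$, whose size is at most $\tfrac{n}{2}\log n$ by Harper's edge-isoperimetric inequality. Multiplying the bound on spanning forests ($n^{O(n)}$ by Cayley-type formulas), the $(n-1)^{n-1}$ colorings, and the $2^{(n/2)\log n}$ choices for $S$ yields $|\F_n| \le 2^{O(n\log n)}$. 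The main obstacle I anticipate is justifying that every $G \in \F_n$ embeds into $Q_{n-1}$ in the first place; I plan to resolve this by observing that, viewed as a unit vector in $\mathbb F_2^d$, any coordinate used by a hypercube embedding of $G$ is the XOR of tree-edge colors along the tree path between the endpoints of some non-tree edge, so all used coordinates lie in the $\mathbb F_2$-span of the $n-1$ tree-edge colors, and one may reduce to at most $n-1$ coordinates.
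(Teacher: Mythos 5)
Your proposal is correct, and on items where it matters it takes a genuinely different route from the paper. For item~2 you and the paper do essentially the same thing: apply the machinery of Lemma~\ref{lem:ask} (equivalently Lemma~\ref{lemma:min degree}) to $Q_d$, you via arboricity $\Theta(d)$, the paper via minimum degree $d=\log N$. For the counting half of item~1 the paper does not build an encoding from scratch: it bounds the number of $n$-vertex \emph{induced} subgraphs of hypercubes by $2^{O(n\log n)}$ by invoking the $O(\log n)$ adjacency labelling scheme of \cite{Har20}, and then multiplies by the number of spanning subgraphs of each, using the $O(n\log n)$ edge bound of \cite{Gra70}; your spanning-forest-plus-coordinate-colors encoding is self-contained (needing only the classical edge-isoperimetric bound) and yields as a byproduct the observation that one may assume the embedding uses at most $n-1$ coordinates. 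For stability the paper argues indirectly: a monotone class that is not stable contains every $K_{t,t}$ and hence all bipartite graphs, contradicting the $2^{O(n\log n)}$ count; your direct Hamming-weight analysis is more elementary, gives the explicit bound $t\le 3$, and needs neither monotonicity nor the counting bound, so it applies verbatim to any subclass of subgraphs of hypercubes. One step of your counting argument should be tightened: propagating colors from a canonical root in each component need not give a map that is injective \emph{across} components, so the candidate set for $S$ should not be described as $E(Q_{n-1}[\phi([n])])$ for a single global embedding. Instead, note that every non-tree edge joins two vertices in the same tree of the spanning forest, that the propagated embedding is injective on each component (differences of images along tree paths are nonzero because the true embedding is injective), and that summing the isoperimetric bound $\tfrac{n_C}{2}\log n_C$ over components with $n_C$ vertices still gives at most $\tfrac{n}{2}\log n$ candidate pairs; with that adjustment your count of $2^{O(n\log n)}$ goes through as stated.
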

\begin{proof}
Since the $d$-dimensional hypercube of size $N = 2^d$ has minimum degree $d = \log N$, $\F$ has
non-constant adjacency sketch size. To bound the number of $n$-vertex subgraphs of the hypercubes,
we first observe that there are at most $2^{O(n\log n)}$ \emph{induced} subgraphs of the hypercube
on $n$ vertices, which follows from the $O(\log n)$ adjacency labelling scheme for this class
\cite{Har20} (see a simpler exposition at \cite{Har22}).  It is known that any $n$-vertex induced
subgraph of the hypercube has at most $O(n \log n)$ edges \cite{Gra70}, so each induced subgraph
admits at most $2^{O(n\log n)}$ spanning subgraphs. Therefore the number of $n$-vertex subgraphs of
the hypercube is at most $2^{O(n \log n)} \cdot 2^{O(n \log n)} = 2^{O(n \log n)}$.  Any monotone
class of graphs which is not stable contains $K_{t,t}$, for every $t \in \bN$, and therefore
contains the class of all bipartite graphs.  This does not hold for $\F$ (or indeed for any class
containing at most $2^{O(n \log n)}$ $n$-vertex graphs), so $\F$ must be stable.
\end{proof}

\section{Small-Distance Sketching}
\label{section:small-distance}

In this section we prove Theorem~\ref{thm:intro small-distance}. This requires the notion of
\emph{bounded expansion} which we define in Section \ref{section:bounded-expansion} before stating
the formal version of the theorem in Section~\ref{section:small-distance-theorem} and proving it in
the remainder of the section.

\subsection{Bounded expansion}
\label{section:bounded-expansion}

Here we introduce the notion of expansion from sparsity theory, as
discussed in \cite[Chapter 5]{NO12}. We will require some
equivalence results stated in Theorem~\ref{thm:expansion equivalents}.

\begin{definition}[Bounded Expansion]\label{def:boundedexp}
Given a graph $G$ and an integer $r\ge 0$, a \emph{depth-$r$ minor} of $G$ is a graph obtained by
contracting pairwise disjoint  connected subgraphs of radius at most $r$ in a subgraph of $G$.  For
any function $f$, we say that a class of graphs $\mathcal{G}$ has \emph{expansion} at most $f$ if
any depth-$r$ minor of a graph of $\mathcal{G}$ has average degree at
most $f(r)$ (see~\cite[Section 5.5]{NO12}
for more details on this notion).  We say that a class $\mathcal{G}$ has \emph{bounded expansion} if
there is a function $f$ such that $\mathcal{G}$  has expansion at most $f$.
\end{definition}

Note that, for example, every proper minor-closed family has constant
expansion. We now introduce two other equivalent ways to define bounded
expansion:   via generalized coloring numbers, and via bounded depth
topological minors (both we be useful for our purposes). 

\begin{definition}[Weak $r$-coloring number]
\label{def:weak reachability}
Given a total order $(V,<)$ on the vertex set $V$ of a graph $G$ and
an integer $r\ge 0$, we
say that a vertex  $v\in V$ is \emph{weakly $r$-reachable} from a vertex
$u\in V$ if there is a path of length at most $r$ connecting $v$ to
$u$ in $G$, and such that for any vertex $w$ on the path, $v\le w$ (in
words, $v$ is the smallest vertex on the path with respect to
$(V,<)$). For a graph $G$ and an integer $r\ge 0$, the \emph{weak
  $r$-coloring number}
$\wcol_r(G)$ is the smallest integer $k$ for which the vertex set of $G$ has a total order $(V,<)$ such
that for any vertex $u\in V$, at most $k$ vertices are weakly $r$-reachable
from $u$ with respect to $(V,<)$. For a graph class $\mathcal{F}$, we
write $\wcol_r(\mathcal{F})$ for the supremum of $\wcol_r(G)$, for $G \in \mathcal{F}$.
\end{definition}

\begin{definition}[$(k,\ell)$-Subdivisions]
For a graph $G$ and two integers $0\le k\le \ell$, a $(k,
\ell)$-subdivision of $G$ is any graph obtained from $G$ by
subdividing each edge of $G$ at least $k$ times and at most $\ell$
times (\ie we replace each edge of $G$ by a path with at least $k$
and an most $\ell$ internal vertices). A $(k,k)$-subdivision is also
called a \emph{$k$-subdivision} for simplicity;
\end{definition}

\begin{definition}[Depth-$r$ Topological Minor]
We say that $H$ is a \emph{depth-$r$ topological minor} of a graph $G$
if $G$ contains a $(0, 2r)$-subdivision of $H$ as a subgraph. In other
words, $G$ contains a subgraph $H'$ obtained from $H$ by subdividing
each edge at most $2r$ times.
\end{definition}
In the proof below it will be convenient to use the following equivalent definitions of bounded
expansion. 

\begin{thm}[\cite{NO12}]\label{thm:expansion equivalents}\label{thm:no}\label{thm:be}
For a class $\mathcal{F}$ of graphs, the following are equivalent:
\begin{enumerate}
\item $\mathcal F$ has bounded expansion.
\item \label{thm:ee wcol}
There is a function $f : \mathbb N \to \mathbb N$ such that for any $r \in \mathbb N$,
$\wcol_r(\mathcal F) \leq f(r)$.
\item \label{thm:ee top}  There is a function $f : \mathbb N \to \mathbb N$ such that for any $r \in
\mathbb N$ and any $G \in \mathcal F$, any depth-$r$ topological minor of $G$ has average degree at
most $f(r)$.
\end{enumerate}
\end{thm}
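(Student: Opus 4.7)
The plan is to prove the two equivalences $(1) \Leftrightarrow (3)$ and $(1) \Leftrightarrow (2)$ separately, since each involves quite different techniques. Throughout, it is convenient to introduce the shorthand $\nabla_r(G)$ for half the maximum average degree over depth-$r$ minors of $G$ and $\widetilde\nabla_r(G)$ for half the maximum average degree over depth-$r$ topological minors, so that $(1)$ says $\nabla_r(\mathcal F)$ is uniformly bounded and $(3)$ says $\widetilde\nabla_r(\mathcal F)$ is uniformly bounded.

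For $(1) \Leftrightarrow (3)$: the direction $(1) \Rightarrow (3)$ is essentially definitional, since a $(0,2r)$-subdivision realizes each branch set as a path of length at most $2r+1$, hence of radius at most $r$, so every depth-$r$ topological minor is in particular a depth-$r$ minor and $\widetilde\nabla_r(G) \leq \nabla_r(G)$. The nontrivial direction $(3) \Rightarrow (1)$ uses the classical theorem of Bollob\'as--Thomason / Koml\'os--Szemer\'edi that every graph of average degree $d$ contains a subdivision of a graph of minimum degree $\Omega(\sqrt d)$. Applied inside each branch set (together with the paths realizing edges between branch sets) this converts a dense depth-$r$ minor of $G$ into a depth-$O(r)$ topological minor of $G$ of average degree $\Omega(\sqrt{\nabla_r(G)})$, so a polynomial bound on $\widetilde\nabla_r$ uniformly in $G \in \mathcal F$ yields a polynomial bound on $\nabla_r$.

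For $(1) \Leftrightarrow (2)$: the easier direction is $(2) \Rightarrow (1)$. Given $G \in \mathcal F$ with $\wcol_r(G) \leq k$ and a depth-$r$ minor $H$, pick for each branch set $B_v$ a center $c_v$ in $G$ from which $B_v$ has radius at most $r$. Each edge of $H$ arises from a path of length at most $2r+1$ in $G$ between two such centers. I would then orient each such path toward its minimum with respect to the total order witnessing $\wcol_r(G) \leq k$, and charge each edge of $H$ to this minimum; the key point is that each center is the minimum along at most $k$ such short paths because any center on a path of length $\le 2r+1$ lies in the weak $(2r+1)$-reachable set. This bounds $|E(H)| \leq k \cdot |V(H)|$, giving a uniform bound on $\nabla_r(G)$ in terms of $\wcol_{2r+1}(G)$. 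The hard direction is $(1) \Rightarrow (2)$: bounded expansion implies bounded weak coloring numbers. The natural strategy, following Zhu and Ne\v{s}et\v{r}il--Ossona de Mendez, is to build the desired order iteratively via \emph{transitive fraternal augmentations}: one successively adds edges between vertices with a common in-neighbor, maintaining a low-in-degree orientation (possible exactly because $\nabla_r$ is bounded), and then extracts an order by repeatedly removing a vertex of minimum in-degree in the final augmentation. A careful accounting shows the weakly $r$-reachable set of any vertex is bounded in size by a function of $\nabla_{2^r}(G)$.

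The main obstacle is the implication $(1) \Rightarrow (2)$: producing the order requires a non-trivial structural argument (fraternal augmentations, or equivalently the existence of low tree-depth colorings with a bounded number of colors), and extracting an explicit quantitative bound on $\wcol_r$ in terms of $\nabla_{\cdot}$ is delicate. For the purposes of this paper, however, I would simply cite \cite{NO12} for this implication, since the quantitative form is not needed beyond the fact that $\wcol_r(\mathcal F)$ is bounded for each fixed $r$ whenever $\mathcal F$ has bounded expansion.
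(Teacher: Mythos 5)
This theorem is not proved in the paper at all: it is imported wholesale from \cite{NO12} (the paper's only ``proof'' is the attribution of $(1)\Leftrightarrow(3)$ to Proposition 5.5 and of $(1)\Leftrightarrow(2)$ to Lemma 7.11 and Theorem 7.11 of \cite{NO12}), so citing the book, as you do for $(1)\Rightarrow(2)$, would have been entirely in the spirit of the paper. Judged on its own terms, however, your sketch has a genuine gap in the direction $(3)\Rightarrow(1)$. The Bollob\'as--Thomason / Koml\'os--Szemer\'edi theorem cannot be ``applied inside each branch set'': branch sets are merely connected subgraphs of radius at most $r$ and may be trees with no density whatsoever. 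The only sensible place to apply it is to the dense depth-$r$ minor $H$ itself, but the subdivision of a graph of minimum degree $\Omega(\sqrt d\,)$ that it produces lives in $H$ and its subdivision paths may be arbitrarily long; lifting it back to $G$ replaces every $H$-vertex on such a path by a branch set, so the subdivision paths in $G$ have length unbounded in terms of $r$, and the resulting topological minor is \emph{not} of depth $O(r)$. The actual implication (Dvo\v r\'ak; Proposition 5.5 in \cite{NO12}) is proved by induction on the depth and gives a bound of the shape $\nabla_r(G)\le 4\bigl(4\widetilde\nabla_r(G)\bigr)^{(r+1)^2}$, with exponent depending on $r$ rather than the square-root relation at depth $O(r)$ you assert; your one-sentence reduction does not establish it.

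The direction $(2)\Rightarrow(1)$ has the right idea (the minimum of the short center-to-center path is weakly reachable) but the charging as written does not work: the minimum $z$ of the path between the centers $c_u$ and $c_v$ is a vertex of $G$, generally not a center, and $\wcol_r$ bounds how many vertices are weakly reachable \emph{from} a given vertex, not how many paths have a given vertex as their minimum, so ``each center is the minimum along at most $k$ such paths'' is not the statement you need. The standard fix: $z$ lies in exactly one of the two branch sets, say $B_u$; charge the edge $uv$ of $H$ to $v$. Distinct edges charged to $v$ yield distinct minima (branch sets are disjoint), each weakly $(2r+1)$-reachable from $c_v$, so at most $\wcol_{2r+1}(G)$ edges are charged to each vertex of $H$ and $|E(H)|\le \wcol_{2r+1}(G)\,|V(H)|$. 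Finally, in $(1)\Rightarrow(3)$ the branch sets are not paths of length $2r+1$ (which would have radius $r+1$); one splits the internal vertices of each subdivision path between its two endpoints, so each branch set is a star of paths of length at most $r$ centered at the branch vertex, hence of radius at most $r$. These two points are repairable; the $(3)\Rightarrow(1)$ step is the real gap.
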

\noindent
The equivalence between 1.\ and 3.\ follows from Proposition
5.5 in~\cite{NO12}; while the equivalence between 1.\ and 2.\ follows
from Lemma 7.11 and Theorem 7.11 in~\cite{NO12}.

\noindent
We will also require the following fact about the expansion of monotone classes, which is a simple
consequence of Theorem~\ref{thm:no} (see for instance \cite{NO15}) combined with a result of K\"uhn
and Osthus~\cite{KO04}.

\begin{corollary}\label{cor:no}
Let $\mathcal{F}$ be a monotone class of unbounded expansion. Then
there is a constant $r\ge 0$, so that for any $d\ge 0$, $\mathcal{F}$
contains an $r$-subdivision of a bipartite graph of minimum degree at
least $d$ and girth at least 6.
\end{corollary}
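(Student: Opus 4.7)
The plan is to locate a $(0,2r_0)$-subdivision of a dense graph inside a member of $\mathcal{F}$ via Theorem~\ref{thm:no}, pass to a bipartite subgraph of high minimum degree, pigeonhole on the path lengths to make the subdivision uniform, and then split into two cases according to whether this common length is $1$ (no subdivision) or $\geq 2$. In the second case the girth comes for free from the subdivision, and in the first case I will need to invoke the result of K\"uhn and Osthus~\cite{KO04} directly.

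Concretely, I would first apply Theorem~\ref{thm:no}(\ref{thm:ee top}): since $\mathcal{F}$ has unbounded expansion, there is a constant $r_0$ (depending only on $\mathcal{F}$) such that depth-$r_0$ topological minors of graphs in $\mathcal{F}$ attain unbounded average degree. Thus for every $D\in\bN$ there is $G_D\in\mathcal{F}$ containing a subgraph $S_D$ which is a $(0,2r_0)$-subdivision of some graph $H_D$ of average degree at least $D$. Iteratively deleting low-degree vertices produces a subgraph of minimum degree $\geq D/2$, and then passing to a bipartite subgraph (and pruning again) yields a bipartite $H_D$ of minimum degree $\geq D/8$ whose $(0,2r_0)$-subdivision is still inside $S_D\subseteq G_D$.

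Now partition the edges of $H_D$ according to the length of their corresponding path in $S_D$, which lies in $\{1,\dotsc,2r_0+1\}$. By pigeonhole, some length $\ell^*(d)$ accounts for at least a $1/(2r_0+1)$-fraction of the edges of $H_D$; after a further degeneracy pruning I obtain a bipartite subgraph $H'_D\subseteq H_D$ of minimum degree $\geq D/(16(2r_0+1))$ whose exact $(\ell^*(d)-1)$-subdivision lies inside $S_D\subseteq G_D$, and therefore in $\mathcal{F}$ by monotonicity. If $\ell^*(d)\geq 2$, then the $(\ell^*(d)-1)$-subdivision of the bipartite graph $H'_D$ has girth at least $4\ell^*(d)\geq 8>6$, so $r=\ell^*(d)-1$ already works. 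If $\ell^*(d)=1$, then $H'_D$ itself is in $\mathcal{F}$, witnessing bipartite graphs of arbitrarily large minimum degree in $\mathcal{F}$, and the theorem of K\"uhn and Osthus in \cite{KO04} then provides, inside $H'_D$, a bipartite subgraph of minimum degree at least $d$ and girth at least $6$, so $r=0$ works.

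A final pigeonhole picks a single value $r\in\{0,1,\dotsc,2r_0\}$ that is realized (as $\ell^*(d)-1$ or as $0$) for arbitrarily large $d$, and this $r$ is the constant promised by the corollary. The main obstacle I expect is the $\ell^*(d)=1$ branch: one needs the statement in \cite{KO04} to yield a bipartite $C_4$-free subgraph of minimum degree $d$ inside an arbitrary graph of sufficiently large minimum degree, and while this is well known, the theorems of K\"uhn and Osthus are typically phrased for topological minors rather than subgraphs, so some care is needed to avoid introducing further subdivisions that would force yet another pigeonhole round.
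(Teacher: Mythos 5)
Your overall route (Theorem~\ref{thm:no}, pruning to a dense bipartite base graph, pigeonholing on the path lengths to extract an exact $r'$-subdivision, a final pigeonhole over $d$ to fix a single $r$, monotonicity) is the same as the paper's, but your case split introduces a genuine gap in the branch $\ell^*(d)\geq 2$. The corollary asserts that $\mathcal{F}$ contains an $r$-subdivision of a bipartite graph that \emph{itself} has minimum degree at least $d$ and girth at least $6$; the girth condition is on the base graph, not on the subdivided graph. In your second case you only verify that the subdivision has girth at least $4\ell^*\geq 8$, while your base graph $H'_D$ may well contain $4$-cycles, so you have not produced what the statement requires. This is not a cosmetic issue: the downstream use in Theorem~\ref{thm:r2rexpansion} needs girth at least $6$ in the base graph $G$ precisely so that two $G$-adjacent vertices that are non-adjacent in a spanning subgraph $H$ satisfy $\dist_H(u,v)\geq 5$, which is where the constant $5$ in the $(r,5r-1)$ threshold comes from; a base graph of girth $4$ would only give $\dist_H(u,v)\geq 3$.

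The repair is exactly the step you tried to avoid: invoke K\"uhn--Osthus on the base graph in \emph{every} case, as the paper does with no case distinction. Their result, as used here, is a subgraph statement (any graph of sufficiently large average degree contains a bipartite subgraph of large minimum degree and girth at least $6$), so your worry about it being phrased for topological minors is unfounded, and your $\ell^*(d)=1$ branch is fine as written. Applying it to $H'_D$ (which has large average degree) yields a bipartite subgraph $B\subseteq H'_D$ of minimum degree at least $d$ and girth at least $6$; the exact $(\ell^*(d)-1)$-subdivision of $B$ is a subgraph of the $(\ell^*(d)-1)$-subdivision of $H'_D$, hence lies in $\mathcal{F}$ by monotonicity, and the final pigeonhole over $d$ then goes through unchanged.
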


\begin{proof}
Since $\mathcal{F}$ has unbounded expansion, it follows from
Theorem~\ref{thm:no} that there exists an $r\ge 0$,
such that depth-$r$ topological minors of graphs in $\mathcal{F}$ have unbounded
average degree. Since each edge in a depth-$r$ topological minor is
subdivided at most $2r$ times, if $\mathcal{F}$ contains a depth-$r$
topological minor $H$ of average degree at least $d$, $\mathcal{F}$
also contains an $r'$-subdivision of a subgraph $H'$ of $H$, for some
$r'\le 2r$, such that $H'$ has average degree at least
$\tfrac{d}{2r+1}$ (recall that $\mathcal{F}$ is  monotone). It follows
that there exists an integer $r''\le 2r$ such that for infinitely many
$d$, $\mathcal{F}$
contains an $r''$-subdivision of a graph of average degree at least
$d$. It was proved by K\"uhn and Osthus~\cite{KO04} that any graph of
sufficiently  large average degree
contains a bipartite subgraph of large minimum
degree and girth at least 6. As $\mathcal{F}$ is  monotone, the desired result follows.
\end{proof}

We should remark that the weaker version of Corollary \ref{cor:no} where the
girth at least 6 is replaced by girth at least 4 is much simpler and
does not require the result of K\"uhn and Osthus~\cite{KO04}: it
suffices to use the simple result that any graph $G$ of large average
degree contains a bipartite graph of large average degree as a
subgraph (consider for instance a random bipartition of $G$).

\subsection{Statement of Theorem \ref{thm:intro small-distance}}
\label{section:small-distance-theorem}

As in Theorem~\ref{thm:adjacency sketching} from the previous section, we refine the theorem by
showing that the sketches are in fact disjunctive.

\begin{thm}
\label{thm:small-distance sketching}
Let $\F$ be a monotone class of graphs. Then the following are equivalent:
\begin{enumerate}
\item $\F$ is small-distance sketchable.
\item For some function $f : \bN \to \bN$ and every $r \in \bN$, $\F$ admits a disjunctive
small-distance labelling scheme of size $f(r)$.
\item $\F$ is first-order sketchable. 
\item $\F$ has bounded expansion.
\end{enumerate}
\end{thm}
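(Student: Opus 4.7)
The plan is to prove the four-way equivalence by a cycle of implications. The two easiest implications come for free: $(3) \Rightarrow (1)$ holds because the predicate $\dist_G(x,y) \le r$ is expressible by the first-order formula exhibited at the end of Section~\ref{section:sketching definitions}, and $(2) \Rightarrow (1)$ is immediate from Proposition~\ref{prop:eqbased}. So the work is in $(4) \Rightarrow (2)$, $(4) \Rightarrow (3)$, and $(1) \Rightarrow (4)$. Only the last of these needs monotonicity.

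For $(4) \Rightarrow (2)$, I would work directly with the weak coloring number. Given $G \in \F$ with $\wcol_r(G) \le k := f(r)$, fix an ordering realizing this bound and, for each vertex $u$, let $L(u)$ be the set of vertices weakly $r$-reachable from $u$, together with their distances $\dist_G(u,v) \in \{0,1,\dots,r\}$. The key combinatorial fact (and the reason weak reachability is useful) is that if $\dist_G(x,y) \le r$ then the minimum vertex $v$ on a shortest $x$-$y$ path lies in $L(x) \cap L(y)$, with $\dist_G(x,v) + \dist_G(v,y) \le r$. I would encode each label as a $(0,k,k)$-equality-based structure: one entry $(p_i \mid q_i)$ for each $v \in L(u)$, where the prefix $p_i$ is the distance $\dist_G(u,v)$ written in $O(\log r)$ bits and the equality code $q_i$ is an identifier of $v$. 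The decoder outputs the disjunction, over all pairs of entries from $\ell(x),\ell(y)$ whose prefix-distances sum to at most $r$, of the equality check between their $v$-codes. This is genuinely disjunctive in the sense of Section~\ref{sec:dis}, gives size $f(r)$ by Theorem~\ref{thm:expansion equivalents}(\ref{thm:ee wcol}), and yields the corollary on planar and $K_t$-minor-free graphs via known bounds on $\wcol_r$.

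For $(4) \Rightarrow (3)$, I would invoke the structural characterization of bounded expansion classes from \cite{GKN+20}. That result provides, for any fixed FO formula $\phi(x,y)$, a decomposition of each $G \in \F$ into a bounded number of components each computable via low-complexity local data (essentially low-treedepth colorings with bounded parameters), such that $\phi(x,y)$ can be evaluated from constant-depth neighborhoods together with finitely many label-colors. These local components are themselves first-order expressible in distance-bounded neighborhoods, so sketching them reduces to the already-constructed small-distance sketches composed with a constant-size bookkeeping over the decomposition. The main obstacle here is reading carefully into \cite{GKN+20} to extract exactly the structural form that makes the sketching reduction go through; beyond that, the argument is essentially assembly.

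The hardest step is $(1) \Rightarrow (4)$, which crucially uses monotonicity. I proceed by contrapositive: assume $\F$ is monotone with unbounded expansion. Corollary~\ref{cor:no} furnishes a fixed $r \ge 0$ such that, for every $d$, the class $\F$ contains an $r$-subdivision $H_d$ of a bipartite graph $G_d$ of minimum degree $\ge d$ and girth $\ge 6$. Set $R := r+1$. In $H_d$, two original vertices at distance $1$ in $G_d$ lie at distance exactly $R$, while any two non-adjacent original vertices lie at distance at least $2R$, because any alternative path must traverse at least two entire subdivided edges. Thus a distance-$(R,R)$ sketch for $H_d$ decides adjacency on the original vertices of $G_d$. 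By monotonicity, for every spanning subgraph $G'_d$ of $G_d$, the graph $(G'_d)^{(r)}$ obtained by $r$-subdividing its edges lies in $\F$, and the same sketch applied to its original vertices decides adjacency in $G'_d$. Hence a size-$f(R)$ small-distance sketch for $\F$ yields an $\mathrm{adj}^E$-sketch of size $f(R)$ for the class of all spanning subgraphs of $G_d$, which by Lemma~\ref{lemma:min degree} forces $f(R) = \Omega(d)$; letting $d \to \infty$ gives the contradiction and completes the cycle.
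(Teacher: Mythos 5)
Your implications $(2)\Rightarrow(1)$, $(3)\Rightarrow(1)$, $(4)\Rightarrow(2)$ and $(1)\Rightarrow(4)$ are correct and essentially coincide with the paper's proof. In $(4)\Rightarrow(2)$ you store $\dist_G(u,v)$ as a prefix, whereas the paper sorts the weakly reachable vertices by their $x$-rank into vectors $\vec q_1,\dotsc,\vec q_r$ and gets a genuinely prefix-free $(0,r,\wcol_r(\F))$-scheme; your scheme is therefore not literally $(0,k,k)$ as claimed, but this is cosmetic and the size is still a function of $r$ alone. Your $(1)\Rightarrow(4)$ is the paper's Theorem~\ref{thm:r2rexpansion} specialized to exact distance-$(R,R)$ sketches; note that for exact sketches you never need girth $6$, since non-adjacency already forces distance at least $2R>R$ in the subdivision --- the paper uses girth $6$ (via \cite{KO04}) precisely to prove the stronger statement that even $(r,5r-1)$-distance sketches of size $f(r)$ force bounded expansion.

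The genuine gap is $(4)\Rightarrow(3)$. What \cite{GKN+20} provides, and what the paper actually uses, is not the locality statement you describe (``$\phi(x,y)$ can be evaluated from constant-depth neighborhoods together with finitely many label-colors''); it is the characterization of \emph{structurally} bounded expansion by covers: for every $p$ there is a family of at most $m(p)$ vertex subsets of each $G$ such that every set of at most $p$ vertices lies in one of them and the induced subgraphs $G[A]$ form a class of bounded shrubdepth (Theorem~\ref{thm:gkn+20}). Applied with $p=2$, and combined with a constant-size equality-based \emph{adjacency} labelling scheme for bounded shrubdepth built from connection models (Lemma~\ref{lemma:shrubdepth}), this gives adjacency labels for any class of structurally bounded expansion (Lemma~\ref{lemma:sbe}); first-order sketchability then follows because FO transductions compose, so sketching an arbitrary $\phi(x,y)$ on a class of (structurally) bounded expansion reduces to sketching adjacency on another class of structurally bounded expansion (Corollary~\ref{cor:fosketch}). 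Your proposed reduction of $\phi$ to ``the already-constructed small-distance sketches'' does not go through: even after a Gaifman-style localization one must evaluate arbitrary FO properties of bounded-radius neighborhoods of the pair (and of the added unary predicates), which is not a distance-threshold query, and you explicitly defer extracting the needed structure from \cite{GKN+20}. As written, the implication carrying the first-order content of the theorem is asserted rather than proved, while the remaining three quarters of the argument are sound.
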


\noindent
It holds by definition that $(3) \implies (1)$ and $(2) \implies (1)$, even without the assumption
of monotonicity. We will prove $(4) \implies (3)$ and $(4) \implies (2)$ using different methods.
We prove $(4) \implies (3)$ (again without the assumption of monotonicity) in
Section~\ref{section:first-order sketching} using the structural result of \cite{GKN+20}. This proof
does not give explicit bounds on the sketch size. $(4) \implies (2)$ is proved in
Section~\ref{section:small-distance upper bounds} and gives explicit upper bounds on the sketch
size.  The final piece of the theorem, $(1) \implies (4)$, is proved in
Section~\ref{section:small-distance lower bounds}. 

In Section~\ref{section:large r}, we present small-distance sketches for graphs embeddable on a
given surface, with sketch size that has a smaller dependence on $r$ than in
Section~\ref{section:small-distance upper bounds}, but some dependence on the graph size $n$. This
is desirable when $r$ is large (as a function of $n$).

\subsection{Bounded Expansion Implies FO Labelling Schemes}
\label{section:first-order sketching}

To prove that any class of bounded expansion is first-order sketchable, we use the result of
\cite{GKN+20} that shows how to decompose any class of (structurally) bounded expansion into a
number of graphs of bounded shrubdepth. We will require an adjacency
sketch for classes of bounded shrubdepth,
given below.

\subsubsection{Adjacency Sketching for Bounded Shrubdepth}

We must first define shrubdepth.
A {\em{connection model}} for a graph $G$ is a rooted tree $T$ whose nodes are colored with some
number $k$ of colors, such that:
\begin{itemize}
\item the vertices of $G$ are  the leaves of $T$; and
\item for two vertices $u,v\in V(G)$, whether $u$ and $v$ are
adjacent in $G$ depends only on the colors of $u$ and $v$  in~$T$, and the color of the lowest
common ancestor of $u$ and $v$ in $T$.
\end{itemize}
To avoid ambiguity, we say $G$ has \emph{vertices} while $T$ has \emph{nodes}.
Note that we can assume without loss of generality that all leaves are at the same distance from the
root in $T$.  A class $\mathcal{G}$ has \emph{bounded shrubdepth} if there are some $d,k\in
\mathbb{N}$ such that every $G\in\mathcal{G}$ has a connection model of depth $d$ with colors in
$[k]$ (we recall that the \emph{depth} of a rooted tree $T$ is the
maximum number of edges on a root-to-leaf path in $T$). The reader is
referred to \cite{GHNOMR12,GHNOM19} for more details on shrubdepth, and a number
algorithmic applications in model checking.

\begin{lemma}\label{adj:shrubdepth}
\label{lemma:shrubdepth}
Any class $\mathcal G$ of bounded shrubdepth admits a constant-size equality-based adjacency
labelling scheme.
\end{lemma}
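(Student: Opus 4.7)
The plan is to exploit the defining property of a connection model: adjacency in $G$ is determined by a function $\phi_G \colon [k]^3 \to \zo$ of the colors of $u$, $v$, and their lowest common ancestor in the tree $T_G$. Fix $d, k$ such that every $G \in \mathcal G$ has a connection model $T_G$ of depth $d$ with coloring $c \colon V(T_G) \to [k]$, and assume (as allowed) that all leaves of $T_G$ lie at depth exactly $d$. Then I only need each vertex to broadcast enough information for the decoder to recover $c(u)$, $c(v)$, and $c(\mathrm{lca}_{T_G}(u,v))$.

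For each vertex $v$, let $a_0(v), a_1(v), \ldots, a_d(v) = v$ be the ancestors of $v$ in $T_G$, listed from the root down. I will label $v$ with a prefix consisting of (i) the table of $\phi_G$, using $2^{k^3} = O(1)$ bits, and (ii) the color vector $(c(a_0(v)), \ldots, c(a_d(v))) \in [k]^{d+1}$, using $O(d \log k) = O(1)$ bits, together with $d+1$ equality codes given by applying any injective identifier $\mathrm{id} \colon V(T_G) \to \bN$ to $a_0(v), \ldots, a_d(v)$. Both the prefix size and the number of equality codes are bounded by constants depending only on $d$ and $k$.

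The decoder, on inputs $\ell_G(u), \ell_G(v)$, first locates $\ell \define \mathrm{depth}(\mathrm{lca}_{T_G}(u,v))$ using only equality checks: since distinct nodes of $T_G$ have distinct identifiers and two ancestors at the same depth are either equal or incomparable, $\ell$ is the largest index $i$ for which $\mathrm{id}(a_i(u)) = \mathrm{id}(a_i(v))$, which is a diagonal comparison on the equality-code matrix $Q_{u,v}$. Reading $c(a_\ell(u))$ off the prefix then yields the color of the LCA, and the decoder returns $\phi_G(c(u), c(v), c(a_\ell(u)))$, which by the definition of a connection model coincides with the adjacency of $u$ and $v$ in $G$.

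The main conceptual step is the reduction of LCA-finding in a tree of bounded depth to a constant number of equality comparisons on ancestor identifiers; the rest of the construction (bounding prefix size, checking that $\phi_G$ fits in the prefix, and verifying that ancestors at different depths are automatically distinct) is routine, so I do not anticipate any serious obstacle beyond setting up this observation.
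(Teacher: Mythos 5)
Your proposal is correct and takes essentially the same route as the paper's proof: store the table of $\varphi_G$ and the ancestor colors in the prefix, use the (at most $d+1$) ancestor identifiers as equality codes, and locate the lowest common ancestor by depth-wise equality comparisons, then output $\varphi_G$ applied to the two leaf colors and the LCA color. The only cosmetic difference is that you index ancestors root-to-leaf and take the deepest match, while the paper indexes leaf-to-root and takes the first match --- these are the same observation.
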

\begin{proof}
Let $d,k$ be such that any graph $G\in \mathcal{G}$ has a connection model $T_G$ of depth $d$ using
color set $[k]$. We denote by $\varphi_G:[k]^3\to \{0,1\}$ the function such that if $u$ has color
$a$, $v$ has color $b$, and the lowest common ancestor of $u$ and $v$ has color $c$ in $T_G$, then
$u$ and $v$ are adjacent in $G$ if and only if $\varphi_G(a,b,c)=1$. For every node $u$ of $T_G$,
write $\chi(u)$ for the color of $u$ in the connection model.

We now construct our equality-based labels for $G$. For any vertex $x$, let $t_0(x), t_1(x), \dotsc,
t_d(x)$ be the leaf-to-root path for $x$, where $t_0(x)=x$ and $t_d(x)$ is the root of $T_G$. Then the
label for $x$ is the sequence $(\varphi_G \mid -), (\chi(t_0(x)) \mid t_0(x)), \dotsc, (\chi(t_d(x))
\mid t_d(x))$.

On inputs
\begin{align*}
(\varphi_G \mid -), (\chi(t_0(x)) \mid t_0(x)), \dotsc, (\chi(t_d(x)) \mid t_d(x)) \,,\\
(\varphi_G \mid -), (\chi(t_0(y)) \mid t_0(y)), \dotsc, (\chi(t_d(y)) \mid t_d(y)) \,,
\end{align*}
the decoder operates as follows. It finds the smallest $i \in [d]$ such that $\ind{t_i(x)=t_i(y)}$
and outputs $\varphi_G(\chi(t_1(x)), \chi(t_1(y)), \chi(t_i(x)))$.

The correctness of this labelling scheme follows from the fact that we will have $t_i(x) = t_i(y)$
if and only if the node $t_i(x) = t_i(y)$ is an ancestor of both $x$ and $y$ in $T_G$, so the smallest
$i \in [d]$ such that $t_i(x)=t_i(y)$ identifies the lowest common
ancestor of $x$ and $y$  in $T_G$.
\end{proof}

\subsubsection{Structurally Bounded Expansion Implies First-Order Sketching}

Following \cite{GKN+20}, we say that a class of graphs has \emph{structurally bounded expansion} if
it can be obtained from a class of bounded expansion by first-order
(FO) transductions. We omit the precise definition of FO transductions
in this paper, as they are not necessary to our discussion, and instead
refer the reader to \cite{GKN+20}. We just note that a particular case of
FO transduction is the notion of \emph{FO interpretation}, which is of
specific interest to us. Consider an FO formula $\phi(x,y)$ with two free variables and relational vocabulary $\Sigma = \{ F,
R_1, \dotsc, R_k \}$ where $F$ is symmetric of arity 2. We will say
that a graph class $\F'$ is an FO interpretation of a graph class $\F$
with respect to $\phi$
if for any graph $G'=(V,E')\in \F'$ there is a graph $G=(V,E)\in \F$ and a $\Sigma$-structure with domain $V$ where $E$ is the interpretation of the
symbol $F$, such that for any pair $u,v\in V$, $uv\in E'$ if and only
if $\phi(u / x, v / y)$ evaluates to true. For instance, if
$\phi(u/x,v/y)$ encodes the property  $\dist_G(u,v)\le r$ for some fixed
integer $r\ge 1$ (which can
be written as an FO formula), then the corresponding FO interpretation of the class
$\F$ is the class of all graph powers $\{G^r\,|\, G\in F\}$. FO
transductions are slightly more involved, as it is allowed to consider
a bounded number of copies of a graph before applying the formula, and
then it is possible to delete vertices. We will use
the following structural result for classes of structurally bounded expansion, proved in
\cite{GKN+20}.

\begin{thm}[\cite{GKN+20}]\label{thm:gkn+20}
 A class $\mathcal{G}$ of graphs  has structurally bounded expansion if and only if the following condition holds.
  For every $p\in \mathbb{N}$, there is a constant $m=m(p)$ such that for every graph $G\in \mathcal{G}$, one can find a family $\mathcal{F}(G)$ of vertex subsets of $G$ with $|\mathcal{F}(G)|\leq m$ and the following properties:
 \begin{itemize}
  \item for every $X\subseteq V(G)$ with $|X|\leq p$, there is $A\in \mathcal{F}(G)$ such that $X\subseteq A$; and
  \item the class $\{G[A]\,|\, G\in \mathcal{G}, A\in \mathcal{F}(G)\}$ of induced subgraphs has bounded shrubdepth.
 \end{itemize}
\end{thm}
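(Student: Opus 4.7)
The plan is to prove both directions by reducing to the analogous (classical) characterization for bounded expansion via low tree-depth colorings, and the fact that bounded shrubdepth is precisely the transduction closure of bounded tree-depth. Recall the Ne\v{s}et\v{r}il--Ossona de Mendez theorem that a class $\mathcal{H}$ has bounded expansion if and only if for every $p\in\mathbb{N}$ there is a number $N(p)$ such that every $H\in\mathcal{H}$ admits a vertex coloring with $N(p)$ colors in which the subgraph induced by the union of any $p$ color classes has tree-depth at most $p$. This gives, for each $H\in\mathcal{H}$, a family $\mathcal{F}_p(H)$ of at most $\binom{N(p)}{p}$ vertex subsets that cover every $p$-subset and whose induced subgraphs belong to a class of bounded tree-depth (in particular, bounded shrubdepth, as tree-depth bounds shrubdepth).

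For the forward direction, assume $\mathcal{G}$ has structurally bounded expansion, so there is a class $\mathcal{H}$ of bounded expansion and an FO transduction $\tau$ such that $\mathcal{G}\subseteq\tau(\mathcal{H})$. Given $p$, I would pick $p'$ large enough so that the parameters of $\tau$ (number of copies, size of the parameter expansion) guarantee that every $p$-subset of a transduct $G=\tau(H)$ lifts to a $p'$-subset of $H$. Apply the low tree-depth covering to $H$ at parameter $p'$ to obtain $\mathcal{F}_{p'}(H)$, then push forward: set $\mathcal{F}(G)$ to be the family of images of sets in $\mathcal{F}_{p'}(H)$ under $\tau$. The cardinality $|\mathcal{F}(G)|$ stays bounded in $p$. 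The key closure property is that transducing an induced subgraph of bounded tree-depth produces a graph of bounded shrubdepth—this is the classical transduction characterization of shrubdepth (Ganian--Hlin\v{e}n\'y--Ne\v{s}et\v{r}il et al.). Applied uniformly across $\mathcal{F}_{p'}(H)$, this yields a family $\{G[A]\}$ of bounded shrubdepth, as required. The technical care needed here is that transductions allow copying and vertex deletion, so one must verify that the covering property survives these operations, which is where I would expect most of the bookkeeping.

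For the backward direction, suppose $\mathcal{G}$ satisfies the covering property with bounded shrubdepth covers. I would build a class $\mathcal{H}$ of bounded expansion together with an FO transduction recovering $\mathcal{G}$. Concretely, encode each $G\in\mathcal{G}$ by a relational structure $H(G)$ whose Gaifman graph is the disjoint union of the connection trees witnessing shrubdepth of each $G[A]$ for $A\in\mathcal{F}(G)$, decorated with colors recording node colors of the connection model and pointers identifying leaves with vertices of $G$. Because there are only $m=m(p)$ sets $A$, the Gaifman graph of $H(G)$ is a bounded union of colored forests, which is a class of bounded expansion. An FO formula $\phi(x,y)$ can then recover adjacency in $G$ by existentially selecting some $A\in\mathcal{F}(G)$ containing both endpoints and reading off the connection-model decision, using the bounded depth of the trees to express the lowest-common-ancestor condition in FO.

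The main obstacle is the backward direction: we need the covering family to interact coherently, so that a single FO interpretation recovers adjacency for every pair, even though different pairs may be certified by different sets $A$. The $p$-covering property for $p=2$ guarantees existence of a witnessing $A$, but one must ensure the encoded structure is rich enough for $\phi$ to locate the right $A$ and read the connection model without blowing up the expansion bound on $H(G)$. Handling this uniformly in $G$—so that the transduction does not depend on the graph—is the most delicate piece; the forward direction, by contrast, is essentially a transfer of the low tree-depth covering theorem along $\tau$.
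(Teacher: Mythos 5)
First, note that the paper does not prove Theorem~\ref{thm:gkn+20} at all: it is imported from \cite{GKN+20}, whose proof the authors explicitly describe as highly technical, so there is no internal argument to compare yours against. Judged on its own, your sketch names the right ingredients (low tree-depth colorings characterizing bounded expansion, shrubdepth as the transduction closure of tree-depth), but it has a genuine gap at the crux of each direction. In the forward direction you push the low tree-depth cover of $H$ forward along the transduction $\tau$ and claim that $G[A']$, for $A'$ the image of a cover set $A$, is a transduction of $H[A]$ and hence of bounded shrubdepth. This is not true as stated: the formulas defining $\tau$ quantify over all of $H$, not just over $A$, so adjacency between two vertices lying over $A$ may depend on vertices of $H$ far outside $A$, and $G[A']$ is not determined by $H[A]$. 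Restoring this locality is precisely the technical heart of \cite{GKN+20}: one first proves a quantifier-elimination/normal-form theorem for first-order logic over bounded expansion classes (formulas become quantifier-free over an augmentation by unary functions whose underlying graphs still form a bounded expansion class), and only then do low tree-depth covers of the augmented structure, closed under the new function symbols, transfer to shrubdepth covers of the transduct. Your sketch has no substitute for this step.

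In the backward direction you encode $G$ by the connection models of the graphs $G[A]$, $A\in\mathcal{F}(G)$, with ``pointers identifying leaves with vertices of $G$'', and assert that the Gaifman graph, being a bounded union of colored forests, has bounded expansion. Both halves of that assertion fail. Without the pointers the Gaifman graph is indeed a bounded-depth forest, but then no first-order interpretation can tell which leaves of different trees represent the same vertex of $G$, so a single formula cannot recover adjacency, since different pairs $u,v$ are certified in different trees. With the pointers, the structure consists of $m$ bounded-depth trees glued along their leaf sets, and such unions do \emph{not} form a bounded expansion class: already two depth-$2$ forests sharing their leaf set can realize the vertex--edge incidence pattern of an arbitrary graph $F$ (index the leaves by $E(F)$; in one forest group leaf $e$ under an internal node for one endpoint of $e$, in the other forest under the other endpoint), so the union contains the $1$-subdivision of $F$ and hence $K_n$ as a depth-$1$ minor. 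This is why the actual converse in \cite{GKN+20} must construct a sparse witness class that admits low tree-depth covers for \emph{every} $p$ (using the hypothesis at all parameters, not only $p=2$) before invoking the Ne\v{s}et\v{r}il--Ossona de Mendez characterization of bounded expansion; bounding the expansion of your encoded structures does not follow from the ingredients you cite.
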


We directly deduce the following result.

\begin{lemma}\label{adj:sbe}
\label{lemma:sbe}
Any class $\mathcal{G}$ of structurally bounded expansion admits a constant-size equality-based
adjacency labelling scheme.
\end{lemma}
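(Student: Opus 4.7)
The plan is to apply Theorem \ref{thm:gkn+20} with $p = 2$ and then reduce to the bounded shrubdepth case handled by Lemma \ref{adj:shrubdepth}. Specifically, since $\mathcal{G}$ has structurally bounded expansion, Theorem \ref{thm:gkn+20} supplies a constant $m = m(2)$ and, for every $G \in \mathcal{G}$, a family $\mathcal{F}(G) = \{A_1, \ldots, A_m\}$ of at most $m$ vertex subsets such that every pair $\{u,v\} \subseteq V(G)$ is contained in some $A_i$, and the class $\mathcal{H} \define \{G[A] \,:\, G \in \mathcal{G},\, A \in \mathcal{F}(G)\}$ has bounded shrubdepth. By Lemma \ref{adj:shrubdepth}, $\mathcal{H}$ admits a constant-size equality-based adjacency labelling scheme; call its labels $\ell^{\mathrm{sd}}_{G[A]}(\cdot)$, each of the form $(p^{\mathrm{sd}}(\cdot) \mid \vec{q}^{\mathrm{sd}}(\cdot))$ with prefix and equality-code lengths bounded by absolute constants.

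The label of a vertex $x \in V(G)$ is then built by concatenating, for each $i \in [m]$ with $x \in A_i$, a block of the form
\[
\bigl(\, i \,\bigm\vert\, \mathrm{id}(G, A_i) \,\bigr),\; \bigl(\, p^{\mathrm{sd}}_{G[A_i]}(x) \,\bigm\vert\, \vec{q}^{\mathrm{sd}}_{G[A_i]}(x) \,\bigr),
\]
where $\mathrm{id}(G, A_i)$ is a fixed canonical integer identifier chosen so that two sets have the same identifier if and only if they are the same subset of the same graph. The first coordinate $i$ is a prefix naming the slot, and $\mathrm{id}(G, A_i)$ is a single equality code. Since there are at most $m$ blocks per vertex and each shrubdepth block has constant size, the total prefix and equality-code budget is $O(m) = O(1)$, so this is an $(s,t,k)$-equality-based label with $s, t, k$ all constants.

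Given labels $\ell(x), \ell(y)$, the decoder first uses the slot prefixes $i, j$ and the equality-check matrix to locate a pair of blocks with $\mathrm{id}(G, A_i) = \mathrm{id}(G, A_j)$; such a pair exists because some $A \in \mathcal{F}(G)$ contains both $x$ and $y$ by the covering property with $p = 2$. Having identified a common set $A$, the decoder then runs the shrubdepth decoder of Lemma \ref{adj:shrubdepth} on the corresponding shrubdepth blocks $(p^{\mathrm{sd}}_{G[A]}(x) \mid \vec{q}^{\mathrm{sd}}_{G[A]}(x))$ and $(p^{\mathrm{sd}}_{G[A]}(y) \mid \vec{q}^{\mathrm{sd}}_{G[A]}(y))$, and outputs its verdict; since adjacency of $x$ and $y$ in $G$ agrees with adjacency in $G[A]$, this is correct. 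The main (and essentially only) obstacle is packaging the above into the formal equality-based template, so that the choice of which shrubdepth blocks to compare is a function solely of the prefixes together with the equality-check matrix $Q_{x,y}$; this is straightforward because the block identifiers $i, j$ live in the prefix and the set identifiers can be compared through $Q_{x,y}$, after which the shrubdepth decoder is itself equality-based and fits inside the same framework. Passing through Proposition \ref{prop:eqbased} then yields a constant-size adjacency sketch for $\mathcal{G}$.
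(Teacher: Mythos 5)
Your proposal is correct and follows essentially the same route as the paper: apply Theorem~\ref{thm:gkn+20} with $p=2$, use the shrubdepth labelling of Lemma~\ref{lemma:shrubdepth} on each $G[A]$, concatenate these labels together with membership information, and let the decoder locate a common set covering both vertices. The only difference is cosmetic: the paper records membership via an $m$-bit indicator prefix and matches on the slot index directly, whereas you additionally carry a set-identifier equality code, which is harmless but redundant.
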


\begin{proof}
Let $m$ and $\mathcal{F}$ be given by applying Theorem \ref{thm:gkn+20} to $\mathcal{G}$ with $p=2$.
By definition, for every graph $G \in \mathcal G$ and every pair of vertices $u,v \in V(G)$, there
is a set $A \in \mathcal{F}(G)$ containing $u$ and $v$. Moreover, $\mathcal{F}(G)$ contains at most
$m$ sets and the family $\mathcal{C}$ of all graphs $G[A]$, for $G\in \mathcal{G}$, and $A\in
\mathcal{F}(G)$, has bounded shrubdepth. It follows from Lemma~\ref{lemma:shrubdepth} that there is
a constant-size equality-based adjacency labelling scheme for $\mathcal C$. We denote the decoder of
this scheme by $D$, and the corresponding labels as $\ell_{G[A]}$.

Consider some graph $G\in \mathcal{G}$, and let
$\mathcal{F}(G)=\{A_1,\ldots,A_m\}$ (for convenience we consider
$\mathcal{F}(G)$ as a multiset of size exactly $m$, that is, some sets $A_i$
might be repeated). For each vertex
$x$ of $G$ and $i \in [m]$, we write $a(x) = (a_1(x), \dotsc, a_m(x))$ where $a_i(x) = \ind{x \in
A_i}$. Then we define the label for $x$ by taking the prefix $a(x)$ and appending the labels
$\ell_{G[A_i]}(x)$ for each induced subgraph $G[A_i] \in \mathcal C$ to which $x$ belongs. Given
the labels for vertices $x$ and $y$, the decoder finds any $i \in [m]$ such that $a_i(x)=a_i(y)=1$;
and outputs $D(\ell_{G[A_i]}(x), \ell_{G[A_i]}(y))$. Such a number $i \in [m]$ always exists due
to Theorem~\ref{thm:gkn+20}.  The correctness of this labelling scheme follows from
Theorem~\ref{thm:gkn+20} and Lemma~\ref{lemma:shrubdepth}.
\end{proof}

Since FO-transductions compose (see e.g.~\cite{NOS22}), sketching FO formulas in a class of
structurally bounded expansion is equivalent to sketching adjacency in another class of structurally
bounded expansion. We obtain the following direct corollary of Theorem~\ref{adj:sbe}.

  \begin{corollary}\label{cor:fosketch}
Any class $\mathcal{G}$ of structurally bounded expansion is
first-order sketchable.
  \end{corollary}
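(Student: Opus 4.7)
The plan is to reduce first-order sketching on $\mathcal{G}$ to adjacency sketching on a derived graph class, and then apply Lemma~\ref{lemma:sbe}. Fix an FO formula $\phi(x,y)$ with relational vocabulary $\Sigma = \{F, R_1, \dotsc, R_k\}$, where $F$ is the binary edge symbol and the $R_i$ are unary. To establish $\phi$-sketchability of $\mathcal{G}$, I need to exhibit, for every graph $G \in \mathcal{G}$ together with any interpretation of the unary symbols $R_i$ on $V(G)$, a constant-size sketch from which $\phi(u,v)$ can be decided.

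First, I would form the class $\mathcal{G}_\phi$ obtained by the FO interpretation associated with $\phi$: for every graph $G \in \mathcal{G}$ and every choice of interpretations for the $R_i$, let $G_\phi$ be the graph on $V(G)$ whose edges are exactly the pairs $\{u,v\}$ for which $\phi(u/x, v/y)$ holds. By the very definition of an FO interpretation (a special case of an FO transduction), each $G_\phi$ is obtained from the colored version of $G$ by a fixed FO interpretation, so $\mathcal{G}_\phi$ is an FO transduction of $\mathcal{G}$. Since $\mathcal{G}$ has structurally bounded expansion, it is itself an FO transduction of some class of bounded expansion $\mathcal{H}$, and FO transductions compose (see \cite{NOS22}); hence $\mathcal{G}_\phi$ is also an FO transduction of $\mathcal{H}$, and therefore has structurally bounded expansion.

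Now I apply Lemma~\ref{lemma:sbe} to the class $\mathcal{G}_\phi$, which yields a constant-size equality-based adjacency labelling scheme, and hence (via Proposition~\ref{prop:eqbased}) a constant-size adjacency sketch for $\mathcal{G}_\phi$. Assigning to each vertex $v \in V(G)$ its sketch in $G_\phi$ gives a constant-size $\phi$-sketch for $G$: decoding adjacency in $G_\phi$ from the pair of sketches is precisely deciding whether $\phi(u,v)$ holds. Since the formula $\phi$ and the choice of unary predicates were arbitrary, $\mathcal{G}$ is first-order sketchable.

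The only subtlety I expect is bookkeeping around the unary symbols in $\phi$: one has to be sure that the definition of FO interpretation used above legitimately absorbs the arbitrary interpretations of the $R_i$, so that the class $\mathcal{G}_\phi$ really is the FO transduction of a fixed colored version of $\mathcal{G}$. This is standard in the sparsity literature (one treats the unary predicates as a bounded enrichment that an FO transduction may introduce), and once this is in place the proof is essentially a one-line invocation of composition of transductions plus Lemma~\ref{lemma:sbe}.
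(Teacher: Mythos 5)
Your proposal is correct and follows essentially the same route as the paper: the paper also observes that, since FO transductions compose, sketching the formula $\phi$ on $\mathcal{G}$ reduces to adjacency sketching on another class of structurally bounded expansion, and then invokes Lemma~\ref{lemma:sbe}. Your extra care about absorbing the unary predicates into the transduction is a fair point but is exactly the standard step the paper leaves implicit.
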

  
As the property $\dist_G(x,y)\le r$ can be written as an FO formula, this
directly implies that classes of bounded expansion are small-distance
sketchable.  However, this does not tell anything on the size of the
sketches as a function of $r$, unlike the approach using weak coloring
numbers described in the next section.

% NH: ADDED THIS PART

\newcommand{\sign}{\mathrm{sign}}
\newcommand{\rank}{\mathrm{rank}}
\newcommand{\inn}[1]{\left\langle #1 \right\rangle}
We also observe another interesting corollary, that graph classes of structurally bounded expansion
can be represented as constant-dimensional geometric intersection graphs. This follows from our
lemma in conjunction with concurrent work of \cite{HHP+22}. We require the notion of
\emph{sign-rank}. The \emph{sign-rank} $\mathrm{rank}_\pm(M)$ of a sign matrix $M \in \{\pm 1\}^{n
\times n}$ is the minimum rank $r$ of a matrix $R \in \R^{n \times n}$ such that $M(i,j) =
\mathrm{sign}(R_{i,j})$ for all $i,j \in [n]$.  Equivalently, $\mathrm{rank}_\pm(M)$ is the minimum
dimension $d$ such that $M$ can be represented as a $d$-dimensional point-halfspace arrangement with
halfspaces through the origin; \ie $M_{i,j} = \sign(\inn{u_i, v_j})$ where $u_i, v_j$ are unit
vectors assigned to each row $i$ and column $j$.

For any graph $G$ on $n$ vertices, we can define its \emph{adjacency sign-matrix} as the matrix $M
\in \{\pm 1\}^{n \times n}$ with $M_{i,j} = 1$ if and only if $i,j$ are adjacent in $G$. We say that
a class $\G$ of graphs has \emph{bounded sign-rank} if there exists a constant $r$ such that for
every $G \in \G$, the sign-rank of the adjacency sign-matrix of $G$ is at most $r$. A graph class
$\G$ with bounded sign-rank is therefore a type of constant-dimensional geometric intersection
graph; if $\G$ has bounded sign-rank, then $\G$ is \emph{semi-algebraic}, (and it is not known
whether having bounded sign-rank is \emph{equivalent} to being semi-algebraic \cite{HHP+22}).

\begin{corollary}
\label{cor:sign-rank}
Any class $\mathcal{G}$ of structurally bounded expansion has bounded sign-rank.
\end{corollary}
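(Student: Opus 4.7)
The plan is to combine our Lemma~\ref{lemma:sbe} with the concurrent work of \cite{HHP+22}. By Lemma~\ref{lemma:sbe}, every class $\G$ of structurally bounded expansion admits a constant-size equality-based adjacency labelling scheme, with some parameters $(s,k)$ that do not depend on the graph $G \in \G$. The crux of the proof is to show that any class admitting such a labelling scheme has bounded sign-rank; this implication is essentially the content of \cite{HHP+22} applied to the equality-based protocol we produced, so the corollary will follow immediately.

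To explain the reduction (which is the conceptual heart of the argument), first I would observe that the $\textsc{Equality}$ predicate itself has a constant-dimensional sign representation: for any set of equality codes, mapping each code $q$ to its indicator basis vector $e_q$ in a real Hilbert space yields $\ind{q = q'} = \inn{e_q, e_{q'}}$, and one can replace this with a suitable scaled/shifted rank-$O(1)$ representation so that signs, rather than exact equalities, encode the bits $Q_{x,y}(i,j)$. Since an $(s,k)$-equality-based label determines adjacency as a function of the prefix pair $(p(x), p(y)) \in \zo^s \times \zo^s$ and the matrix $Q_{x,y} \in \zo^{k \times k}$, the adjacency sign-matrix of any $G \in \G$ can be written as a fixed Boolean combination of $O(1)$ sign matrices, each of constant sign-rank. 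Since Boolean combinations of sign matrices of bounded sign-rank still have bounded sign-rank (the resulting dimension is polynomial in the original dimension and the number of operations), this yields a uniform bound on the sign-rank of graphs in $\G$.

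In terms of obstacles, the argument is conceptually straightforward once both inputs (Lemma~\ref{lemma:sbe} and the equality-to-sign-rank reduction from \cite{HHP+22}) are in place; the only care needed is that the parameters $s, k$ from the labelling scheme are bounded uniformly over $\G$, which they are by our construction, and that the sign-rank of the composed representation does not depend on the specific graph $G$. Since the structure of the decoder $D$ (namely, the Boolean combination of equality checks evaluated on $(p(x), p(y))$) is fixed for the class $\G$ rather than varying per graph, the resulting bound on the sign-rank is indeed uniform, completing the corollary.
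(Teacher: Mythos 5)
Your overall route is the same as the paper's: combine Lemma~\ref{lemma:sbe} with the concurrent work \cite{HHP+22}, and at the level of what you cite the corollary does follow. Indeed, \cite{HHP+22} prove that $\rank_\pm(M) \leq 4^{\mathsf{D}^{\textsc{Eq}}(M)}$, where $\mathsf{D}^{\textsc{Eq}}(M)$ is the deterministic communication cost of $M$ with an \textsc{Equality} oracle; a constant-size $(s,k)$-equality-based adjacency labelling scheme for $\G$ (with $s,k$ uniform over the class, as Lemma~\ref{lemma:sbe} provides) is exactly a uniform constant bound on $\mathsf{D}^{\textsc{Eq}}(M)$ for all adjacency sign-matrices of graphs in $\G$, and this is precisely how the paper concludes.

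However, the ``explanation of the reduction'' you give is not sound, and it is not what \cite{HHP+22} prove. It is false in general that a Boolean combination of sign matrices of bounded sign-rank has sign-rank polynomial in the input sign-ranks and the number of operations: Razborov and Sherstov exhibited a depth-3 $\mathsf{AC}^0$ function --- an AND of ORs of rank-one (hence sign-rank at most $2$) two-party building blocks --- whose sign-rank is $2^{n^{\Omega(1)}}$, so the blow-up can be exponential once the number of operations grows. Even for a constant number of operations, closure of bounded sign-rank under entrywise Boolean combinations (e.g.\ AND) is not a known general principle; it is essentially the question of whether constant-cost unbounded-error communication is closed under intersection. What makes the step work here is the special structure of \emph{equality} queries: the theorem $\rank_\pm(M) \leq 4^{\mathsf{D}^{\textsc{Eq}}(M)}$ of \cite{HHP+22} exploits that the combined matrices are equality patterns arranged by a deterministic oracle protocol, not a generic closure property of sign-rank. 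So the argument should be written as the paper does: convert the labelling scheme of Lemma~\ref{lemma:sbe} into a constant-cost $\mathsf{D}^{\textsc{Eq}}$ protocol for adjacency and invoke that theorem directly, rather than appealing to closure of sign-rank under Boolean combinations.
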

\begin{proof}
A result of \cite{HHP+22} is that for every $M \in \{\pm 1\}^{n \times n}$, $\rank_\pm(M) \leq
4^{\mathsf{D}^{\textsc{Eq}}(M)}$, where $\mathsf{D}^{\textsc{Eq}(M)}$ is the deterministic
communication cost of $M$ with access to an \textsc{Equality} oracle. $\G$ having a
constant-size equality-based adjacency labelling scheme is equivalent to the existence of a constant $c$
such that $\mathsf{D}^{\textsc{Eq}}(M) \leq c$ for every adjacency sign-matrix $M$ of a graph in
$\G$. Therefore every adjacency sign-matrix $M$ for graphs $G \in \G$ has $\rank_\pm(M) \leq 4^c$.
\end{proof}

\medskip

\begin{rem}
 A very recent result of Dvo\v r\'ak~\cite[Corollary
9]{Dvo22}, which appeared after our paper was made public, can also be
used to give an alternative proof of the result that any class of structurally
bounded expansion is small-distance sketchable. The drawback is that
his result applies to small-distance sketching (rather than
first-order sketching), but the benefit is that the small-distance
sketch for a vertex $v$ in a given graph $G$ can be
obtained by taking a constant number of adjacency sketches of the
vertices in a small ball around $v$ in $G$. This approach is more
general than the approach we take in the next section, which considers
classes of bounded expansion instead of classes of structurally
bounded expansion.
\end{rem}

\subsection{Bounded Expansion Implies Small-Distance Sketching}
\label{section:small-distance upper bounds}

Recall the definition of weak $r$-coloring number from Definition~\ref{def:weak reachability}. We give a
quantitative bound on the small-distance sketch of any graph class $\mathcal F$ in terms of
$\wcol_r(\mathcal F)$. Recall from Theorem~\ref{thm:expansion equivalents} that any class with
bounded expansion has $\wcol_r(\mathcal F) \leq f(r)$ for some function $f(r)$; therefore we obtain
the existence of small-distance sketches for any class of bounded expansion.

\begin{thm}\label{thm:distr1}
For any $r \in \bN$, any class $\F$ has an $(0, r, \wcol_r(\F))$-disjunctive
distance-$(r,r)$ labelling scheme.
\end{thm}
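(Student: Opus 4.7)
The plan is to exploit the ordering underlying the definition of $\wcol_r$. Fix $G \in \F$ and a total order $<$ on $V(G)$ witnessing $\wcol_r(G) \leq \wcol_r(\F)$, so that for every vertex $u$, the set $W_r(u)$ of vertices weakly $r$-reachable from $u$ satisfies $|W_r(u)| \leq \wcol_r(\F)$. The key combinatorial fact I would establish is the equivalence
\[
\dist_G(x,y) \leq r \;\iff\; \exists\, v \in W_r(x) \cap W_r(y) \text{ with } \dist_G(x,v) + \dist_G(v,y) \leq r.
\]
The $(\Leftarrow)$ direction is the triangle inequality in $G$. For $(\Rightarrow)$, I would take a shortest $xy$-path $P$ (of length at most $r$) and let $v$ be its $<$-minimum vertex. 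The subpaths of $P$ from $x$ to $v$ and from $y$ to $v$ each still have $v$ as their $<$-minimum, so they certify $v \in W_r(x)$ and $v \in W_r(y)$; and because a subpath of a shortest path is itself shortest, $\dist_G(x,v) + \dist_G(v,y) = \dist_G(x,y) \leq r$.

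With this in hand, the labelling scheme is essentially forced. For each vertex $u$, I would organise $\ell(u)$ into slots indexed by graph distance to $u$: the $d$-th slot carries an empty prefix and, as equality codes, the identifiers of the vertices of $W_r(u)$ lying at distance exactly $d$ from $u$ in $G$. Summing across slots, the total number of equality codes in $\ell(u)$ equals $|W_r(u)| \leq \wcol_r(\F)$, and all prefixes are empty. The decoder, on input $\ell(x), \ell(y)$, outputs
\[
\bigvee_{\substack{d_1, d_2 \geq 0 \\ d_1 + d_2 \leq r}} \;
\bigvee_{\substack{v_1 \in \mathrm{slot}_{d_1}(x) \\ v_2 \in \mathrm{slot}_{d_2}(y)}} \ind{v_1 = v_2},
\]
which is a disjunction of entries of the equality matrix $Q_{x,y}$, so the scheme is disjunctive as required. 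Correctness is immediate from the equivalence above.

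No step looks genuinely hard: the combinatorial equivalence is the standard use of weak-reachability, and the scheme is a direct translation of it. The only bookkeeping point I would double-check is the precise slot count: the natural construction above uses $r+1$ slots indexed by $d \in \{0, 1, \ldots, r\}$, whereas the theorem asks for $t = r$. Since the distance-$0$ slot always consists of the single equality code for $u$ itself, one can absorb it into another slot (or re-index $d \in \{1, \ldots, r\}$ with a small adjustment in how the decoder pairs slots) without affecting $s=0$, $k = \wcol_r(\F)$, or the disjunctive form; this is a cosmetic matter rather than a real obstacle.
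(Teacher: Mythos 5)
Your proposal is correct and follows essentially the same route as the paper's proof: order the vertices to witness $\wcol_r$, store the weakly $r$-reachable vertices grouped by a distance-like index (the paper uses the ``$x$-rank'', i.e.\ the least $k$ for which the vertex is weakly $k$-reachable, while you use the exact graph distance --- an immaterial difference), and have the decoder output a disjunction over pairs of slots whose indices sum to at most $r$, with correctness coming from the $\prec$-minimal vertex of a shortest path. The slot-count issue you flag ($r+1$ versus $r$, due to the distance-$0$ code for the vertex itself) is indeed cosmetic and is handled no more carefully in the paper, whose decoder also implicitly allows the index $0$.
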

\begin{proof}
Let $G \in \F$, and consider a total order $(V, \prec)$ such that for any vertex $x \in V$, at most
$\wcol_r(\F)$ vertices are weakly $r$-reachable from $v$ in $G$ with respect to $(V,\prec)$. We say
that vertex $y \in V$ has \emph{$x$-rank} $k$ if $y$ is weakly $k$-reachable from $x$ but not weakly
$(k-1)$-reachable from $x$. For each vertex $x$ and $k \in [r]$, write $S_k(x)$ for the set of
vertices $y$ with $x$-rank $k$.

We construct a disjunctive labelling scheme as follows. Each vertex $x$ is assigned the label
\[
  (- \mid \vec q_1(x)), (- \mid \vec q_2(x)), \dotsc, (- \mid \vec q_{r'}(x))
\]
where $r' \leq r$ is the maximum number such that $S_{r'}(x) \neq \emptyset$, and the equality codes
$\vec q_i(x)$ are names of vertices in the set $S_i(x)$. Each label contains at most $\wcol_r(G)$
equality codes. Given labels for $x$ and $y$, the decoder outputs 1 if and only if there exist $0
\leq i,j \leq r$ such that $i+j \leq r$ and $S_i(x) \cap S_j(y) \neq \emptyset$, which can be
checked using the equality codes in $\vec q_i(x)$ and $\vec q_j(y)$.

Suppose that $\dist_G(x,y) \leq r$ and let $P \subseteq V(G)$ be a path of length $\dist_G(x,y)$.
Let $z \in P$ be the minimal element of $P$ with respect to $\prec$. Then $z$ is weakly
$i$-reachable from $x$ and weakly $j$-reachable from $y$, for some values $i,j$ such that $i+j \leq
r$. Then $z \in S_i(x) \cap S_j(y)$, so the decoder will output 1 given the labels for $x$ and $y$.
On the other hand, if the decoder outputs 1, then there are values $i,j$ such that $i + j \leq r$
and $S_i(x) \cap S_j(y) \neq \emptyset$. Let $z \in S_i(x) \cap S_j(y)$, so that $z$ is weakly
$i$-reachable from $x$ and weakly $j$-reachable from $y$. Then $\dist_G(x,y) \leq \dist_G(x,z) +
\dist_G(z,y) \leq i + j \leq r$.
\end{proof}

We noticed after proving this result that a similar idea was used in \cite[Lemma
6.10]{GKS17} to obtain sparse neighborhood covers in nowhere-dense classes.

We will need the following quantitative results for planar graphs
and graphs avoiding some specific minor, due to~\cite{HOQRS17}.

\begin{thm}[\cite{HOQRS17}]\label{thm:beplanar}
For any planar graph $G$, and any integer $r\ge 0$, $\wcol_r(G)\le (2r+1){r+2\choose 2}=O(r^3)$.
\end{thm}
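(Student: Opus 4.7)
The plan is to establish the bound via the classical BFS-layering technique for planar graphs combined with a chordal partition argument. Fix a planar graph $G$ and an integer $r \geq 0$; we may assume $G$ is connected, treating each component separately otherwise. First I would pick an arbitrary root $v_0 \in V(G)$ and perform a breadth-first search to obtain layers $L_0 = \{v_0\}, L_1, L_2, \ldots$, where $L_i$ is the set of vertices at distance exactly $i$ from $v_0$ in $G$.

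The key structural fact I would invoke is that, for planar graphs, the subgraph induced by any $k$ consecutive BFS layers has bounded treewidth (indeed treewidth at most $3k-1$, a classical observation of Baker exploited by Eppstein and others). Applied with $k = r+1$, this gives that each $G[L_{i-r}\cup L_{i-r+1}\cup\dots\cup L_i]$ admits a tree decomposition of width $O(r)$ whose bags are unions of $O(1)$ pieces of BFS paths from $v_0$.

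Next I would build the vertex order $\prec$ in two nested levels. At the coarse level, vertices in earlier BFS layers precede vertices in later layers. At the fine level, within a vertex $x \in L_i$'s window $L_{i-r},\ldots,L_i$, one uses an ordering coming from a perfect elimination order of a chordal completion of the corresponding tree decomposition. The orderings from overlapping windows can be amalgamated (by iterating a chordal partition construction as $i$ grows) into a single global total order on $V(G)$ that is consistent with the BFS layering and with the local tree decompositions.

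To bound $\wcol_r(G)$ under this ordering, note that any $v$ weakly $r$-reachable from $x \in L_i$ must satisfy $v \preceq x$, so $v \in L_j$ with $j \leq i$, and the weak-reachability path has length at most $r$, so $j \geq i-r$. Thus $v$ lies in one of the $r+1$ layers $L_{i-r},\ldots,L_i$. Classifying $v$ by the pair $(a,b)$ of BFS layers at which its path from $x$ enters and exits the chordal bag associated to $x$ contributes the factor $\binom{r+2}{2}$, and the width of the tree decomposition contributes the factor $2r+1$ on top, yielding $(2r+1)\binom{r+2}{2}$.

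The main obstacle is not the existence of a polynomial bound --- that is essentially immediate from the bounded treewidth of $G[L_{i-r}\cup\dots\cup L_i]$ --- but rather extracting the precise constant $(2r+1)\binom{r+2}{2}$. This requires a careful book-keeping, distinguishing the types of weakly reachable vertices according to which pair of BFS layers their minimal path vertex lies in and which bag of the chordal partition is responsible for it; this is exactly the refinement carried out in \cite{HOQRS17}, which I would follow.
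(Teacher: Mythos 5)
The paper does not prove this statement at all: it is imported verbatim from \cite{HOQRS17}, so there is no internal proof to match your sketch against. More importantly, your proposed route has a genuine gap at exactly the step that produces the bound. The assertion that ``the width of the tree decomposition contributes the factor $2r+1$'' is unsupported: the window $G[L_{i-r}\cup\dots\cup L_i]$ has treewidth $\Theta(r)$, and the treewidth of a graph does not bound the number of weakly $r$-reachable vertices per bag. The best general bounds for graphs of treewidth $k$ are $\wcol_r\le\binom{r+k}{k}$, which is exponential in $r$ when $k=\Theta(r)$, or $\wcol_r=O(k\log n)$ (Lemma~\ref{lem:tw} in this paper), which carries a $\log n$ factor; indeed the BFS-layering-plus-treewidth route is precisely what the paper uses in Theorem~\ref{thm:distr2}, and it only yields $n$-dependent bounds of the form $O(r\log^2 n)$, not the $n$-independent $(2r+1)\binom{r+2}{2}$. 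A further (smaller) issue: a weak-reachability path from $x\in L_i$ may pass through layers up to $L_{i+r}$ even though its minimum vertex lies in $L_{i-r},\dots,L_i$, so the windows you analyze do not even contain the relevant paths, and the ``amalgamation'' of conflicting perfect elimination orders on overlapping windows is not justified.

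The actual proof in \cite{HOQRS17} rests on a different decomposition: a chordal (tree-like) partition of the planar graph whose parts are \emph{geodesics}, i.e.\ shortest paths in the graph that remains after deleting earlier parts. The factor $2r+1$ comes from the elementary fact that a shortest path can contain at most $2r+1$ vertices at distance at most $r$ from any fixed vertex, and the factor $\binom{r+2}{2}$ counts the ancestor geodesics that can be reached with a total length budget of $r$. Since your final step explicitly defers the counting to \cite{HOQRS17}, but that counting is tied to the geodesic-partition structure rather than to BFS layers and chordal completions of treewidth-$O(r)$ windows, the deferral is not available, and as written the argument cannot recover the claimed $O(r^3)$ bound.
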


\begin{thm}[\cite{HOQRS17}]\label{thm:beminor}
For any integer $t\ge 3$, any graph $G$ with no $K_t$-minor, and any
integer $r\ge 0$, $\wcol_r(G)\le {r+t-2\choose t-2}(t-3)(2r+2)=O(r^{t-1})$.
\end{thm}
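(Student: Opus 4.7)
The plan is to construct an explicit vertex ordering by combining BFS layering with induction on $t$, generalizing the planar case (Theorem~\ref{thm:beplanar}). The key minor-exclusion ingredient is the following classical fact: if $G$ has no $K_t$-minor and $L_i$ is a BFS layer at distance $i$ from some root $r_0$, then $G[L_i]$ has no $K_{t-1}$-minor. Indeed, contracting the BFS tree on layers $L_0,\ldots,L_{i-1}$ to a single apex vertex $v_0$ yields a minor of $G$ in which $v_0$ is adjacent to every vertex of $L_i$, so a $K_{t-1}$-model inside $G[L_i]$ would combine with $v_0$ to produce a forbidden $K_t$-minor in $G$.

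First I would fix a BFS layering $L_0,L_1,\ldots$ of $G$, order layers sequentially, and recursively order within each $G[L_i]$ using the inductive hypothesis for $K_{t-1}$-minor-free graphs. Since any path of length at most $r$ meets at most $r+1$ consecutive layers, and the smallest vertex on a weakly $r$-reachable path must lie in a layer no later than that of the starting vertex, the weakly $r$-reachable vertices from $v \in L_j$ all lie in the window $L_{j-r} \cup \cdots \cup L_j$.

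Second, I would bound the count by splitting according to the \emph{layer profile} of each witnessing path, that is, the sequence of layer indices it traverses. Each profile corresponds to a non-increasing sequence of at most $t-2$ layer drops whose total is at most $r$; the number of such sequences is $\binom{r+t-2}{t-2}$, contributing the binomial factor. The factor $(t-3)$ arises from the inductive bound applied to the $K_{t-1}$-minor-free graph $G[L_i]$ at the appropriate sub-depth, and the $(2r+2)$ factor bounds the number of admissible layer positions inside the window (doubled to account for paths that briefly leave and re-enter the current layer).

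The main obstacle will be executing the layer-profile counting cleanly and proving that witnessing paths decompose across layers in a way compatible with the recursive ordering within each layer; in particular, one must carefully handle paths that leave and re-enter a layer so that their contribution is still captured by the inductive bound on $G[L_i]$. The base case (small $t$, \eg $K_3$-minor-free forests where $\wcol_r = r+1$, or $K_4$-minor-free series-parallel graphs where $\wcol_r = O(r^2)$) can be handled by direct analysis, and the asymptotic behavior $O(r^{t-1})$ falls out of the recursion once the explicit constants are tracked.
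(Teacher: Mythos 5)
The statement you are trying to prove is not proved in this paper at all: it is quoted from \cite{HOQRS17}, and the proof there is based on a decomposition of a $K_t$-minor-free graph into parts, each of which is a union of at most $t-3$ \emph{geodesics} (shortest paths) of the graph remaining after deleting the earlier parts, with a tree-like (chordal) quotient structure. The ordering follows the decomposition order and runs along the geodesics; the factor $2r+2$ comes from the fact that the weakly $r$-reachable vertices on a single geodesic span a subpath of bounded length, $t-3$ counts the geodesics per part, and $\binom{r+t-2}{t-2}$ counts the parts that can be relevant, by distributing the distance budget $r$ over the at most $t-2$ levels of the decomposition. Your plan is structurally different, and unfortunately it cannot work.

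The fatal problem is your first step: any ordering in which the BFS layers appear consecutively (all of $L_0$ before $L_1$ before $L_2$, \ldots) already has unbounded $\wcol_1$ on planar, even series-parallel, graphs. Under such an ordering, for every vertex $v\in L_j$ \emph{every} neighbour of $v$ in $L_{j-1}$ is weakly $1$-reachable from $v$ (the path has length $1$ and its endpoint in the earlier layer is automatically the minimum). Take $K_{2,m}$ rooted at one of the two high-degree vertices: the other high-degree vertex $w$ lies in $L_2$ and has all $m$ vertices of $L_1$ as neighbours, so $\wcol_1\geq m$ for \emph{any} layer-respecting order, no matter how cleverly you order inside the layers or count layer profiles. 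Since $K_{2,m}$ is $K_4$-minor-free, this contradicts the bound you are aiming for already at $t=4$, $r=1$ (the correct order for $K_{2,m}$ must place $w$ \emph{before} the middle layer, i.e.\ it must violate the layering). The same phenomenon is why layering-based arguments in this area only yield bounds of the form $O(r\log n)$ (as this paper notes for bounded-genus graphs via layered treewidth), not bounds depending on $r$ alone. Your auxiliary observations are fine --- BFS layers of a $K_t$-minor-free graph do induce $K_{t-1}$-minor-free graphs, and your base cases are correct --- but the recursion you want to run on top of them is blocked both by the counterexample above and by the difficulty you yourself flag: a witnessing path from $v$ to $u\in L_i$ may meet $L_i$ only in scattered vertices, so $\wcol_r(G[L_i])$ gives no control over it. To recover the theorem you would need to abandon layer-respecting orders and switch to the geodesic-decomposition ordering of \cite{HOQRS17}.
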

\noindent
In the proof of Theorem~\ref{thm:distr1}, the equality codes are just the names of
vertices; so we can use $\lceil \log n \rceil$ bits to encode each of the $\wcol_r(\F)$ equality
codes to obtain an adjacency label. Then, combined with Proposition~\ref{prop:eqbased}, we obtain
the following corollary:
\begin{corollary}\label{cor:distr1}
If a class $\mathcal{F}$ has bounded expansion, then $\F$ has a small-distance sketch of size at
most $O(r + \wcol_r(\F)\log(\wcol_r(\F)))$. If $\F$ is the class of planar graphs, then the sketch
has size $O(r^3 \log r)$ and if $\F$ is the class of $K_t$-minor free graphs for some fixed integer
$t\ge 3$, then the sketch has size $O(r^{t-1}\log r)$. Furthermore, $\F$ admits a distance-$(r,r)$
labelling scheme of size $O(r + \wcol_r(\F)\log n)$; if $\F$ is the class of planar graphs, then the
scheme has size $O(r^3\log n)$ and if $\F$ is the class of $K_t$-minor free graphs, then the scheme
has  size $O(r^{t-1}\log n)$.
\end{corollary}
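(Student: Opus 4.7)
The proof is a direct composition of results already proved in the paper, so the plan is essentially bookkeeping. First, I would apply Theorem~\ref{thm:distr1} to obtain, for any $r \in \bN$, a $(0, r, \wcol_r(\F))$-disjunctive distance-$(r,r)$ labelling scheme for $\F$. Feeding this into Proposition~\ref{prop:eqbased} with parameters $s = 0$, $t = r$, and $k = \wcol_r(\F)$ yields a distance-$(r,r)$ sketch of size $O(s + t + k \log k) = O(r + \wcol_r(\F)\log(\wcol_r(\F)))$, which gives the first stated bound.

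For the specific graph classes, I would substitute the quantitative estimates on the weak $r$-coloring number. In the planar case, Theorem~\ref{thm:beplanar} gives $\wcol_r(\F) = O(r^3)$, so the sketch size becomes $O(r + r^3 \log r) = O(r^3 \log r)$. In the $K_t$-minor-free case, Theorem~\ref{thm:beminor} gives $\wcol_r(\F) = O(r^{t-1})$, leading to $O(r + r^{t-1}\log r) = O(r^{t-1}\log r)$. In both cases the additive $r$ term is dominated.

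For the deterministic labelling schemes, I would observe that in the construction underlying Theorem~\ref{thm:distr1} the equality codes $\vec q_i(x)$ are simply names of vertices of $G$, so each can be stored in $\lceil \log n\rceil$ bits. Since each label has at most $r' \le r$ blocks (contributing $O(r)$ bits for the separator symbols) and at most $\wcol_r(\F)$ codes in total, the label size is $O(r + \wcol_r(\F)\log n)$. The planar and $K_t$-minor-free bounds follow by plugging in Theorems~\ref{thm:beplanar} and~\ref{thm:beminor} exactly as above, giving $O(r^3 \log n)$ and $O(r^{t-1}\log n)$ respectively.

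There is no real obstacle here since every ingredient has already been established; the only mild care needed is to check that the block count $t$ fed into Proposition~\ref{prop:eqbased} is $r$ (the number of distinct rank layers $S_i$, at most one per distance value from $1$ to $r$) rather than $\wcol_r(\F)$, which is immediate from the form of the labels produced by Theorem~\ref{thm:distr1}.
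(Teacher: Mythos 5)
Your proposal is correct and follows essentially the same route as the paper: apply Theorem~\ref{thm:distr1}, convert via Proposition~\ref{prop:eqbased} with $(s,t,k)=(0,r,\wcol_r(\F))$ for the sketch, note the equality codes are vertex names encodable in $\lceil\log n\rceil$ bits for the deterministic labelling, and substitute the weak coloring number bounds of Theorems~\ref{thm:beplanar} and~\ref{thm:beminor}. Your observation that the block count is $r$ rather than $\wcol_r(\F)$ is exactly the right bookkeeping point, and the bounded-expansion hypothesis enters only to guarantee $\wcol_r(\F)$ is finite via Theorem~\ref{thm:expansion equivalents}.
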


\begin{rem}
The fact that the sketch size is independent of the number of vertices in Corollary~\ref{cor:distr1}
implies that the scheme actually works for infinite graphs. It was proved in~\cite{HMSTW21} that for
infinite graphs $G$, $\wcol_r(G)$ is the supremum of $\wcol_r(H)$ for all finite subgraphs $H$ of
$G$ (this was actually proved explicitly for the strong coloring numbers instead of the weak
coloring numbers, but the proof is the same). This shows that Theorems~\ref{thm:beplanar} and
\ref{thm:beminor}, and thus Corollary~\ref{cor:distr1}, also hold for infinite graphs.
\end{rem}

\noindent
In Section~\ref{section:large r} below, we will see how to improve the
dependence in $r$ in the result above, at the cost of an increased dependence in $n$.

\subsection{Small-Distance Sketching Implies Bounded Expansion}
\label{section:small-distance lower bounds}

To complete the proof of Theorem~\ref{thm:small-distance sketching}, we must show that any monotone
class of graphs that is small-distance sketchable has bounded expansion, which we do by
contrapositive. In fact, we will prove a stronger statement: even having a weaker
$(r,5r-1)$-distance sketch of size $f(r)$ implies bounded expansion.

\begin{thm}\label{thm:r2rexpansion}
Let $\mathcal{F}$ be a monotone class of graphs and assume that there is a function $f$ such that
for any $r\ge 1$, $\mathcal{F}$ has a $(r,5r-1)$-distance sketch of size $f(r)$.  Then $\mathcal{F}$
has bounded expansion.
\end{thm}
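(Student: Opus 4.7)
The plan is to prove the contrapositive: assuming $\mathcal{F}$ is monotone with unbounded expansion, no function $f$ of the required form can exist. The starting point is Corollary~\ref{cor:no}, which yields a fixed integer $r_0 \geq 0$ such that for every $d \geq 0$, some $r_0$-subdivision $G_H$ of a bipartite graph $H$ of minimum degree at least $d$ and girth at least 6 belongs to $\mathcal{F}$; by monotonicity, every spanning subgraph of $G_H$ also lies in $\mathcal{F}$.

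The core step is a reduction from an $\mathrm{adj}^E$-sketch for the spanning subgraphs of $H$ (to which Lemma~\ref{lemma:min degree} applies) to a $(r_0+1, 5(r_0+1)-1)$-distance sketch on the $r_0$-subdivision. For each spanning subgraph $H' \subseteq H$, I would let $G_{H'}$ denote the subgraph of $G_H$ obtained by keeping only the subdivision paths of edges in $E(H')$ (leaving the subdivision vertices of removed edges isolated); clearly $G_{H'} \in \mathcal{F}$. For any $uv \in E(H)$: if $uv \in E(H')$, then $\dist_{G_{H'}}(u,v) = r_0+1$; if $uv \notin E(H')$, then $\dist_{H'}(u,v) \geq 5$, since any $u$--$v$ path in $H'$ together with the missing edge $uv$ would close a cycle in $H$ of length at most $\dist_{H'}(u,v)+1$, which must be at least 6 by the girth hypothesis, and consequently $\dist_{G_{H'}}(u,v) \geq 5(r_0+1)$.

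Apply the hypothesized $(r_0+1, 5(r_0+1)-1)$-distance sketch, of size $f(r_0+1)$, to each $G_{H'}$. Since $5(r_0+1)-1 = 5r_0+4 < 5r_0+5 = 5(r_0+1)$, the two cases above are separated by the sketch, so restricting the decoder and sketch functions to $V(H) \subseteq V(G_{H'})$ yields an $\mathrm{adj}^E$-sketch for the class of spanning subgraphs of $H$ of size at most $f(r_0+1)$. Lemma~\ref{lemma:min degree} then forces $f(r_0+1) = \Omega(d)$. Since $r_0$ is fixed while $d$ can be made arbitrarily large by Corollary~\ref{cor:no}, this contradicts the finiteness of $f(r_0+1)$, and the contrapositive is established.

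The main conceptual obstacle is ensuring that the gap $[r_0+2, 5r_0+4]$ of the distance sketch cleanly separates the two cases; this is precisely where the girth-6 hypothesis enters. A weaker girth bound would allow $\dist_{H'}(u,v) \in \{2,3,4\}$ after deleting an edge $uv$, and the corresponding subdivision distance $2(r_0+1), 3(r_0+1), 4(r_0+1)$ would land in the gap, breaking the reduction. The constant 5 appearing in the statement is matched exactly to the girth 6 produced by K\"uhn--Osthus; under a stronger girth guarantee (as would follow from the Thomassen conjecture mentioned in the introduction) the same argument would push the constant 5 up to any value, which is why the authors can make that remark.
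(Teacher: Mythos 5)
Your proposal is correct and follows essentially the same route as the paper: invoke Corollary~\ref{cor:no} to get, for a fixed subdivision depth, subdivided bipartite graphs of arbitrarily large minimum degree and girth at least 6 inside $\F$, use the girth to show that deleting an edge forces distance at least $5$ (hence at least $5(r_0+1)$ in the subdivision) so the $(r,5r-1)$ gap separates the two cases, and conclude via Lemma~\ref{lemma:min degree} that $f(r_0+1)=\Omega(d)$ for all $d$, a contradiction. The only cosmetic difference is that you keep the isolated subdivision vertices of deleted edges rather than passing to the subdivision of the spanning subgraph itself, which changes nothing.
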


\begin{proof}
Assume for the sake of contradiction that $\mathcal{F}$ has unbounded expansion.  By Corollary
\ref{cor:no}, there is a constant $k$ such that for every $d\ge 0$, $\mathcal{F}$ contains a
$k$-subdivision of some bipartite graph $G=(V,E)$ of minimum degree at least $d$ and girth at least
6. Let $\mathcal{G}$ be the class consisting of the graph $G$, together with all its spanning
subgraphs. By monotonicity, $\mathcal{F}$ contains $k$-subdivisions of all the graphs of
$\mathcal{G}$.

Recall the definition of the partial function $\mathrm{adj}^E$ parameterized by graphs $H \in \G$,
from the discussion preceding Lemma~\ref{lemma:min degree}. We will show that the $(k+1,
5(k+1)-1)$-distance sketch of size $f(k+1)$ for $\F$ can be used to obtain an $\mathrm{adj}^E$-sketch
for $\G$, which must have size $\Omega(d)$ due to Lemma~\ref{lemma:min degree}. This is a
contradiction since we must have $f(k) = \Omega(d)$ for arbitrarily large $d$, whereas $f(k+1)$
is a constant independent of $d$.

Let $H$ be any spanning subgraph of $G$ and let $H^{(k)}$ denote the $k$-subdivision of $H$.
Consider two vertices $u,v \in V(H) \subseteq V(G)$ that are adjacent in $G$.  Observe that
$\dist_{H^{(k)}}(u,v)=(k+1) \dist_H(u,v)$, and thus if $u,v$ are adjacent in $H$ then
$\dist_{H^{(k)}}(u,v)\le k+1$. Assume now that $u,v$ are non-adjacent in $H$. Since $u,v$ are
adjacent in $G$, $G$ has girth at least 6, and $H$ is a spanning subgraph of $G$, it follows that in
this case $\dist_H(u,v)\ge 5$, and thus $\dist_{H^{(k)}}(u,v)\ge 5(k+1)$. Therefore, by using the
same decoder as the $(k+1,5(k+1)-1)$-distance sketch for $\F$, and using the random sketch $\sk$
defined for $G$, we obtain an $\mathrm{adj}^E$-sketch for $H$. This gives an
$\mathrm{adj}^E$-sketch for $\G$ of size $f(k+1)$.
\end{proof}

In our proof of Theorem~\ref{thm:r2rexpansion} we have used
Corollary~\ref{cor:no}, which is based on the result of \cite{KO04}, stating that every graph of
large minimum degree contains a bipartite subgraph of girth at least 6
and large minimum degree. As explained after the proof of
Corollary~\ref{cor:no}, the version of this statement with girth 4
instead of girth 6 is much easier to prove, and using  it gives a version of 
Theorem~\ref{thm:r2rexpansion} with a $(r,3r-1)$-distance sketch
(instead of a $(r,5r-1)$-distance sketch) that does not use the result
of \cite{KO04}.

The following statement was
conjectured by Thomassen~\cite{Tho83}.

\begin{conj}[\cite{Tho83}]\label{conj:thom}
For every integer $k$, every graph of
sufficiently large minimum degree contains a bipartite subgraph of girth at least $k$
and large minimum degree.
\end{conj}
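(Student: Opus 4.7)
This is Thomassen's long-standing conjecture, open in full generality since 1983; only the case $k = 6$ (K\"uhn--Osthus, already invoked in Corollary~\ref{cor:no}) and the trivial case $k \leq 4$ (random bipartition) have been settled. I will not claim a proof here, but sketch what I would try and where the attempt appears to break.

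The plan is a two-stage argument. Stage one: starting from $G$ of minimum degree at least $d$, for a suitable $d = d(k)$ to be chosen, pass to a bipartite subgraph $H$ of minimum degree at least $d/2$ via the standard local-switching argument on a uniformly random bipartition. Stage two: delete edges of $H$ to destroy every cycle of length less than $k$ while retaining large minimum degree. The first attempt would be random sparsification: retain each edge of $H$ independently with probability $p$ and then break the surviving short cycles. The expected number of cycles of length $\ell$ through a given vertex is $O((dp)^{\ell-1})$ while the expected degree is $dp$; balancing these forces $p$ small enough that $dp$ is bounded, which yields no minimum-degree guarantee.

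The realistic hope is therefore a structural dichotomy: either $H$ already contains a bipartite subgraph of girth at least $k$ and minimum degree at least $f(d)$, or $H$ exhibits such a dense packing of short cycles that one can extract a small-diameter substructure and apply induction on $k$. This is the spirit of K\"uhn--Osthus, where the dense packing consists of $C_4$'s forcing a near-$K_{t,t}$ substructure on which one can clean more aggressively. The step I expect to be the main obstacle is generalising this dichotomy to longer forbidden cycles: destroying a $C_\ell$ with $\ell < k$ in a bipartite graph requires balancing deletions across the two sides of the bipartition, and no analogue of the implication ``many $C_4$'s $\Rightarrow$ dense bipartite substructure'' is known for $C_6, C_8, \ldots$ Overcoming this bottleneck likely requires a genuinely new extremal input -- for example, a container-method decomposition of the family of bipartite graphs of girth less than $k$ and large minimum degree, or a random-greedy construction in the spirit of the $H$-free process adapted to forbid an entire family of short cycles simultaneously -- that would let one extract a high-girth bipartite substructure without first having to locate a dense one. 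In view of this, within the current paper I would treat Conjecture~\ref{conj:thom} only as a black box, using it conditionally to replace the constant $5$ in Theorem~\ref{thm:r2rexpansion} by an arbitrary constant, as the authors already remark.
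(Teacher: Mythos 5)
You are right that this statement is Thomassen's 1983 conjecture, which the paper does not prove (and which remains open); the paper, like you, only invokes it conditionally to upgrade the constant $5$ in Theorem~\ref{thm:r2rexpansion}, relying unconditionally only on the girth-$6$ case of K\"uhn--Osthus via Corollary~\ref{cor:no} and noting the easy girth-$4$ variant. Your assessment and intended use of the conjecture therefore match the paper's treatment exactly.
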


If Conjecture~\ref{conj:thom} is true, it readily follows from our proof that the constant 5 in
Theorem~\ref{thm:r2rexpansion} can be replaced by an arbitrarily large constant. Compare this with
Theorem~\ref{thm:weakwood} in the next section, where we prove this result for randomized labelling
schemes whose label size is constant (independent or $r$).

Theorem~\ref{thm:r2rexpansion} bears some similarities with the following result of Ne\v set\v ril
and Ossona de Mendez~\cite{NO15}. An \emph{$r$-neighborhood cover} of a graph $G$ is a cover of its
vertex sets by connected subgraphs $S_1,S_2,\ldots,S_k$ of radius at most $2r$, such that  any ball
of radius $r$ in $G$ is contained in some set $S_i$. The \emph{multiplicity} of the cover is the
maximum, over all vertices $v$ of $G$, of the number of sets $S_i$ containing $v$. It was proved in
\cite{GKS17} that for every class of bounded expansion there is a function $f$ such that the class
has $r$-neighborhood covers of multiplicity $f(r)$, for any $r$ (our proof of Theorem
\ref{thm:distr1} uses similar arguments, as explained earlier). Ne\v set\v ril and Ossona de
Mendez~\cite{NO15} then proved that the converse holds for monotone classes. Note that the existence
of $r$-neighborhood covers of multiplicity $f(r)$ can be used to design randomized
distance-$(r,4r)$ sketches of size at most $O(f(r))$, but the converse does not
seem to be possible (\ie it does not seem possible to obtain $r$-neighborhood covers from
distance-$(r,4r)$ sketches). 

\subsection{Improved Sketches for Large $r$}
\label{section:large r}

We now explain how to improve the dependence in $r$ (at the cost of an increased dependence in $n$)
in Corollary~\ref{cor:distr1}, in the case of planar graphs, or more generally graphs embedded on a
fixed surface. This result is due to Gwena\"el Joret, who we thank for allowing us to include it
here.
We will require the notion of \emph{layering}:

\begin{definition}[Layering]
A layering in a graph $G$ is a partition $V_1, V_2,\ldots$ of its vertex set (where each set $V_i$
is called a \emph{layer}) such that each edge is either inside some layer $V_i$, or between
consecutive layers, say $V_i$ and $V_{i+1}$ (for some $i$).
\end{definition}
\noindent
We will use the following result due to Robertson and Seymour~\cite{RS84} for planar graphs, and
later extended by Eppstein~\cite{Epp00} to general surfaces. Their
results are stated for  balls of bounded
radius but they easily imply the version
below by considering a Breadth-First Search layering. See also Theorem 12 in \cite{DMW17} for a result that
directly implies Theorem \ref{thm:lay}.

\begin{thm}[\cite{Epp00,RS84}]\label{thm:lay}
For any fixed surface $\Sigma$, there is a constant $c$ such that any
graph embeddable on $\Sigma$ has a layering such that for any integer
$k$, the subgraph induced by any $k$ consecutive layers has treewidth
at most $ck$.
\end{thm}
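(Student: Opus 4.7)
The plan is to obtain the layering from a breadth-first search (BFS) tree and to bound the treewidth of $k$ consecutive layers by combining a structural result for planar graphs with a cutting argument that reduces any surface to the sphere.

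First, I would pick an arbitrary root vertex $r$ in $G$ and let $V_i$ be the set of vertices at distance $i$ from $r$ in $G$; then $V_1, V_2, \ldots$ is a valid layering, because in a BFS every edge of $G$ either stays inside some $V_i$ or joins $V_i$ to $V_{i+1}$. Let $H$ denote the subgraph induced by $k$ consecutive layers, say $V_i \cup V_{i+1} \cup \cdots \cup V_{i+k-1}$.

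For the planar case ($\Sigma$ = sphere), I would invoke the Robertson--Seymour result that the treewidth of $H$ is $O(k)$. The idea is to build an explicit tree decomposition of $H$ by combining the planar embedding with the BFS tree: for each vertex $v$ of $H$, attach the unique BFS-tree path from $v$ up to its ancestor in layer $V_i$ (of length at most $k-1$). Using the planar embedding, group these vertical paths into bags indexed by a planar dual-like structure, where each bag is the union of a bounded number of such paths, so it has size $O(k)$; verifying the running-intersection property uses planarity together with the fact that any two consecutive paths share an ancestor. This yields $\mathrm{tw}(H) \leq ck$ for some absolute constant $c$.

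For a general surface $\Sigma$ of Euler genus $g$, I would reduce to the planar case by a standard cutting argument: there exist $O(g)$ simple closed curves in $\Sigma$ whose removal turns $\Sigma$ into a disk. By selecting cycles of the BFS tree (or using a shortest system of non-contractible cycles) and adding their vertices to every bag, one transforms $G$ into a planar graph $G'$ at the cost of an additive $O(g)$ in the treewidth of each $H$. Applying the planar bound to $G'$ yields $\mathrm{tw}(H) \leq c'k$ for a constant $c'$ depending only on $\Sigma$, as required.

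The main obstacle is the planar case itself: showing that the bags constructed from vertical BFS paths have size $O(k)$ and form a genuine tree decomposition requires carefully exploiting planarity. A clean way is to instead derive it from the ``layered treewidth'' perspective: in a planar graph with a BFS layering, each layer is a union of vertical paths in the BFS tree, and the known bound of layered treewidth at most $3$ for planar graphs~\cite{DMW17} immediately gives the conclusion. In the surface case, the analogous bound on layered treewidth is $O(g)$, again yielding the claim.
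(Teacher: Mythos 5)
The paper does not actually prove Theorem~\ref{thm:lay}: it is quoted from \cite{RS84,Epp00}, with the remark that Theorem~12 of \cite{DMW17} (layered treewidth) directly implies it. Your final reduction is exactly that route and is correct: take a BFS layering, invoke the bound that planar graphs have layered treewidth at most $3$ and graphs of Euler genus $g$ have layered treewidth $O(g)$, and observe that restricting the corresponding tree decomposition to $k$ consecutive layers gives bags of size $O(k)$ (constant depending only on $\Sigma$). So in substance you land on the same citations the paper relies on.

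The two ``self-contained'' arguments you sketch on the way, however, do not stand as written, and it is worth being aware of why. The planar bag construction from vertical BFS paths is only asserted, not verified (this is essentially the content of the Robertson--Seymour radius-versus-treewidth argument, which takes real work). More seriously, in the surface case the claim that adding the vertices of $O(g)$ noncontractible cycles to every bag costs only an additive $O(g)$ is false: a shortest noncontractible cycle can meet $k$ consecutive BFS layers in arbitrarily many vertices, since shortest paths can wander within a bounded band of layers. The standard fix (implicit in \cite{Epp00} and explicit in \cite{DMW17}) is to cut along \emph{vertical} BFS-tree paths, which intersect each layer in $O(g)$ vertices in total, so the cost is additive $O(gk)$ per bag, which is still $O(k)$ for fixed $\Sigma$. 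Since you ultimately defer to the layered-treewidth bound, these gaps do not affect your conclusion, but the cutting argument should not be presented as a proof in its current form.
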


\noindent
We will also need the following simple result~\cite[Corollary
6.1]{NO12}.

\begin{lemma}[\cite{NO12}]\label{lem:tw}
There is a constant $c>0$ such that if an $n$-vertex graph $G$ has treewidth at most  $k$, then  for
any integer $r\ge 0$, $\wcol_r(G)\le c k\log n$.
\end{lemma}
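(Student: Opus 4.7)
The plan is to construct a total order on $V(G)$ via a recursive balanced-separator decomposition, and then show that every vertex weakly $r$-reachable from $u$ must lie in one of the $O(\log n)$ separators on the root-to-leaf path of $u$ in the decomposition tree. Consequently the bound on $\wcol_r(G)$ will be independent of $r$.

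First I would use the standard fact that treewidth at most $k$ implies the existence of balanced separators of size at most $k+1$: for any subgraph $H\subseteq G$ there is $S_H \subseteq V(H)$ with $|S_H|\le k+1$ such that every component of $H-S_H$ has at most $|V(H)|/2$ vertices. Applying this recursively yields a rooted decomposition tree $T$ of depth $O(\log n)$, where each internal node $H$ carries a separator $S_H$, the children of $H$ are the induced subgraphs on the components of $H-S_H$, and the leaves are singletons. Every $v\in V(G)$ lies in exactly one separator $S_{H_v}$.

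Next I would order the vertices of $G$ primarily by the depth in $T$ of their separator (shallower first), breaking ties arbitrarily, to obtain a total order $\prec$. For each $u\in V(G)$, set $A(u) := \bigcup\{S_H : H \text{ on the root-to-}H_u\text{ path in }T\}$; then $|A(u)|\le (k+1)\cdot\mathrm{depth}(T)=O(k\log n)$.

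The main step, and the only nontrivial one, is to show that every $v$ weakly $r$-reachable from $u$ lies in $A(u)$, which immediately yields $\wcol_r(G)\le |A(u)|=O(k\log n)$ uniformly in $r$. Given a witnessing path $P$ from $u$ to $v$ with $v=\min_\prec P$, let $H^\star$ be the deepest node of $T$ with $V(P)\subseteq V(H^\star)$; since $u\in V(P)$, $H^\star$ lies on the root-to-$H_u$ path. If $H^\star$ is a leaf then $P=\{u\}$ and $v=u$, and we are done. Otherwise, by maximality of $H^\star$, $P$ cannot fit in a single child of $H^\star$, so $P$ meets $S_{H^\star}$ at some vertex $w$. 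Since $v\preceq w$, the depth of $H_v$ is at most the depth of $H^\star$; since $v\in V(H^\star)$, $H_v$ is a descendant of $H^\star$ (inclusive); combining these two constraints forces $H_v=H^\star$, so $v\in S_{H^\star}\subseteq A(u)$. The key point is that this ``minimum-on-path lies in an ancestor separator'' observation is exactly what removes all dependence on $r$.
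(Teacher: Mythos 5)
Your argument is correct: the recursive balanced-separator decomposition gives an order in which every vertex weakly $r$-reachable from $u$ lies in one of the $O(\log n)$ ancestor separators of size at most $k+1$, so $\wcol_r(G)=O(k\log n)$ uniformly in $r$ (and the exact balance constant of the separators only affects $c$). The paper does not prove this lemma itself but cites \cite{NO12}, and your proof is essentially the standard argument behind that citation: your decomposition is an elimination order witnessing tree-depth $O(k\log n)$, and your observation that the $\prec$-minimum of a witnessing path lies in an ancestor separator is the usual proof that every weak coloring number is bounded by the tree-depth.
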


\begin{thm}\label{thm:distr2}
For any surface $\Sigma$, and integers $r\ge 0$ and $n\ge 1$, the
class of $n$-vertex graphs embeddable on $\Sigma$ has a distance-$(r,r)$
labelling scheme with labels of at most  $O(r \log^2
n)$ bits. 
\end{thm}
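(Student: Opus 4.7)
The plan is to combine the layering structure from Theorem~\ref{thm:lay} with the treewidth-based weak coloring bound of Lemma~\ref{lem:tw} and the slab distance-labeling of Theorem~\ref{thm:distr1}. Take an embedded graph $G$ on $\Sigma$ and a layering $V_1, V_2, \dots$ given by Theorem~\ref{thm:lay}. First I would cover the layers by overlapping \emph{slabs} of $\Theta(r)$ consecutive layers each. Concretely, for each $j \geq 0$ let $S_j$ be the set of vertices in layers $V_{jr+1}, V_{jr+2}, \dots, V_{(j+3)r+1}$, so that each slab consists of $3r+1$ consecutive layers and consecutive slabs are shifted by $r$. The two key properties of these slabs are: (i) every vertex belongs to at most a constant number (four) of slabs, and (ii) for any $u,v \in V(G)$ with $\dist_G(u,v)\leq r$, there is a slab $S_j$ containing both $u,v$ \emph{and} every shortest $u$-to-$v$ path. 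Property (ii) follows from the fact that a path of length at most $r$ starting in layer $V_i$ stays in layers $V_{i-r},\dots,V_{i+r}$ (since edges only stay within or cross between consecutive layers), so a simple interval calculation shows that some slab contains the relevant layer window.

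Next, for each slab $S_j$ I would build a distance-$(r,r)$ labelling scheme of the induced subgraph $G[S_j]$. By Theorem~\ref{thm:lay}, $G[S_j]$ has treewidth at most $c(3r+1) = O(r)$, and then by Lemma~\ref{lem:tw} we have $\wcol_r(G[S_j]) \leq O(r \log n)$. Applying Theorem~\ref{thm:distr1} in $G[S_j]$ (or its labeling-scheme variant from Corollary~\ref{cor:distr1}) gives a distance-$(r,r)$ labelling scheme for $G[S_j]$ of size $O(r + \wcol_r(G[S_j])\cdot \log n) = O(r\log^2 n)$. The final label of a vertex $v \in V(G)$ is the (at most four) slab identifiers of the slabs containing $v$, along with its label under each corresponding per-slab scheme, for a total of $O(r\log^2 n)$ bits. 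The decoder, given labels for $u$ and $v$, looks for a common slab $S_j$ and returns the answer of the slab-$j$ scheme; if no common slab exists then $\dist_G(u,v) > r$ and it returns $0$.

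The only delicate point is the correctness of step (ii): the in-slab scheme reports $\dist_{G[S_j]}(u,v)\leq r$, and we need to conclude $\dist_G(u,v)\leq r$ and vice versa. One direction is trivial since $G[S_j]$ is a subgraph of $G$. For the other direction, when $\dist_G(u,v)\leq r$ we need a shortest path entirely inside the slab, which is exactly what the layering calculation above guarantees by choosing a slab whose layer window contains the interval of layers the shortest path can visit. Once this is in hand, everything else is bookkeeping: four constant-factor copies of an $O(r\log^2 n)$ per-slab label, plus $O(\log n)$ bits for slab indices, still fit in $O(r\log^2 n)$ bits.
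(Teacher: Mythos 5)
Your proposal is correct and takes essentially the same route as the paper: the paper's proof also covers the layering by overlapping slabs of $\Theta(r)$ consecutive layers (it uses $2r$ layers shifted by $r$, so each vertex lies in at most two slabs), notes each slab induces treewidth $O(r)$, and applies Lemma~\ref{lem:tw} together with the weak-reachability labelling of Theorem~\ref{thm:distr1} inside each slab, for $O(r\log^2 n)$ bits total. The only detail to state explicitly is that the decoder must output $1$ as soon as \emph{some} common slab reports distance at most $r$ (an arbitrarily chosen common slab need not contain the short path); with that disjunction, which is what the paper's decoder does, your argument is complete.
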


\begin{proof}
Consider an $n$-vertex graph $G$ embeddable on $\Sigma$ and an integer
$r\ge0$. By Theorem~\ref{thm:lay}, there is a layering  $V_1, V_2,
\ldots$ of $G$ such that for any integer $k$, any $k$ consecutive
layers induce a graph of treewidth  $O(k)$. For any integer $i\ge 0$, let
$L_i$ be the union of layers $V_{ir+1}, V_{ir+2}, \ldots,
V_{(i+2)r}$. Note that each set $L_i$ induces a graph of
treewidth $O(r)$ and each vertex lies in at most 2 sets $L_i$
(namely $L_i$ and $L_{i+1}$, for some $i$). Moreover, any path  of
length at most $r$ in $G$, is contained in some set $L_i$.

Assign each vertex of $G$ a distinct identifier from $[n]$.
By Lemma~\ref{lem:tw}, each set $L_i$ has a total order such that for
any $v\in L_i$,  at
most $O(r \log n)$ vertices are weakly $r$-reachable from $v$ in
$G[L_i]$. 
Each vertex $v$ records the indices of the (at most 2) sets $L_i$ it
lies in, together with the sets of (identifiers of) the vertices that are weakly
$r$-reachable from $v$ in these (at most 2) sets $L_i$ (as in the
proof of Theorem~\ref{thm:distr1}, we store each
of these vertices together with its $v$-rank with respect to the
ordering of $L_i$, so in total each vertex has a label of at most
$2\log n+O(r \log n)\cdot O(\log n)=O(r\log^2 n)$ bits).

As each path of length at most $r$ is contained in some $L_i$, the
property $d(u,v)\le r$ can be checked as in the proofs of
Theorem~\ref{thm:distr1}, by testing whether in
some $L_i$ in which both $u$ and $v$ lie, some vertex is weakly
$s$-reachable from $u$ and weakly $t$-reachable from $v$ (with respect
to the total ordering of $L_i$), such that $s+t\le r$.
\end{proof}

\noindent
Note that one could replace the identifiers of the weakly $r$-reachable vertices with equality
codes, to obtain a sketch; however, this would save a log factor at
best.

\noindent
As observed by a reviewer, Theorem \ref{thm:distr2} can also be
deduced from Corollary \ref{cor:distr1}, combined with the result that
$n$-vertex graphs of bounded genus have weak $r$-coloring number $O(r
\log n)$. More precisely, in Corollary \ref{cor:distr1} we proved that
any class $\mathcal{F}$ has a distance-$(r,r)$ labelling scheme of
size $O(r+\wcol_r(\mathcal{F})\log n)$ but the proof shows the
stronger result that in such a scheme, every $n$-vertex graph $G$ is assigned
labels of size at most $O(r+\wcol_r(G)\log n)$. Now, it remains to prove
that $n$-vertex graphs of bounded genus have weak $r$-coloring number $O(r
\log n)$. As observed by Gwena\"el Joret (personal communication),
this can be quickly deduced from Lemma \ref{lem:tw} and
\cite[Theorem 20]{DJMMUW20}, in a way that is very similar to the
proof of Theorem \ref{thm:distr2}.

\section{Approximate Distance Threshold Sketching}
\label{section:adt}

In this section, we prove the results listed in Section~\ref{section:intro adt}.  Recall that a
class $\F$ admits an $\alpha$-ADT sketch of size $s(n)$ if for every $r \in \bN$ there is a function
$D_r : \zo^* \times \zo^* \to \zo$ such that every graph $G \in \F$ on $n$ vertices admits a probability
distribution over functions $\sk : V(G) \to \zo^{s(n)}$ such that, for all $x,y \in V(G)$,
\begin{align*}
  \dist_G(u,v) \leq r &\implies \Pr[ D_r(\sk(u), \sk(v)) = 1 ] \geq 2/3 \,,\\
  \dist_G(u,v) > \alpha r &\implies \Pr[ D_r(\sk(u), \sk(v)) = 0 ] \geq 2/3 \,.
\end{align*}
\noindent
We emphasize that the size of the sketch should not depend on $r$, especially when $s(n)$ is constant (unlike in
the notion of small-distance sketches studied in previous sections).  A desirable property of these
sketches is that the decoder $D_r$ does not depend on $r$ either, so that $D_r = D_{r'} = D$ for
every $r,r'$.  We will call such sketches \emph{distance-invariant}. We remark that any $\alpha$-ADT
sketch can be made distance-invariant by including the value of $r$ in the sketch using $\log n$
bits, which is useful to keep in mind for some of the lower bounds below. However, our main goal is
to determine when \emph{constant-size} sketches are possible, and for this goal distance-invariance
does not make any qualitative difference, as shown in the following simple proposition:

\begin{proposition}
\label{prop:r-invariant equivalence}
If $\F$ admits a constant-size $\alpha$-ADT sketch, then $\F$ admits a constant-size
distance-invariant $\alpha$-ADT sketch.
\end{proposition}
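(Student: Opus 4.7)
The plan is to encode the choice of decoder inside the sketch itself, using the observation that when the sketch has constant size, there are only boundedly many possible decoders.

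More precisely, suppose $\F$ admits an $\alpha$-ADT sketch of constant size $c$, with decoders $\{D_r\}_{r \in \bN}$ and, for each $r$ and each $G \in \F$, a sketching distribution $\sk_r^G$ on functions $V(G) \to \zo^c$. Each $D_r$ is a function $\zo^c \times \zo^c \to \zo$, so there are at most $N \define 2^{2^{2c}}$ possible decoders; fix an enumeration $D^{(1)}, \dots, D^{(N)}$ and a function $\phi : \bN \to [N]$ with $D_r = D^{(\phi(r))}$. I would then define a new sketch of size $c + \lceil \log N\rceil$ by prepending, for each vertex $v$ of $G \in \F$, the deterministic label $\phi(r) \in [N]$ to the random string $\sk_r^G(v)$. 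The new decoder $D$ reads the two prefixes $i_1, i_2 \in [N]$ and the two suffixes $s_1, s_2 \in \zo^c$, and outputs $D^{(i_1)}(s_1, s_2)$ (the output when $i_1 \neq i_2$ is irrelevant and can be set arbitrarily).

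The correctness is essentially immediate: since $\phi(r)$ is deterministic given $r$, both sketched vertices in a graph $G$ receive the same prefix $\phi(r)$, so the new decoder simply simulates $D_r$ on the underlying sketches; the error bounds are inherited from the original sketch. The new decoder $D$ is independent of $r$, so the new sketch is distance-invariant. Because $c$ is constant, the size $c + \lceil\log N\rceil = c + 2^{2c}$ is also constant, which completes the proof.

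The only mild subtlety is ensuring that $\phi(r)$ can be written down as a fixed-length string used as a prefix (so that the meta-decoder can cleanly parse the input): this is handled by padding $\phi(r)$ to exactly $\lceil \log N\rceil$ bits. There is no real obstacle; the argument works because constant sketch size forces the space of decoders to be finite, and a deterministic index into this finite family costs only a constant number of additional bits.
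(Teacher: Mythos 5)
Your proposal is correct and follows essentially the same route as the paper: both arguments observe that a constant sketch size $s$ means there are at most $2^{2^{2s}}$ possible decoders, and both prepend a constant-length (deterministic) description or index of $D_r$ to each vertex's sketch so that a single $r$-independent meta-decoder can simulate $D_r$. The only difference is cosmetic (indexing into an enumeration versus writing out the truth table), so no further comparison is needed.
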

\begin{proof}
Let $s$ be the size of the $\alpha$-approximate distance sketch. Then it holds that for every
$r \in \bN$, the function $D_r$ has domain $\zo^s \times \zo^s$. There are at most $2^{2^{2s}}$
functions $\zo^s \times \zo^s \to \zo$; therefore there are at most $2^{2^{2s}}$ distinct functions
$D_r : \zo^s \times \zo^s \to \zo$. We obtain a distance-invariant $\alpha$-approximate distance
sketch as follows. For each $G \in \F$, each $r \in \bN$, and each $u \in V(G)$, we sample $\sk_r$
from the distribution defined by the constant-size sketch with decoder $D_r$, and we construct
$\sk'(u)$ by concatenating at most $2^{2s}$ bits to specify the function $D_r$. We then define the
decoder $D : \zo^{2^{2s}+s} \times \zo^{2^{2s}+s} \to \zo$ on inputs $\sk'(u), \sk'(v)$ as
$D_r(\sk(u), \sk(v))$, where $D_r$ is the function specified by $\sk'(u)$.
\end{proof}

\noindent
We will show that for monotone classes, ADT sketching implies small-distance
sketching (and therefore adjacency sketching). Any class (monotone or not) that is adjacency
sketchable contains at most $2^{O(n \log n)}$ graphs on $n$ vertices \cite{KNR92,HWZ21}, so any
monotone ADT-sketchable class must satisfy this condition also. One may
wonder if these conditions hold for non-monotone ADT-sketchable classes. The next simple example
shows that this is not so.
\begin{example}
\label{example:general adt}
Consider the class of graphs obtained by choosing any graph $G$ and adding an arbitrary path, with
one endpoint connected to all vertices of $G$. The set of $n$-vertex graphs in this class contains
all $(n-1)$-vertex graphs as induced subgraphs, so it has more than $2^{O(n \log n)}$ graphs and is
not adjacency sketchable. But it is $2$-ADT sketchable: every pair of vertices in $G$ have distance
at most 2 and are equidistant to all other vertices, so we may essentially reduce the problem to a
single path.  Here we have included a path instead of a single vertex so that the graph class has
unbounded diameter.
\end{example}

\subsection{Lower Bound for the Class of All Graphs}

We will write $\fG$ for the class of all graphs, and for an integer
$N$, we will write $\fG_N$ for the class of
all graphs with vertex set $[N]$. It follows from Theorem 3.4 in \cite{TZ05} that
for any $r$,  $\fG$ admits an $(r,\alpha r)$-distance labelling scheme with labels of size
$O(n^{2/\alpha}\log^{2-2/\alpha} n)$.  In \cite{TZ05} a lower bound was also given, but this was for
deterministic approximate distance labels, which must allow an approximate computation of all
distances. We give a stronger result (although the proof is nearly the same) that holds even for the
case where we allow only a $(1,\alpha)$-distance sketch; our bound of $\Omega(n^{1/\alpha})$ has
nearly the same dependence on $\alpha$ as the upper bound.

We will need the following classical result (see Lemma 15.3.2 in~\cite{Mat13} and the references
therein).

\begin{lemma}[\cite{Mat13}]\label{lem:matou}
For any $\ell\ge 2$ and $n\ge 2$, there is an $n$-vertex graph with at
least $\tfrac19 n^{1+1/{\ell}}$ edges and without any cycle of
length at most $\ell+1$. 
\end{lemma}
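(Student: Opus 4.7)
The plan is to use the standard probabilistic deletion method. I would sample a random graph $G \sim G(n,p)$ on vertex set $[n]$, with $p = \lambda\, n^{-(\ell-1)/\ell}$ for a small constant $\lambda > 0$ to be tuned. The expected number of edges is $\binom{n}{2}p$, which is of order $\tfrac{\lambda}{2}\,n^{1+1/\ell}$.

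Next I would bound the expected number of short cycles. The number of potential $k$-cycles on vertex set $[n]$ is $\binom{n}{k}(k-1)!/2 \leq n^k/(2k)$, and each appears in $G$ with probability $p^k$. Hence if $X$ denotes the number of cycles of length at most $\ell+1$ in $G$,
\[
  \mathbb{E}[X] \;\leq\; \sum_{k=3}^{\ell+1} \frac{(np)^k}{2k} \;=\; \sum_{k=3}^{\ell+1} \frac{(\lambda n^{1/\ell})^k}{2k}.
\]
Provided $\lambda n^{1/\ell}$ is at least a constant, this sum is controlled by the $k=\ell+1$ term, which is of order $\tfrac{\lambda^{\ell+1}}{\ell+1}\,n^{1+1/\ell}$.

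I would then delete one edge from every short cycle of $G$. The resulting graph $G'$ has girth at least $\ell+2$ by construction, and
\[
  \mathbb{E}[|E(G')|] \;\geq\; \mathbb{E}[|E(G)|] - \mathbb{E}[X] \;\geq\; \left(\frac{\lambda}{2} - \frac{\lambda^{\ell+1}}{\ell+1}\right) n^{1+1/\ell},
\]
up to lower-order terms. Choosing $\lambda$ appropriately (e.g.\ $\lambda$ of order $1/2$) makes the right-hand side at least $\tfrac{1}{9}\,n^{1+1/\ell}$; by averaging, some realization of $G'$ achieves this, giving the desired graph.

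The main obstacle is pinning down the explicit constant $\tfrac{1}{9}$ uniformly over all $\ell\ge 2$ and $n\ge 2$. The tight regime is small $\ell$ (especially $\ell=2$), where the ratio between the edge term and the deletion term is tightest, requiring a careful optimization of $\lambda$ and a more precise upper bound on the cycle-count sum rather than only the leading-order term. The small-$n$ regime, where the asymptotic estimates are weakest, is painless: once $\tfrac{1}{9}n^{1+1/\ell} < 1$ the claim is trivial, and intermediate sizes can be handled by a direct construction such as a matching.
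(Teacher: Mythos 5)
Your proposal is correct, and there is nothing in the paper to compare it against: Lemma~\ref{lem:matou} is quoted from \cite{Mat13} (Lemma 15.3.2 there) without proof, and the deletion-method argument you give is exactly the standard (Erd\H{o}s-style) proof of that cited result. The constant-chasing you flag as the main obstacle is in fact comfortable: taking $p=\tfrac12 n^{-1+1/\ell}$, so $x=\tfrac12 n^{1/\ell}$, the expected number of short cycles is at most $\sum_{k=3}^{\ell+1}\tfrac{x^k}{2k}\le \tfrac13 x^{\ell+1}\le \tfrac{1}{24}\,n^{1+1/\ell}$ once $x\ge 2$, while the expected number of edges is at least $\tfrac14\bigl(1-\tfrac1n\bigr)n^{1+1/\ell}$, leaving well over $\tfrac19 n^{1+1/\ell}$ after deletion (the worst case $\ell=2$ included); and in the complementary regime $n^{1/\ell}\le 4$ the target is at most $\tfrac49 n\le n-1$, so a tree (or, with a tiny check for small odd $n$, your matching) already suffices.
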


The proof of the following lower bound is inspired by a seminal proof of Matou\v{s}ek on
non-embeddability of graph metrics in Euclidean space~\cite{Mat96} (see also Proposition 5.1
in~\cite{TZ05} for a closer application on approximate distance oracles). 

\begin{thm}\label{thm:lbdls}
For any $\alpha\ge 2$ and $n\ge 2$, there exists a class
$\F$ of $n$-vertex graphs such that any
 distance-$(1,\alpha)$ labelling scheme for
$\F$ requires labels of size at least $\tfrac19n^{1/\alpha}$. 
\end{thm}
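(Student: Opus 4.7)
The plan is to realize $\F$ as the family of all spanning subgraphs of a single high-girth, relatively dense graph, and then push the $\alpha$-ADT requirement through a straightforward counting argument that forces the labels to be large.

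First I would set $\ell = \lfloor \alpha \rfloor$, so that $\ell \ge 2$ and $\ell + 1 > \alpha$, while $n^{1/\ell} \ge n^{1/\alpha}$. Applying Lemma~\ref{lem:matou} with this value of $\ell$ produces an $n$-vertex graph $G$ of girth strictly greater than $\ell+1$ with at least $\tfrac19 n^{1+1/\ell}$ edges. I would then take $\F = \{G_S : S \subseteq E(G)\}$ to be the set of $2^{|E(G)|}$ spanning subgraphs of $G$; each $G_S$ has vertex set $V(G)$, so $\F$ is a class of $n$-vertex graphs as required.

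The structural heart of the proof is the following dichotomy: for every $G_S \in \F$ and every edge $e=uv \in E(G)$, either $e \in S$ and $\dist_{G_S}(u,v) = 1$, or $e \notin S$ and $\dist_{G_S}(u,v) > \alpha$. The second case follows because any $u$-$v$ walk in $G - e$ together with $e$ would form a cycle of length at most $\ell+1$ in $G$, contradicting the girth bound; hence $\dist_{G-e}(u,v) \ge \ell+1 > \alpha$, and removing further edges (to get to $G_S \subseteq G-e$) only makes distances larger. Therefore, given the label tuple $(\ell_{G_S}(v))_{v \in V(G)}$, one can recover $S$ by running the decoder on every $e = uv \in E(G)$ and placing $e$ in $S$ exactly when $D(\ell_{G_S}(u), \ell_{G_S}(v)) = 1$.

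Thus the map $S \mapsto (\ell_{G_S}(v))_{v \in V(G)}$ is injective. Since there are $2^{|E(G)|} \ge 2^{n^{1+1/\ell}/9}$ subsets $S$ and at most $(2^s)^n$ possible label tuples of size $s$, we get $sn \ge \tfrac19 n^{1+1/\ell}$, i.e.\ $s \ge \tfrac19 n^{1/\ell} \ge \tfrac19 n^{1/\alpha}$. The only delicate point is the choice of $\ell$ to simultaneously satisfy $\ell \ge 2$ (needed for Lemma~\ref{lem:matou}), $\ell+1 > \alpha$ (needed to force distance $>\alpha$ in $G_S$ when $e \notin S$), and $n^{1/\ell} \ge n^{1/\alpha}$ (needed to meet the claimed bound); the hypothesis $\alpha \ge 2$ is precisely what makes $\ell = \lfloor \alpha \rfloor$ meet all three constraints.
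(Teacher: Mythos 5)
Your proof is correct and follows essentially the same route as the paper: take the dense high-girth graph from Lemma~\ref{lem:matou}, let $\F$ be its spanning subgraphs, use the girth to show a distance-$(1,\alpha)$ decoder recovers the edge set exactly, and count. Your explicit choice $\ell=\lfloor\alpha\rfloor$ is a slightly more careful handling of non-integer $\alpha$ than the paper's, but the argument is otherwise identical.
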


\begin{proof}
For $\alpha\ge 2$ and $n\ge 2$, let $G$ be an $n$-vertex graph with $m\ge \tfrac19 n^{1+\tfrac1{\alpha}}$ edges and without any cycle of
length at most $\alpha+1$ (given by Lemma~\ref{lem:matou}).  Consider
a (deterministic) distance-$(1,\alpha)$ distance labelling scheme for
the class of all spanning subgraphs of $G$. Let
$H$ be a subgraph of $G$. Note that for any edge $uv\in E(G)$, $u$ and
$v$ are at distance 1 in $H$ if $uv\in E(H)$, and are at distance
greater than $\alpha$ otherwise (since $G$ has no cycle of length at most
$\alpha+1$). It follows that given the labels of $u$ and $v$ in $H$, the
decoder outputs 1 if $uv\in E(G)$ and 0 otherwise. Consequently, for
any two distinct subgraphs $H_1,H_2$ of $G$, the sequences of labels
of the vertices $v_1,v_2,\ldots,v_n$ of $G$ in $H_1$ and $H_2$ are
distinct. As there are $2^{m}$ such subgraphs, some subgraph $H$ of
$G$ is such that the sequence of labels of $v_1,v_2,\ldots,v_n$ in $H$
takes at least $m\ge \tfrac19 n^{1+\tfrac1{\alpha}}$ bits, and thus some
vertex of $H$ has a label of size at least $\tfrac19
n^{1/{\alpha}}$, as desired.
\end{proof}

\noindent
Due to Proposition~\ref{prop:derandomization}, we obtain the following immediate corollary.

\begin{corollary}\label{cor:lbdlsrando}
For any $\alpha\ge 2$ and $n\ge 2$, there exists a class
$\F$ of $n$-vertex graphs such that any
 distance-$(1,\alpha)$  sketch for
$\F$ requires labels of size $\Omega(n^{1/\alpha}/\log n)$. 
\end{corollary}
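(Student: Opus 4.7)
The plan is to derive this corollary directly from Theorem~\ref{thm:lbdls} via Proposition~\ref{prop:derandomization}, which converts sketches into deterministic labelling schemes at a $\log n$ multiplicative cost. This is essentially a contrapositive/derandomization argument: a small sketch would yield a small deterministic labelling scheme, contradicting the deterministic lower bound.

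First I would take $\F$ to be the class of $n$-vertex graphs produced by Theorem~\ref{thm:lbdls} for the given $\alpha$ and $n$ (namely, the class of all spanning subgraphs of the dense high-girth graph $G$ from Lemma~\ref{lem:matou}). Suppose for the sake of contradiction that $\F$ admits a distance-$(1,\alpha)$ sketch of size $s(n)$. By Proposition~\ref{prop:derandomization}, applied to the partial function $f_G$ encoding the distance-$(1,\alpha)$ predicate, we obtain a deterministic distance-$(1,\alpha)$ labelling scheme for $\F$ of size at most $C \cdot s(n) \log n$ for some absolute constant $C$.

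Theorem~\ref{thm:lbdls} guarantees that any deterministic distance-$(1,\alpha)$ labelling scheme for $\F$ requires labels of size at least $\tfrac{1}{9} n^{1/\alpha}$. Combining the two inequalities gives $C \cdot s(n) \log n \geq \tfrac{1}{9} n^{1/\alpha}$, whence
\[
  s(n) \;\geq\; \frac{1}{9 C} \cdot \frac{n^{1/\alpha}}{\log n} \;=\; \Omega\!\left(\frac{n^{1/\alpha}}{\log n}\right),
\]
which is the desired bound.

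There is essentially no obstacle here: all the combinatorial work (choosing a dense graph of girth greater than $\alpha+1$ and using it to force the sketch to distinguish all $2^m$ spanning subgraphs on edge pairs) is already done inside Theorem~\ref{thm:lbdls}, and the derandomization loss of $\log n$ is exactly the source of the $\log n$ factor appearing in the corollary's bound. The only thing to double-check is that Proposition~\ref{prop:derandomization} applies to partial functions of the form $f_G$ with values in $\{0,1,*\}$ (not merely total adjacency-type functions), which is indeed how the proposition was stated in Section~\ref{section:sketching definitions}.
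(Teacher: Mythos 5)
Your proposal is correct and matches the paper's own argument: the paper derives this corollary immediately from Theorem~\ref{thm:lbdls} via Proposition~\ref{prop:derandomization}, exactly as you do, with the $\log n$ loss from derandomization accounting for the denominator in the bound.
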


\noindent
Note that in Theorem~\ref{thm:lbdls} and Corollary~\ref{cor:lbdlsrando}, we do not assume that the
distance labelling scheme under consideration is distance-invariant (indeed, we only use the case
$r=1$ to obtain the lower bound). 

\subsubsection{Lower Bound for Bounded-Degree Graphs}

We now prove that a monotone class may have bounded expansion but still have a lower bound of
$n^{\Omega(1/\alpha)}$ on the $\alpha$-ADT sketch size. This bound
holds for the class of graphs of maximum
degree 3, which has expansion exponential in $r$ \cite{NO12}.

Write $\F_{n,3}$ for the class of all $n$-vertex graphs of maximum degree at most 3. We will need
the following construction: Given an $N$-vertex graph $G$ and an
integer $\ell\ge 2\lceil \log N\rceil+1$, let
$G[\ell]$ be any graph obtained from $G$ as follows: each vertex $v$ of $G$ is associated with a
rooted balanced binary tree $T_v$ in $G[\ell]$, whose leaves are indexed by the neighbors of $v$ in $G$
(the trees $T_v$ are balanced, so they have depth at most $\lceil \log
N\rceil $). Then $G[\ell]$ consists in the
disjoint union of all trees $T_v$, for $v\in V(G)$, together with paths connecting the leaf of $T_v$
indexed by $u$ to the leaf of $T_u$ indexed by $v$, for any edge $uv$ of $G$. The length of the path
connecting these two leaves is such that the distance in $G[\ell]$ between  the root of $T_v$ and the root
of $T_u$ is precisely $\ell$.

\begin{thm}\label{thm:cubic}
Assume that there is a real $\alpha$ such that
the class $\F_{n,3}$ has a distance-invariant $\alpha$-ADT sketch of size $s(n)$. Then for any
$\eps>0$, we have $s(n)=\Omega\left(n^{\frac{1}{4\alpha}-\eps}\right)$.
\end{thm}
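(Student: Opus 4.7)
The plan is to reduce the $\alpha$-ADT problem on a cubic graph to the adjacency sketching problem for spanning subgraphs of a dense high-girth graph, for which Lemma~\ref{lemma:min degree} already gives a lower bound. First, I would apply Lemma~\ref{lem:matou} with integer parameter $\lceil \alpha \rceil$ to obtain an $N$-vertex graph with at least $\tfrac{1}{9}N^{1+1/\lceil\alpha\rceil}$ edges and girth at least $\lceil\alpha\rceil + 2 \geq \alpha + 2$. Iteratively removing any vertex of degree below the current average (which, by a standard computation, never decreases the average) then produces a subgraph $G'$ of minimum degree $d = \Omega(N^{1/\alpha})$ and girth at least $\alpha + 2$. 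Taking $\ell := 2\lceil\log N\rceil(\alpha+1) + 1$ (which satisfies the constraint $\ell \geq 2\lceil\log N\rceil + 1$ of the $G[\ell]$ construction, and is also large enough for the distance gap below), I would form $G'[\ell]$, a max-degree-$3$ graph on $n = O(\alpha N^{1+1/\alpha}\log N)$ vertices. For any spanning subgraph $H$ of $G'$, the graph $H[\ell]$ is then a subgraph of $G'[\ell]$, so also a member of $\F_{n,3}$.

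The central step is the distance gap: for each edge $uv \in E(G')$ and each spanning subgraph $H$ of $G'$, the roots $\rho_u, \rho_v$ of the trees $T_u, T_v$ in $H[\ell]$ satisfy $\dist_{H[\ell]}(\rho_u, \rho_v) = \ell$ if $uv \in E(H)$, by the design of the construction, and $\dist_{H[\ell]}(\rho_u, \rho_v) > \alpha \ell$ if $uv \notin E(H)$. To prove the second case, the idea is that any $\rho_u$--$\rho_v$ path in $H[\ell]$ decomposes into a sequence of full traversals of subdivided paths (each corresponding to one edge of $H$ and contributing length $\ell - 2\lceil\log N\rceil$) interleaved with traversals inside the trees $T_w$; reading off this sequence of edges gives a walk from $u$ to $v$ in $H$. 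Since $H \subseteq G'\setminus\{uv\}$ and $G'$ has girth at least $\alpha + 2$, such a walk has length at least $\alpha + 1$, and so the total length is at least $(\alpha+1)(\ell - 2\lceil\log N\rceil)$, which exceeds $\alpha \ell$ by the choice of $\ell$. I expect this decomposition argument to be the main technical obstacle, since it requires carefully tracking how a shortest path enters and leaves each tree and verifying that partial traversals of subdivided paths cannot shortcut the analysis.

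Once the distance gap is established, the $\alpha$-ADT sketch for $\F_{n,3}$ of size $s(n)$ applied at threshold $r = \ell$ to $H[\ell]$ yields, by defining the sketch of each $v \in V(G')$ to be the sketch of the root $\rho_v$, an $\mathrm{adj}^{E(G')}$-sketch of size $s(n)$ for the class of spanning subgraphs of $G'$. Lemma~\ref{lemma:min degree} then forces $s(n) = \Omega(d) = \Omega(N^{1/\alpha})$. Finally, the coarse estimate $n \leq O(\alpha N^{1+1/\alpha}\log N) \leq N^{4}$, valid for $N$ large enough since $1 + 1/\alpha \leq 4$ for every $\alpha \geq 1/3$, gives $N \geq n^{1/4 - o(1)}$, and hence $s(n) = \Omega(N^{1/\alpha}) = \Omega(n^{1/(4\alpha) - \eps})$ for every fixed $\eps > 0$, as desired.
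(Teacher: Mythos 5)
Your reduction is sound in outline, but it takes a somewhat different route from the paper's. The paper sets $\ell=\lceil 4\log N\rceil$, turns the $\alpha$-ADT sketch for $\F_{n,3}$ into a distance-$(1,2\alpha)$ sketch for the class $\fG_N$ of \emph{all} $N$-vertex graphs, and then invokes Corollary~\ref{cor:lbdlsrando} (i.e.\ Theorem~\ref{thm:lbdls} plus derandomization), where the high-girth graph of Lemma~\ref{lem:matou} and the counting over its spanning subgraphs live. You instead mirror the proof of Theorem~\ref{thm:r2rexpansion}: you build the high-girth, large-minimum-degree graph $G'$ first, subdivide via $G'[\ell]$ with $\ell=\Theta(\alpha\log N)$, and reduce directly to an $\mathrm{adj}^{E}$-sketch for the spanning subgraphs of $G'$, finishing with Lemma~\ref{lemma:min degree}. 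This is legitimate, and it even avoids the $\log$-factor loss that the paper incurs through Proposition~\ref{prop:derandomization}; the distance inequality $(\ell-2\lceil\log N\rceil)\dist_H(u,v)\le \dist_{H[\ell]}(\rho_u,\rho_v)\le \ell\,\dist_H(u,v)$ that you flag as the technical obstacle is exactly the inequality the paper asserts without proof for $G[\ell]$, so relying on it is consistent with the paper's level of detail.

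The one genuine flaw is quantitative: applying Lemma~\ref{lem:matou} with parameter $\lceil\alpha\rceil$ gives only $\tfrac19 N^{1+1/\lceil\alpha\rceil}$ edges, hence a subgraph of minimum degree $d=\Omega\bigl(N^{1/\lceil\alpha\rceil}\bigr)$, \emph{not} $\Omega\bigl(N^{1/\alpha}\bigr)$ as you claim. For non-integer $\alpha$ this only yields $s(n)=\Omega\bigl(n^{1/(4\lceil\alpha\rceil)-\eps}\bigr)$, which falls short of the stated bound $\Omega\bigl(n^{1/(4\alpha)-\eps}\bigr)$ (e.g.\ $\alpha=1.1$ gives exponent $1/8$ instead of roughly $0.227$). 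The fix stays inside your framework: apply the lemma with $\lfloor\alpha\rfloor$. Since cycle lengths are integers, girth at least $\lfloor\alpha\rfloor+2$ already forces $\dist_H(u,v)\ge\lfloor\alpha\rfloor+1>\alpha$ whenever $uv\in E(G')\setminus E(H)$, and then choosing $\ell= C_\alpha\lceil\log N\rceil$ with $C_\alpha>2(\lfloor\alpha\rfloor+1)/(\lfloor\alpha\rfloor+1-\alpha)$ gives $(\lfloor\alpha\rfloor+1)(\ell-2\lceil\log N\rceil)>\alpha\ell$, so the same decoder $D_\ell$ still separates the two cases, now with $d=\Omega\bigl(N^{1/\lfloor\alpha\rfloor}\bigr)=\Omega\bigl(N^{1/\alpha}\bigr)$ and the claimed exponent. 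Two smaller points: the cleaning step should remove vertices of degree below \emph{half} the current average (removing a vertex merely below the average can decrease it); the standard fact that a graph of average degree $\bar d$ has a subgraph of minimum degree at least $\bar d/2$ is what you want. Also, since you fix the single threshold $r=\ell$ (depending only on $N$ and $\alpha$), you never actually use distance-invariance, which is fine but worth stating.
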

\begin{proof}
Recall that $\fG$ is the class of all graphs, and $\fG_N$ is the class of all graphs on vertex set
$[N]$.  For a graph $G \in \fG_N $, consider the graph $H:=G[\ell]$ as defined above, with
$\ell=\lceil  4 \log N\rceil$. We denote the root of each tree $T_v$ in $H$ by $r_v$ (see the
paragraph above for the definition of $T_v$). Observe that for any $u,v\in V(G)$,
\[
\tfrac{\ell}2 \cdot \dist_G(u,v) \le (\ell- 2 \log N)\dist_G(u,v) \le \dist_H(r_u,r_v)\le \ell \cdot \dist_G(u,v).
\]
Note that $H\in \F_{n,3}$, with $n\le N\cdot 2N+ {N\choose 2}\cdot 5 \log N\le 8 N^2 \log N$, for
sufficiently large $N$.  We construct a distance-$(1,2\alpha)$ sketch for $\fG_N$ as follows. Let
$D$ be the decoder for the $\alpha$-ADT sketch for $\F_{n,3}$.  Given $G \in \fG_{N}$, the encoder
computes a graph $H$ as above. Since $H \in \F_{n,3}$, for any $r \in \bN$ there is a probability
distribution over functions $\sk_r : V(H) \to \zo^{s(n)}$ such that for all $u,v \in V(H)$:
\begin{align*}
  \dist_H(u,v) \leq r &\implies \Pr[ D(\sk_r(u),\sk_r(v)) = 1 ] \geq 2/3 \,,\\
  \dist_H(u,v) > \alpha r &\implies \Pr[ D(\sk_r(u),\sk_r(v)) = 0 ] \geq 2/3 \,.
\end{align*}
To obtain a sketch for $G$, draw $\sk_\ell : V(H) \to \zo^{s(n)}$ from the appropriate distribution,
and assign to each vertex $v \in V(G)$ the value $\sk'(v) = \sk_\ell(r_v)$. We establish the correctness
of this sketch as follows.  Let $u,v \in V(G)$ and suppose that $\dist_G(u,v) \leq 1$. Then
$\dist_H(r_u,r_v) \leq \ell$, so we have
\[
  \Pr[ D(\sk_\ell(r_u), \sk_\ell(r_v)) = 1 ] \geq 2/3 \,.
\]
Now suppose $\dist_G(u,v) > 2 \alpha$. Then $\dist_H(r_u,r_v) > \alpha \ell$, so we have
\[
  \Pr[ D(\sk_\ell(r_u), \sk_\ell(r_v)) = 0 ] \geq 2/3 \,.
\]
We therefore have a $(1,2\alpha)$-distance sketch for $\fG_N$ of size $s(n)$. Assume for
contradiction that $s(n) = O\left(n^{\tfrac{1}{4\alpha}-\eps}\right)$ for some $\eps >0 $.
By Corollary~\ref{cor:lbdlsrando}, it must be that any distance-$(1,2\alpha)$ sketch for $\G_N$ has
size $\Omega(N^{1/2\alpha}/\log N)$. Therefore we must have $s(n) = \Omega(N^{1/2\alpha}/\log N)$,
so $N^{1/2\alpha}/\log N = O\left(n^{\tfrac{1}{4\alpha}-\eps}\right)$. But
$n \leq 8 N^2 \log N$ for sufficiently large $N$, so
\[
  \frac{N^{1/2\alpha}}{\log N} = O\left(N^{\tfrac{1}{2\alpha} - 2\eps}
\log^{\tfrac{1}{4\alpha}-\eps} N \right) \,,
\]
which is a contradiction.
\end{proof}

\subsection{ADT Sketching Implies Bounded Expansion}

We now prove that if a monotone class $\mathcal{F}$ is ADT sketchable, then $\mathcal{F}$ has
bounded expansion. This is an extension of a similar (unpublished) result for classes of bounded
Assouad-Nagata dimension.

\begin{thm}\label{thm:weakwood}
Let $\F$ be any monotone class of graphs that is $\alpha$-ADT sketchable, for some $\alpha > 1$.
Then $\F$ has bounded expansion.
\end{thm}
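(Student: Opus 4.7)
The plan is to prove the contrapositive: if $\F$ is monotone with unbounded expansion, then a constant-size $\alpha$-ADT sketch for $\F$ would yield, for every $N$, a constant-size distance-$(1,\alpha)$ sketch for the class $\fG_N$ of all $N$-vertex graphs, contradicting Corollary~\ref{cor:lbdlsrando}.

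The key structural step is to produce \emph{balanced} clique subdivisions inside $\F$. By Theorem~\ref{thm:no} and monotonicity, unbounded expansion of $\F$ yields a fixed $k_0$ such that, for every $d$, $\F$ contains a $k_0$-subdivision of some graph of average degree at least $d$ (as in the proof of Corollary~\ref{cor:no}). Applying the extremal result of \cite{LM20} on balanced subdivisions of cliques then produces, inside this graph, a balanced $K_N$-subdivision --- each edge of $K_N$ being replaced by a path of the same length. Composing with the outer $k_0$-subdivision and invoking monotonicity once more yields, for every $N$, some integer $K=K(N)$ for which the $K$-subdivision of $K_N$ belongs to $\F$.

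The reduction is then immediate. For any graph $G$ on $[N]$, the $K$-subdivision $G^{(K)}$ is a subgraph of the $K$-subdivision of $K_N$ and hence lies in $\F$ by monotonicity. For every $u,v \in V(G) \subseteq V(G^{(K)})$ one has the exact scaling $\dist_{G^{(K)}}(u,v) = (K+1)\dist_G(u,v)$, so with threshold $r = K+1$,
\[
  \dist_G(u,v) \leq 1 \Longrightarrow \dist_{G^{(K)}}(u,v) \leq r, \qquad
  \dist_G(u,v) > \alpha \Longrightarrow \dist_{G^{(K)}}(u,v) > \alpha r.
\]
Feeding $G^{(K)}$ and threshold $r$ into the $\alpha$-ADT sketch of $\F$ gives a distance-$(1,\alpha)$ sketch for $G$; since the ADT sketch size is independent of $n$ by assumption, this is a constant-size sketch for $\fG_N$, contradicting Corollary~\ref{cor:lbdlsrando} for $N$ sufficiently large.

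The main obstacle is controlling the \emph{uniformity} of the subdivision length inside $\F$: the classical Bollob\'as--Thomason-type subdivision theorems produce only unbalanced $K_N$-subdivisions, for which the clean scaling identity $\dist_{G^{(K)}} = (K+1)\dist_G$ does not hold, and without such an identity the distance levels of the ADT sketch do not line up neatly with adjacency versus non-adjacency in $G$. The balanced subdivision theorem of \cite{LM20} is precisely what bridges this gap, and is the reason we must appeal to this ``fairly involved'' extremal result rather than older topological minor theorems.
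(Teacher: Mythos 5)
Your proposal is correct and follows essentially the same route as the paper's proof: unbounded expansion plus monotonicity yields (via Theorem~\ref{thm:no} and the argument of Corollary~\ref{cor:no}) exact subdivisions of graphs of arbitrarily large average degree, the balanced-subdivision theorem of \cite{LM20} upgrades these to $K$-subdivisions of $K_N$ for every $N$, and the exact distance scaling at branch vertices converts the $\alpha$-ADT sketch into a constant-size distance-$(1,\alpha)$ sketch for all $N$-vertex graphs, contradicting Corollary~\ref{cor:lbdlsrando}. The only (harmless) difference is that by fixing the single threshold $r=K(N)+1$ for each $N$ you avoid the paper's appeal to Proposition~\ref{prop:r-invariant equivalence} (distance-invariance), which the paper needs only because it assembles one sketch for the whole class $\fG$ of all graphs at once rather than working with each $\fG_N$ separately.
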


\begin{proof}
Let $\mathcal F$ have unbounded expansion, and suppose for the sake of contradiction that it admits
an $\alpha$-approximate distance sketch of constant size $s$.  By Proposition~\ref{prop:r-invariant
equivalence}, we may assume that the sketch is distance-invariant.  Write $D : \zo^s \times \zo^s
\to \zo$ for the decoder, and for every $G \in \mathcal F$ and integer $r$, write $\sk_{G,r} : V(G)
\to \zo^s$ for the associated (random) sketch.

Since $\mathcal F$ has unbounded expansion, by Corollary~\ref{cor:no}, there exists an integer $p
\geq 0$ such that for any integer $t$, $\mathcal F$ contains a $p$-subdivision of a graph of minimum
degree at least $t$. Then, by a recent result of Liu and Montgomery \cite{LM20}, for any integer $t$
there is an integer $k_t \geq 0$ such that $\mathcal F$ contains a $k_t$-subdivision of the complete
graph $K_t$.

Recall that $\fG$ is the class of all graphs. We will design an $\alpha$-approximate distance sketch
for $\fG$. For any $t \in \bN$, let $G \in \fG_t$. Then for $k_t$ defined above,
$\mathcal F$ contains a $k_t$-subdivision of the complete graph $K_t$. Since $\mathcal F$ is
monotone, it also contains the $k_t$-subdivision $G^{(k_t)}$ of $G$. Now observe that, for any $u, v
\in V(G) \subseteq V(G^{(k_t)})$ and integer $r$, we have
\begin{align*}
\dist_G(u,v) \leq r &\implies \dist_{G^{(k_t)}}(u, v) \leq (k_t+1)r \\
\dist_G(u,v) > \alpha r &\implies \dist_{G^{(k_t)}}(u, v) > \alpha (k_t+1)r \,.
\end{align*}
Therefore, with probability at least $2/3$ over the choice of $\sk_{G^{(k_t)}, (k_t+1)r}$, we have
\[
  D(\sk_{G^{(k_t)}, (k_t+1)r}(u), \sk_{G^{(k_t)}, (k_t+1)r}(v)) = \begin{cases}
    1 &\text{ if } \dist_G(u,v) \leq r \\
    0 &\text{ if } \dist_G(u,v) > \alpha r \,,
  \end{cases}
\]
as desired; so $\fG$ admits a distance-invariant $\alpha$-approximate distance sketch of
constant size $s$. But this contradicts Theorem~\ref{thm:lbdls}.
\end{proof}

\noindent
A natural question is whether this can be proved directly, without using the theory of sparsity and
the fairly involved result of Liu and Montgomery \cite{LM20}.

\begin{rem}
We note that the proof of Theorem~\ref{thm:weakwood} gives more than
just
bounded expansion. By using \cite[Theorem 11]{Dvo08} and the recent
quantitative improvements \cite{FHL+22,LTW+22} of the result of Liu and
Montgomery \cite{LM20}, it can be proved that any monotone class with a
distance-invariant 
$\alpha$-ADT sketch of size $s$ has expansion bounded by $f(r)=
O((4rs^{2\alpha})^{(r+1)^2})$. We believe that this bound is far from optimal.
\end{rem}

\subsection{Lower Bound: Grids}

The two-dimensional grid with crosses (also known as the \emph{King's graph}) is a standard example of a class with bounded expansion
\cite{NOW12}.  Let $\mathcal{G}_d$ denote the class of all finite subgraphs of the $d$-dimensional
grid with all diagonals (the strong product of $d$ paths).  It is was proved in \cite{Dvo21} that
for even $d$, the expansion of $\mathcal{G}_d$ is $r\mapsto\Theta(r^{d/2})$, and in particular the
expansion of $\mathcal{G}_2$ is $r\mapsto\Theta(r)$.

\begin{lemma}\label{lem:grid}
For any graph $G$ of maximum degree 3, there exists a constant $k=k_G$ such
that $\mathcal{G}_2$ contains the $k$-subdivision of $G$.
\end{lemma}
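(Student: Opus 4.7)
The plan is to construct, for a given graph $G$ of maximum degree 3, an explicit embedding of some uniform subdivision of $G$ as a subgraph of a finite member of $\mathcal{G}_2$, via a VLSI-style grid layout. I would place the vertices of $G$ on a baseline, route each edge as a path using horizontal, vertical and diagonal edges of the strong grid, resolve crossings of these routes using the diagonal edges, and finally pad the shorter routes so that every path has a common length.

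Concretely, I would place the vertices $v_1,\ldots,v_n$ at positions $(iL,0)$ for a large integer $L$, enumerate the edges $e_1,\ldots,e_m$ of $G$, and assign each edge a distinct horizontal track height $h_j\in\{1,\ldots,m\}$. The edge $e_j$ joining $v_a$ and $v_b$ (with $a<b$) is routed up column $x=aL$ to height $y=h_j$, across to column $x=bL$, and down to the baseline. Since every vertex of $G$ has degree at most 3, the (at most three) edges incident to any vertex can leave through three distinct neighbours of it in the strong grid (for instance the NW, N and NE neighbours), and the baseline endpoints are the only shared vertices. Two routed paths will however cross at an interior point $(aL,h_j)$ where a horizontal segment of one edge meets the vertical column of another; each such crossing is resolved by a $2\times 2$ X-gadget, in which one path is routed through the diagonal edge from $(x,y)$ to $(x+1,y+1)$ while the other is routed through the diagonal edge from $(x+1,y)$ to $(x,y+1)$. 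These two diagonals are edges of the strong grid that cross geometrically but share no grid vertex. With $L$ large and the heights $h_j$ pairwise distinct, all X-gadgets are localised and independent, so the whole collection of routes is internally vertex-disjoint, producing a $(k_1,k_2)$-subdivision of $G$ for some $k_1\le k_2$ inside a finite subgraph of $\mathcal{G}_2$.

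To upgrade this to a genuine $k$-subdivision, I would pad each shorter route. The basic padding move exploits that any consecutive pair $u,v$ of vertices along a routed path has, in the strong grid, an unused common neighbour $w$ in any sufficiently empty local neighbourhood, and inserting $w$ replaces the step $u,v$ in the route by $u,w,v$, adding one internal vertex. Iterating this move, drawing fresh witnesses $w$ from a private corridor of unused grid vertices reserved around each route, I can extend any route to any prescribed length. Choosing a common target $k$ exceeding the current maximum length and padding every route accordingly yields the desired $k$-subdivision of $G$ as a subgraph of a finite element of $\mathcal{G}_2$. The main difficulty that I expect, which is essentially bookkeeping rather than conceptual, is ensuring that the X-gadgets and padding detours belonging to different routes never collide; this is handled by taking $L$ and the vertical extent of the host grid sufficiently large compared to $n$, $m$ and the final target $k$, so that every route has ample private space around it in which to perform its crossings and its detours.
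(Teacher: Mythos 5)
Your proof is correct in substance but takes a genuinely different route from the paper's. The paper first observes that $\mathcal{G}_2$ contains arbitrarily large clique minors, uses the fact that for graphs of maximum degree $3$ minor containment coincides with topological containment to obtain some (non-uniform) subdivision of $G$ inside a large king's grid, and then equalises path lengths by a scaling argument: a $\lambda$-refinement multiplies every path length by $\lambda$, while a private $(s\times s)$-subgrid $R_{uv}$ on each path allows that path to be lengthened by any amount up to about $\lambda^2/16$, so for $\lambda$ large the quadratic slack absorbs the linear discrepancies. You instead build the subdivision explicitly by track routing, resolve crossings with the two vertex-disjoint diagonals of a unit cell (the crucial property of the strong product), and equalise lengths by unit-increment detours through a common neighbour of two consecutive path vertices; since diagonals make $+1$ increments available there is no parity issue. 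Your version is more constructive, avoids both the minor-to-subdivision step and the refinement/perturbation bookkeeping, and yields in principle explicit bounds on $k_G$ and on the host grid size; the paper's template has the advantage that essentially the same argument transfers to the $3$-dimensional Cartesian grid (Lemma~\ref{lem:grid3}), where your X-gadget has no diagonals to exploit and crossings must instead be resolved through the third dimension. One small repair: with track heights $h_j\in\{1,\dots,m\}$ consecutive tracks occupy adjacent rows, and your X-gadget detours step into row $y+1$, so they would collide with the next track; spacing the tracks by a constant (say $h_j=3j$) fixes this and is in the spirit of your closing remark, though note that enlarging $L$ or the total vertical extent alone does not create that spacing.
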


\begin{proof}[Proof sketch.]
Observe first that  $\mathcal{G}_2$ contains arbitrarily large
complete graphs as minors (see Figure~\ref{fig:grid}, left, where the
clique minor is obtained by contracting each thick path into a single
vertex). It follows that for any graph $G$, some graph of
$\mathcal{G}_2$ contains $G$ as a minor. Note that for a graph $G$ of maximum
degree at most 3, a graph $H$ contains $G$ as a minor if and only if
$H$ contains a subdivision of $G$. It follows that for any  graph $G$
of maximum degree 3, some sufficiently large grid $H$ of $\mathcal{G}_2$
contains a subdivision of $G$ as a subgraph. For a vertex $u$ of $G$,
we denote by $\pi(u)$ its image in $H$ (in a copy of some subdivision
of $G$ in $H$), and for any edge $uv$ in $G$, we denote the
corresponding path of the subdivision of $G$ in $H$ between $\pi(u)$
and $\pi(v)$ by $P_{uv}$.

\begin{figure}[htb]
 \centering
 \includegraphics[scale=1]{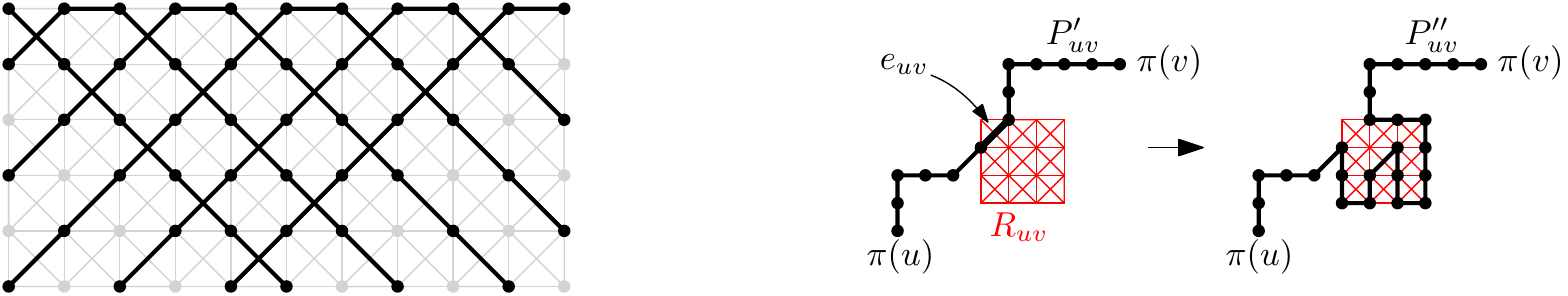}
 \caption{A large clique minor in a 2-dimensional grid with diagonals (left), and
 a way to increase the lengths of the path $P_{uv}'$ using the private
 areas $R_{uv}$ (right).}
 \label{fig:grid}
\end{figure}

For an odd integer $\lambda\ge 1$, a \emph{$\lambda$-refinement} of the grid $H$ is obtained by replacing
each $(1\times 1)$-cell of the grid $H$ by a $(\lambda\times
\lambda)$-grid. Note that if $H$ contains a subdivision of $G$ as
above, then any $\lambda$-refinement of $H$ contains a subdivision of
$G$, where  each path $P_{uv}$ is replaced by a path
$P_{uv}'$ of length $\lambda |P_{uv}|$. Note that after performing a
$\lambda$-refinement and adding small perturbations to some of the paths
$P_{uv}'$, we can
assume in addition that there is an integer $s\ge \lambda/2$ such that
for any edge $uv$ of $G$, $H$
contains a $(s\times s)$-subgrid $R_{uv}$ that intersects the
subdivision of $G$ in $H$ in a single edge $e_{uv}$, which is included
in $P_{uv}'$ (see Figure~\ref{fig:grid}, right).

Using the subgrid $R_{uv}$, we can replace $e_{uv}$ by a path of any
length between 1 and $(\tfrac{\lambda}2-1)^2\ge \lambda^2/16$ (assuming
$\lambda\ge 4$) between its endpoints, turning $P_{uv}'$
into a new path $P_{uv}''$ of length $|P_{uv}'|+\ell=\lambda|P_{uv}|+\ell$ (for any possible
value of $0\le \ell\le \lambda^2/16$), in such a way that the vertices
$\pi(u)$, $u\in V(G)$, and the paths $P_{uv}''$, $uv\in E(H)$, still
form a subdivision of $G$ in $H$.

Note that a path $P_{uv}$ of maximum length can replaced by a path
$P_{uv}''$ of length $\lambda |P_{uv}|$ after the $\lambda$-refinement,
while a path $P_{xy}$ of minimum length can be replaced by a path
$P_{xy}''$ of any length between  $\lambda |P_{xy}|$ and $\lambda
|P_{xy}|+\lambda^2/16\ge \lambda |P_{uv}|$, where the inequality holds
whenever $\lambda\ge 16|P_{uv}|-16|P_{xy}|$. It follows that by
taking $\lambda$ sufficiently large, we obtain a subdivision of $G$ in
$H$ where all edges of $G$ correspond to paths of the same length in
$H$, as desired.
\end{proof}

\noindent
Using a similar argument to the above, we obtain a similar result for subgraphs of 3-dimensional
grids. Write $\P^3$ for the class of subgraphs of the 3-dimensional grid; \ie the class of finite
subgraphs of the Cartesian product $P^3$, where $P$ is the infinite path. We omit this proof due to
its similarity to the one above.
\begin{lemma}\label{lem:grid3}
For any graph $G$ of maximum degree 3, there exists a constant $k=k_G$ such that
$\P^3$ contains the $k$-subdivision of $G$.
\end{lemma}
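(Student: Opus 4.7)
The plan is to mimic the argument of Lemma \ref{lem:grid}, replacing the clique-minor construction in the 2-dimensional grid with diagonals by a clique-minor construction in the 3-dimensional grid, and replacing the ``private'' planar square $R_{uv}$ by a ``private'' cubic box $R_{uv}$.

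First I would show that $\P^3$ contains arbitrarily large complete graphs as minors. One concrete way is to note that the 2-dimensional grid sits as an induced subgraph of the 3-dimensional grid (take any axis-aligned slice), and an $n \times n$ grid contains $K_{\Omega(n)}$ as a minor by the standard row/column contraction; alternatively, one can build the clique minor directly in 3D by using $t$ parallel axis-aligned ``wires'' in one direction and joining each pair of wires via disjoint detours through a different plane. Either way, for any graph $G$ (in particular, any cubic $G$), some sufficiently large finite 3-dimensional grid $H \in \P^3$ contains $G$ as a minor. Since $G$ has maximum degree at most $3$, ``minor'' and ``topological minor'' coincide, so $H$ actually contains a subdivision of $G$ as a subgraph: for each $u \in V(G)$ there is an image vertex $\pi(u) \in V(H)$, and for each edge $uv \in E(G)$ an internally disjoint path $P_{uv}$ between $\pi(u)$ and $\pi(v)$.

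Next I would perform the 3D analogue of a $\lambda$-refinement: replace each unit cell of $H$ by a $(\lambda \times \lambda \times \lambda)$-subgrid. The resulting graph is again in $\P^3$, and the subdivision of $G$ carried by $H$ becomes a subdivision where each path $P_{uv}$ is replaced by a path $P_{uv}'$ of length $\lambda |P_{uv}|$. After small local perturbations of these paths (routing each $P_{uv}'$ slightly away from the others), I may assume that for every edge $uv$ of $G$ there is a cubic subgrid $R_{uv}$ of side $s \geq \lambda/2$ entirely disjoint from all other parts of the subdivision, which meets $P_{uv}'$ in a single edge $e_{uv}$. Using the $\Omega(\lambda^3)$ free interior vertices of $R_{uv}$, I can replace $e_{uv}$ by a path of any prescribed length between $1$ and $\Omega(\lambda^3)$ (for instance, by a Hamiltonian-type boustrophedon path of adjustable length in $R_{uv} \setminus \{e_{uv}\}$), turning $P_{uv}'$ into $P_{uv}''$ of length $\lambda |P_{uv}| + \ell$ for any $0 \le \ell \le \Omega(\lambda^3)$.

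To finish, I would equalize the lengths. Let $L = \max_{uv \in E(G)} |P_{uv}|$ and $L' = \min_{uv \in E(G)} |P_{uv}|$. After the refinement, every $P_{uv}'$ has length at most $\lambda L$, and every $P_{uv}'$ can be extended (via its private cube $R_{uv}$) by any integer amount up to $\Omega(\lambda^3)$. Since the required compensation $\lambda(L - L')$ is linear in $\lambda$ while the available flexibility $\Omega(\lambda^3)$ is cubic, for $\lambda$ sufficiently large (depending only on $G$) we can set each path $P_{uv}''$ to have common length $k+1$, where $k = k_G$ is some integer. The resulting subgraph of the refined grid lies in $\P^3$ and is a $k$-subdivision of $G$. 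The main obstacle is the routine but careful geometric verification that the private cubes $R_{uv}$ can be chosen pairwise disjoint and touching the subdivision only in a single edge; this is standard once $\lambda$ is taken large enough, exactly as in the 2D case of Lemma~\ref{lem:grid}.
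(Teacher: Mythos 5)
Your overall plan is exactly what the paper intends: the authors omit the proof of Lemma~\ref{lem:grid3} altogether, saying only that it is analogous to Lemma~\ref{lem:grid}, and your sketch follows that analogy. However, two specific claims in your write-up are wrong as stated. First, the plain $n\times n$ grid (the Cartesian product of two paths, \emph{without} diagonals) is planar, so it certainly does not contain $K_{\Omega(n)}$ as a minor; the row/column contraction you invoke is precisely the construction for the grid \emph{with} diagonals used in Lemma~\ref{lem:grid}, and it does not transfer to an axis-aligned slice of the 3-dimensional grid. Your alternative route (parallel wires joined by pairwise disjoint detours in other layers) is the correct one; concretely, for $i\in[t]$ the branch set $\{(i,y,1):y\in[t]\}\cup\{(x,i,2):x\in[t]\}$ is connected and meets every other branch set's neighbourhood, giving a $K_t$-minor already in the $t\times t\times 2$ grid. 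So you must drop the ``either way'' and rely only on the second construction.

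Second, $\P^3$ is bipartite, unlike the King's graph $\mathcal{G}_2$, so you cannot replace the edge $e_{uv}$ by a path of \emph{any} prescribed length inside its private box: every path joining the two endpoints of $e_{uv}$ has odd length, hence the length of $P'_{uv}$ can only be increased by even amounts. As written, your equalization step (``set each path to have common length $k+1$'') may therefore be infeasible when the refined lengths $\lambda|P_{uv}|$ have different parities. The repair is easy: take the refinement parameter $\lambda$ even (in the plain 3D grid a $\lambda$-refinement is just scaling, so any $\lambda$ is allowed), so that all lengths $\lambda|P_{uv}|$ are even, and then equalize using even increments only; since the needed compensation $\lambda(\max_{uv}|P_{uv}|-\min_{uv}|P_{uv}|)$ is linear in $\lambda$ while the available slack in each private box is $\Omega(\lambda^2)$ (already in a single 2-dimensional face), this succeeds for $\lambda$ large enough depending only on $G$. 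With these two corrections your argument goes through and coincides with the proof the paper has in mind.
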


\noindent
We easily deduce the following simple corollary.

\begin{corollary}
The classes $\mathcal{G}_2$ and $\P^3$ are not ADT sketchable.
\end{corollary}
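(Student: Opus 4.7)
The plan is a short reduction from $\alpha$-ADT sketching for the class $\F_{n,3}$ of max-degree-3 graphs, where Theorem~\ref{thm:cubic} already provides a strong lower bound. I would argue by contradiction. Suppose $\mathcal{G}_2$ admits an $\alpha$-ADT sketch of constant size $s$. By Proposition~\ref{prop:r-invariant equivalence} I may assume the sketch is distance-invariant, with a single decoder $D : \zo^s \times \zo^s \to \zo$ and, for each scale $r$, a random function $\sk_{H,r} : V(H) \to \zo^s$ for every $H \in \mathcal{G}_2$.

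The main step will be the following transfer. Given any $G \in \F_{N,3}$, Lemma~\ref{lem:grid} supplies a constant $k = k_G$ such that the $k$-subdivision $G^{(k)}$ lies in $\mathcal{G}_2$. Because every internal subdivision vertex has degree $2$, shortest paths in $G^{(k)}$ between original vertices $u,v \in V(G)$ correspond to shortest paths in $G$, and each edge of $G$ is replaced by a path of length $k+1$, so $\dist_{G^{(k)}}(u,v) = (k+1)\,\dist_G(u,v)$. For each scale $r$, I would define a sketch for $G$ by setting $\sk'_r(v) \define \sk_{G^{(k)},\,(k+1)r}(v)$ for $v \in V(G) \subseteq V(G^{(k)})$ and reusing the decoder $D$. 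Then $\dist_G(u,v) \le r$ forces $\dist_{G^{(k)}}(u,v) \le (k+1)r$, while $\dist_G(u,v) > \alpha r$ forces $\dist_{G^{(k)}}(u,v) > \alpha(k+1)r$, so the ADT guarantees for $\mathcal{G}_2$ at scale $(k+1)r$ pull back to ADT guarantees for $G$ at scale $r$. This produces a distance-invariant $\alpha$-ADT sketch for $\F_{n,3}$ of constant size $s$, independent of the number of vertices, which directly contradicts Theorem~\ref{thm:cubic}.

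The argument for $\P^3$ is verbatim the same, invoking Lemma~\ref{lem:grid3} in place of Lemma~\ref{lem:grid}. The only point that requires a little care is that $k_G$ depends on $G$, which is why I first pass to a distance-invariant sketch so that the one fixed decoder $D$ can be reused at every rescaled parameter $(k+1)r$; without distance-invariance one would have to encode the description of $D_{(k+1)r}$ into the label, costing an additional $O(2^{2s})$ bits, which is still constant and preserves the contradiction. Beyond this bookkeeping, there is no real obstacle: all of the conceptual weight sits in the subdivision lemmas~\ref{lem:grid} and~\ref{lem:grid3}, and in Theorem~\ref{thm:cubic}.
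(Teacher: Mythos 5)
Your proposal is correct and follows essentially the same route as the paper: assume a constant-size sketch, pass to a distance-invariant one via Proposition~\ref{prop:r-invariant equivalence}, and use Lemma~\ref{lem:grid} (resp.\ Lemma~\ref{lem:grid3}) to pull the sketch back along the $k_G$-subdivision to a distance-invariant $\alpha$-ADT sketch for max-degree-3 graphs, contradicting Theorem~\ref{thm:cubic}. The details you spell out (the $(k+1)$-scaling of distances and the reuse of the single decoder at scale $(k+1)r$) are exactly the ones the paper leaves implicit by saying the proof is similar to that of Theorem~\ref{thm:weakwood}.
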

\begin{proof}
The proof is similar to that of Theorem~\ref{thm:weakwood}.  Suppose for contradiction that, for
some constant $\alpha > 1$, the class admits a constant-size $\alpha$-approximate
distance sketch. Then by Proposition~\ref{prop:r-invariant equivalence}, we can assume that the
sketch is distance-invariant. By Lemma~\ref{lem:grid}, this can be used to design an
$\alpha$-approximate distance sketch for the class of all graphs of maximum degree 3, contradicting
Theorem~\ref{thm:cubic}.
\end{proof}

\subsection{Lower Bound for Classes of Low Expansion}

Now we show that there is no non-constant bound on the expansion that guarantees the existence of
constant-size ADT sketches. We achieve this by constructing classes of graphs of arbitrarily low
non-constant expansion, which cannot admit constant-size $\alpha$-ADT sketches for any constant
$\alpha > 1$.  We start with a simple variant of \cite[Theorem 4.5]{GKRSS18}.

For a function $f:\mathbb{N}\to \mathbb{N}$ such that $f(n)\to\infty$ when $n\to \infty$, we define
$f^{-1}(n):=\max \{k\,|\,f(k)\le n\}$. We recall that for an
$N$-vertex graph $G$ and an integer $\ell\ge 2 \log N+1$, the graph
$G[\ell]$ was defined just before the statement of Theorem \ref{thm:cubic}.

\begin{lemma}\label{lem:GKRSS}
Let $f$ be a function such that $f(n)\to\infty$ when $n\to
\infty$. For any $n$-vertex graph $G$, and any integer $r\ge 0$, every
depth-$r$ minor of  $G[6f(6n^2)+2\log n]$ has average degree at most $\max\{4,  f^{-1}(r)\}$.
\end{lemma}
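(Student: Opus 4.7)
The plan exploits two structural facts about $G[\ell]$ with $\ell=6f(6n^2)+2\log n$: (i) $G[\ell]$ has maximum degree at most $3$ (internal tree nodes have degree $3$, while leaves of trees and interior path vertices have degree at most $2$); (ii) each path $P_{uv}$ corresponding to an edge $uv\in E(G)$ has length at least $\ell-2\lceil\log n\rceil\ge 6f(6n^2)$. Projecting any walk in $G[\ell]$ from $r_u$ to $r_v$ down to a walk in $G$ from $u$ to $v$ (each full crossing of a path $P_{ab}$ in $G[\ell]$ corresponds to a single edge $ab$ of $G$ and costs at least $\ell-2\lceil\log n\rceil$), I would derive the key lower bound $\dist_{G[\ell]}(r_u,r_v)\ge 6f(6n^2)\cdot\dist_G(u,v)$.

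Let $H$ be a depth-$r$ minor of $G[\ell]$, realized by contracting pairwise disjoint connected subsets $B_1,\ldots,B_t$ of radius at most $r$ in some subgraph of $G[\ell]$. I plan to split the analysis according to whether $r<3f(6n^2)$ or $r\ge 3f(6n^2)$. In the \emph{small-$r$ regime} each ball has diameter strictly less than $6f(6n^2)$, the minimum path length, so no ball can contain vertices from two distinct trees $T_u\ne T_v$, and $B_i\cap V(T_u)$ is necessarily a connected subtree of $T_u$ (a round trip from $T_u$ back to itself through some neighbor would cost at least $2\cdot 6f(6n^2)>2r$). I would then count edges of $H$ by their origin in $G[\ell]$: for each tree $T_u$, the tree-edges of $T_u$ between different balls number at most $c_u-1$, where $c_u$ is the number of balls meeting $T_u$; for each path $P_{uv}$, the path-edges between different balls number at most $I_{uv}-1$, where $I_{uv}$ is the number of balls meeting $P_{uv}$. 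Since in this regime each ball meets at most one tree, $\sum_u c_u\le t$, while the total $\sum_{uv}I_{uv}$ is bounded by $t+2m$ (each tree-touching ball reaches at most its tree's leaves, totalling $2m$). Combining these and noting that balls entirely inside a path interior have degree at most $2$ in $H$, one amortizes to get average degree at most $4$.

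In the \emph{large-$r$ regime} $r\ge 3f(6n^2)$, we have $f^{-1}(r)\ge 6n^2$, so the target bound $\max\{4,f^{-1}(r)\}\ge 6n^2$. I would bound the total edges of $H$ by the trivial estimate $|E(H)|\le|E(G[\ell])|=O(|E(G)|\ell)=O(n^2\ell)$ and observe that balls lying entirely in path interiors still have degree at most $2$ in $H$. A simple case split then applies: if $t\le f^{-1}(r)+1$, the average degree is at most $t-1\le f^{-1}(r)$ trivially; if $t>f^{-1}(r)$, the bulk of $V(H)$ consists of path-interior balls that dilute the average, and the projection $\pi(B_i)=\{u\in V(G):B_i\cap V(T_u)\ne\emptyset\}$, which has $G$-diameter at most $O(r/f(6n^2))$, lets one bound the contributions of tree-touching balls by $O(n^2)\le f^{-1}(r)$ in aggregate.

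\textbf{The main obstacle} is the precise amortization in the small-$r$ regime, specifically for balls that intersect a tree $T_u$ in a nontrivial subtree with many leaves: such a ball may contribute several boundary edges to $H$ (one tree-boundary edge per internal-leaf of the subtree and one path-boundary edge per tree-leaf reached). The key is that $\sum_u(c_u-1)\le t-1$ (since each tree-touching ball touches exactly one tree) and that the ``extra'' path-boundary contributions are paired with path-only balls whose degree-$2$ status compensates the average.
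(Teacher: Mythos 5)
Your route is genuinely different from the paper's, and in both regimes it is left with real gaps rather than just unpolished details. For the large-$r$ case the paper has a one-line argument you may want to know: $G[2\log n+1]$ has at most $3n^2$ vertices, hence treewidth at most $3n^2$; $G[\ell]$ is a subdivision of it, so it has the same treewidth; treewidth is minor-monotone, so every depth-$r$ minor is $3n^2$-degenerate and has average degree at most $6n^2\le f^{-1}(r)$ as soon as $r\ge f(6n^2)$. Your substitute for this (trivial bound when $t\le f^{-1}(r)+1$, ``dilution'' by path-interior branch sets otherwise, and a projection $\pi(B_i)$ of small $G$-diameter) is only a sketch: the projection-diameter observation plays no role in bounding edges, and the step ``bound the contributions of tree-touching balls by $O(n^2)$ in aggregate'' is exactly what is missing. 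It can be made to work, but only via a concrete count such as: every edge of the minor joining two tree-touching branch sets is witnessed either by one of the at most $\approx 2n^2$ tree edges of $G[\ell]$, or by a path edge, and at most two tree-touching branch sets meet the interior of any path (each must contain a terminal segment of it to stay connected), so at most one such witness per path; together with degree at most $2$ for path-interior branch sets this gives $|E(H')|\le 2t+O(n^2)$, which combined with the trivial bound (average degree at most $t-1$) finishes that regime.

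In the small-$r$ regime the gap is the one you yourself flag, and as set up your bookkeeping does not close: from ``tree edges $\le c_u-1$ per tree'' and ``path edges $\le I_{uv}-1$ per path'' with $\sum_u c_u\le t$ and $\sum_{uv} I_{uv}\le t+2m$ you only get $|E(H')|\lesssim t+m$, and $m=|E(G)|$ can vastly exceed $t$, so average degree $\le 4$ does not follow; the hoped-for pairing of ``extra'' path-boundary edges with degree-$2$ path-only branch sets is precisely the unproved amortization. The clean way to close it is to abandon the per-path count and classify edges of the minor directly: (i) in this regime no branch set meets two trees, and two branch sets touching \emph{different} trees are never adjacent (an adjacency would force a path of length less than the minimum path length $\ge 6f(6n^2)$); (ii) the branch sets touching a fixed tree $T_u$ all live inside $T_u$ together with short initial segments of the incident paths, which is a tree, so the edges among them form a minor of a tree, i.e.\ at most $c_u-1$ of them, totalling at most $t_T$; (iii) every remaining edge is incident to a path-interior branch set, which is an interval of a single path and hence has degree at most $2$, totalling at most $2t_P$. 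Hence $|E(H')|\le t_T+2t_P\le 2t$, i.e.\ average degree at most $4$; this is exactly the $2$-degeneracy the paper asserts (``$H'$ is obtained from disjoint trees by connecting their leaves with long paths''). So your structural observations (maximum degree $3$, long subdivided edges, confinement of small-radius branch sets, degree-$2$ path-interior sets) are the right ones, but both halves need the arguments above (or the paper's treewidth shortcut) to become a proof.
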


\begin{proof}
Let $H=G[6f(6n^2)+2\log n]$, and let $H'$ be any depth-$r$ minor of $H$.  Observe that $G[2\log
n+1]$ has at most $2n^2+{n\choose 2}\le 3n^2$ vertices, and thus tree-width at most $3n^2$. The
graph $H$ itself is obtained from $G[2\log n+1]$ by subdividing some edges, an operation that leaves
the tree-width unchanged. It follows that $H$ also has tree-width at most $3n^2$. Since tree-width
is a minor-monotone parameter, $H'$ also has tree-width at most $3n^2$ and is thus
$3n^2$-degenerate. It follows that $H'$ has average degree at most $6n^2$. If $r\ge f(6n^2)$, then
$6n^2\le f^{-1}(r)$ and thus $H'$ has average degree at most $f^{-1}(r)$, as desired. Assume now
that $r\le f(6n^2)$. In this case, since $H'$ is obtained from disjoint trees by connecting their
leaves with paths of length at least $6 f(6n^2)$, it can be checked that $H'$ is 2-degenerate, and
thus has average degree at most 4. 
\end{proof}

\begin{thm}\label{thm:arbexp}
For any function $\rho$ tending to infinity, there exists a monotone class of expansion $r \mapsto
\rho(r)$ that is not ADT sketchable.  Moreover, for any $\eps>0$, there exists a monotone class
$\F$ of expansion $r\mapsto O(r^\eps)$, such that, if $\F$ admits an $\alpha$-ADT sketch of size
$s(n)$, then we must have $s(n) = n^{\Omega(1/\alpha)}$.
\end{thm}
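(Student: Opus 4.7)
The plan is to combine the quantitative lower bound for max-degree-$3$ graphs from Theorem~\ref{thm:cubic} with the tree-and-path construction $G\mapsto G[\ell]$ used in its proof, and the expansion bound of Lemma~\ref{lem:GKRSS}. Given a target expansion $\rho$, I would choose a nondecreasing function $f:\mathbb{N}\to\mathbb{N}$ with $f^{-1}(r)\le\rho(r)$ for all $r$; for every max-degree-$3$ graph $G$ on $N$ vertices, set $\ell_N:=6f(6N^2)+2\log N$ and $H_G:=G[\ell_N]$, and take $\F$ to be the monotone closure of $\{H_G:G\text{ has max degree }3\}$.

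Lemma~\ref{lem:GKRSS} bounds the average degree of any depth-$r$ minor of $H_G$ by $\max\{4,f^{-1}(r)\}\le\max\{4,\rho(r)\}$, and since a depth-$r$ minor of a subgraph of $H_G$ is itself a depth-$r$ minor of $H_G$, this bound passes to the whole monotone class $\F$, giving expansion $O(\rho(r))$. For the first statement, any $\rho\to\infty$ is achievable by taking $f(k):=\rho^{-1}(k)$. For the second statement with $\rho(r)=r^\eps$, one takes $f(k):=\lceil k^{1/\eps}\rceil$, which gives $\ell_N=O(N^{2/\eps})$ and $n:=|V(H_G)|=O(N\cdot\ell_N)=O(N^{1+2/\eps})$ using that each tree $T_v$ in $H_G$ has constant size (since $v$ has at most $3$ neighbours).

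For the lower bound, suppose $\F$ admits an $\alpha$-ADT sketch of size $s(n)$; by Proposition~\ref{prop:r-invariant equivalence} I may assume it is distance-invariant. Exactly as in the proof of Theorem~\ref{thm:cubic}, the tree roots $r_u,r_v$ in $H_G$ attached to $u,v\in V(G)$ satisfy $\tfrac{\ell_N}{2}\dist_G(u,v)\le\dist_{H_G}(r_u,r_v)\le\ell_N\dist_G(u,v)$, so assigning to each $u\in V(G)$ the sketch of $r_u$ at scale $\ell_N\,r$ yields an $(r,2\alpha r)$-distance sketch of size $s(O(N^{1+2/\eps}))$ for the class of $N$-vertex max-degree-$3$ graphs. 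Theorem~\ref{thm:cubic} applied with approximation factor $2\alpha$ then forces this quantity to be $\Omega(N^{1/(8\alpha)-\eps'})$ for every $\eps'>0$: for the first part this is non-constant, so $\F$ is not ADT sketchable; for the second part, inverting $n=O(N^{1+2/\eps})$ yields $N=\Omega(n^{\eps/(\eps+2)})$, whence $s(n)=n^{\Omega(1/\alpha)}$ as claimed.

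The hardest part will be the quantitative bookkeeping in the second statement: one has to choose $f$ so that $f^{-1}$ tracks $\rho$ up to constants while keeping $n$ polynomial in $N$, and then verify that the exponent surviving after inverting $n=n(N)$ still has the form $\Omega(1/\alpha)$. A secondary subtlety is that Lemma~\ref{lem:GKRSS} expresses the expansion through $f^{-1}$, so the relation between $f$ and the intended $\rho$ must be inverted carefully; once that setup is fixed, the two statements proceed in parallel, only the growth rate of $f$ changing.
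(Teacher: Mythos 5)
Your proposal is correct and follows the same skeleton as the paper's proof: the stretched graphs $G[\ell]$ with $\ell$ tied to $\rho^{-1}$ via $\ell_N=6f(6N^2)+2\log N$, Lemma~\ref{lem:GKRSS} for the expansion bound (which indeed passes to the monotone closure, since a depth-$r$ minor of a subgraph is a depth-$r$ minor), and a reduction showing that an $\alpha$-ADT sketch for the stretched class gives a factor-$2\alpha$ distance sketch for the unstretched graphs. The one substantive difference is the reduction target: you restrict the base graphs to maximum degree $3$ and invoke Theorem~\ref{thm:cubic} as a black box, whereas the paper takes arbitrary base graphs $G\in\fG_N$ and reduces directly to the distance-$(1,2\alpha)$ lower bound of Corollary~\ref{cor:lbdlsrando}, i.e.\ it re-runs the argument of Theorem~\ref{thm:cubic} with the larger stretch. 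Your route thus composes the stretching construction twice (general graph $\to$ cubic graph $\to$ stretched cubic graph); this is sound, and the constant-size trees $T_v$ even improve matters: they give $n=O(N\ell_N)$ and make the lower bound of the distance sandwich hold as soon as $\ell_N$ exceeds a fixed constant (in the paper's setting the bound $\dist_{H}(r_u,r_v)\ge\tfrac{\ell}{2}\dist_G(u,v)$ needs $\ell\ge 4\log N$, which your $\ell_N$ need not satisfy for fast-growing $\rho$, so it is worth stating explicitly that you are using the constant tree depth here). Quantitatively you lose only constant factors in the exponent (roughly $\tfrac{\eps}{\eps+2}\cdot\tfrac{1}{8\alpha}$ versus the paper's comparable exponent), still $n^{\Omega(1/\alpha)}$. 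Two minor points of rigor, both of which the paper also glosses over: Proposition~\ref{prop:r-invariant equivalence} is stated only for constant-size sketches, so for the second statement (where $s(n)$ may be non-constant) you should instead obtain distance-invariance by appending $r$ to the sketch in $O(\log n)$ bits, as remarked at the start of Section~\ref{section:adt}, which does not affect a polynomial lower bound; and, as in the paper, one should take $s$ nondecreasing (or pad) to bound the derived sketch size by $s(O(N^{1+2/\eps}))$, and accept that $f=\rho^{-1}$ gives $f^{-1}(r)\le\rho(r+1)$ rather than exactly $\rho(r)$.
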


\begin{proof}
Let $\rho : \bN \to \bN$ be a function tending to infinity, so that $\rho^{-1}$ is a non-decreasing
function tending to infinity. We proceed as in the proof of Theorem~\ref{thm:cubic}, setting
$\ell(N) = 6 \rho^{-1}(6N^2) + 2\log N$ (instead of $\ell = \lceil 4 \log N \rceil$). For any
$N$-vertex graph $G$, $G[\ell(N)] \in \F_{n,3}$ with $n \leq 6N^2(\rho^{-1}(6N^2) + 2 \log N)$. By
Lemma~\ref{lem:GKRSS}, any depth-$r$ minor of such a graph $G[\ell(N)]$ has average degree at most
$\max\{4, \rho(r)\}$. It follows that the monotone class $\F$ of all graphs $G[\ell(N)]$ for $G \in
\fG_N$ and their subgraphs has expansion at most $r \mapsto \max\{4,\rho(r)\}$.

By the same argument as in Theorem~\ref{thm:cubic}, if there is a distance-invariant $\alpha$-ADT
sketch for $\F$ of size $s(n)$ (which is a non-decreasing function in $n$), we obtain a
$(1,2\alpha)$-distance sketch for $\fG$ of size $N \mapsto s(n)$. Then, due to
Corollary~\ref{cor:lbdlsrando},
\[
N^{\tfrac{1}{2\alpha}} / \log N = O\left(s(n)\right) = O\left( s \left( 6N^2(\rho^{-1}(6N^2) + 2
\log N) \right)\right) \,.
\]
It is clear that, for any choice of $\rho$, we cannot have $s(n)$ constant, which establishes the
first part of the theorem. To get the second part, let $\eps > 0$ and suppose that we choose
$\rho(r) = r^\eps$ so that $\rho^{-1}(r) = r^{1/\eps}$, and assume for contradiction that
$s(n) = n^{o(1/\alpha)}$. Then
\begin{align*}
N^{\tfrac{1}{2\alpha}} / \log N
  = O\left( s\left( 6N^2((6N^2)^{1/\eps} + 2\log N)\right) \right)
  = O\left( N^{o(1/\alpha)} \right) \,,
\end{align*}
which means we must have $s(n) = n^{\Omega(1/\alpha)}$ as desired.
\end{proof}

\subsection{Upper Bounds}
\label{sec:upperbounds}
The \emph{weak diameter} of a subset $S$ of vertices of a graph $G$ is
the maximum distance in $G$ between two vertices of $S$.
Given a graph $G$, a \emph{$(\sigma, \tau, \Delta)$-sparse cover} is a family
$\mathcal{C}$ of
subsets of $V(G)$ of weak diameter at most $\Delta$, such that (i) for each $u\in V(G)$, there is a set
$C\in \mathcal{C}$ such that $B(u,\tfrac{\Delta}{\sigma})\subseteq C$ (where
$B(u,r)$ denotes the ball of radius $r$ centered in $u$), and (ii)
each vertex $u\in V(G)$ lies in at most $\tau$ sets of $\mathcal{C}$.

We say that a graph $G$ admits a \emph{$(\sigma, \tau)$-sparse cover
  scheme} if for any $\Delta$, it admits a $(\sigma, \tau, \Delta)$-sparse
cover. We say that a graph class $\F$ has a $(\sigma, \tau)$-sparse cover
  scheme if any graph of $\F$ has such a scheme\footnote{It
    is usually assumed that in addition, such schemes can be computed
    efficiently, that is in time polynomial in the size of the
    graph.}). Classes of graphs with $(\sigma, \tau)$-sparse cover
  schemes are also known as classes of \emph{Assouad-Nagata dimension} at
most $\tau-1$ in metric geometry~\cite{As82} (see also~\cite{LS05}).

 We recall that
disjunctive labelling schemes have been defined in Subsection
\ref{sec:dis}, and that by Proposition \ref{prop:eqbased} they can be
turned into constant-size sketches with one-sided error. The following result is a simple consequence of the definition of
 sparse covers. Note
that here the size of the labels is independent of $r$, in contrast with the setting of
Theorem~\ref{thm:distr1} and its corollaries. 

\begin{thm}\label{thm:SCdist}
If a graph class $\F$ has  a $(\sigma, \tau)$-sparse cover
scheme, then $\mathcal{F}$ has distance-invariant, disjunctive $\sigma$-ADT
labelling scheme with labels of size  $O(\tau)$.
\end{thm}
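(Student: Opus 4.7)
The plan is to use the sparse cover directly as a disjunctive labelling scheme: the label of a vertex lists the clusters it belongs to, and the decoder simply checks whether any cluster appears in both labels. Distance-invariance comes for free because this decoder is oblivious to $r$.

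More precisely, for a given graph $G \in \F$ and a given $r \in \bN$, I would set $\Delta \define \sigma r$ and apply the $(\sigma,\tau)$-sparse cover scheme to produce a $(\sigma,\tau,\sigma r)$-sparse cover $\mathcal{C}$ of $G$. Assign each cluster $C \in \mathcal{C}$ a distinct identifier, and define the label of a vertex $u$ to be
\[
  \ell_G(u) \define (- \mid \vec q(u)) \,,
\]
where $\vec q(u)$ lists the identifiers of the (at most $\tau$) clusters of $\mathcal{C}$ containing $u$. The decoder $D$ on input $\ell_G(u),\ell_G(v)$ outputs the disjunction $\bigvee_{i,j} \ind{q(u)_i = q(v)_j}$, which is of the disjunctive form and does not depend on $r$.

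For correctness, suppose first that $\dist_G(x,y) \leq r$. By property (i) of sparse covers applied to $u = x$ and $\Delta/\sigma = r$, some cluster $C \in \mathcal{C}$ contains $B(x,r)$, and in particular both $x$ and $y$; hence $C$'s identifier appears in both $\vec q(x)$ and $\vec q(y)$ and the decoder outputs $1$. Conversely, suppose $\dist_G(x,y) > \sigma r = \Delta$. Then no set of $\mathcal{C}$ can contain both $x$ and $y$, since every set of $\mathcal{C}$ has weak diameter at most $\Delta$ in $G$. So $\vec q(x)$ and $\vec q(y)$ share no identifier and the decoder outputs $0$. Finally, each label contains at most $\tau$ equality codes, so the scheme has size $O(\tau)$, and the decoder does not depend on $r$, making it distance-invariant.

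The only subtlety to be careful about is the definition: in an ADT labelling scheme the label function is allowed to depend on $r$ (a different sparse cover is used for each $r$), whereas distance-invariance is a property solely of the decoder $D$. Since $D$ is just a disjunction of \textsc{Equality} checks between codes, it satisfies this requirement. No real obstacle arises; the theorem is essentially a repackaging of the definition of sparse covers into the language of disjunctive labelling schemes.
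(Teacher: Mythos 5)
Your proof is correct and follows essentially the same approach as the paper: take the $(\sigma,\tau,\sigma r)$-sparse cover, label each vertex with the identifiers of the at most $\tau$ clusters containing it, and have the ($r$-oblivious) decoder output the disjunction of equality checks between cluster names. The correctness argument (cover property for $\dist_G(x,y)\le r$, weak diameter for $\dist_G(x,y)>\sigma r$) matches the paper's.
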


\begin{proof}
By the definition of sparse covers, for any $r>0$, there is a family
$\mathcal{C}$ of
subsets of $V(G)$ of weak diameter at most $\sigma r$, such that (i) for each $u\in V(G)$, there is a set
$C\in \mathcal{C}$ such that $B(u,r)\subseteq C$, and (ii)
each vertex $u\in V(G)$ lies in at most $\tau$ sets of $\mathcal{C}$.

We may now define a disjunctive $\sigma$-ADT labelling scheme. Assign
each set $C \in \mathcal{C}$ a unique number in $\bN$, and for each
vertex $x$, let $S(x)$ be the set of names of the (at most
$\tau$) sets $C$ containing $x$.  For each vertex
$x \in V(G)$, the equality code $\vec q(x)$ contains the names $S(x)$, and we assign $x$ the label $(- \mid \vec q(x))$. On inputs $(- \mid \vec q(x))$ and $(- \mid \vec q(y))$, the decoder outputs 1 if and
only $S(x)\cap S(y)\ne \emptyset$ (which can be checked using the equality codes).

Suppose $\dist_G(x,y) \leq r$. Since there is a set $C\in \mathcal{C}$
such that $B(x,r)\subseteq C$, we also have $y\in C$, and thus
$S(x)\cap S(y)\ne \emptyset$. Now suppose that $\dist_G(x,y) > \sigma
r $. Since each set $C\in \mathcal{C}$ has weak diameter at most
$\sigma r$, there is no set $C\in \mathcal{C}$ containing both $x$ and
$y$ and thus $S(x)\cap S(y)=\emptyset$.
\end{proof}

Using Proposition \ref{prop:eqbased}  together with results of~\cite{Fil20} on sparse covers (based
on~\cite{KPR93,FT03}), we deduce the following immediate corollary.

\begin{corollary}\label{cor:minor}
For any $t\ge 4$, the class of  $K_t$-minor free graphs has
a distance-invariant  $O(2^t)$-ADT sketch of size 
$O(t^2\log t)$.
\end{corollary}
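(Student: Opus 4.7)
The plan is a short chain of reductions from sparse covers of $K_t$-minor-free graphs, through Theorem~\ref{thm:SCdist}, to Proposition~\ref{prop:eqbased}. The key external input is the sparse cover construction of Filtser~\cite{Fil20} (building on \cite{KPR93, FT03}), which guarantees that every $K_t$-minor-free graph admits a $(\sigma, \tau)$-sparse cover scheme with $\sigma = O(2^t)$ and $\tau = O(t^2)$. This is the only substantive ingredient; the rest is bookkeeping.

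First I would feed this sparse cover scheme into Theorem~\ref{thm:SCdist}, which immediately yields a distance-invariant, disjunctive $O(2^t)$-ADT labelling scheme for the class of $K_t$-minor-free graphs. Looking inside the proof of Theorem~\ref{thm:SCdist}, each vertex $x$ receives a single tuple $(- \mid \vec q(x))$, where $\vec q(x)$ lists the names of the (at most $\tau = O(t^2)$) clusters containing $x$. In the notation of Proposition~\ref{prop:eqbased}, this is an $(s', T, k)$-equality-based labelling scheme with $s' = 0$, $T = 1$, and $k = O(t^2)$ (no prefix, a single group, and $O(t^2)$ equality codes per vertex).

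Next I would apply Proposition~\ref{prop:eqbased} to this labelling scheme. Plugging the parameters into the bound $O(s' + T + k \log k)$ gives a sketch of size
\[
O\bigl(0 + 1 + t^2 \log(t^2)\bigr) = O(t^2 \log t),
\]
matching the claim. Distance-invariance carries over since the decoder described in Theorem~\ref{thm:SCdist} does not depend on $r$, and the one-sided error property of Proposition~\ref{prop:eqbased} applies because the underlying labelling scheme is disjunctive.

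There is no genuine obstacle; the argument is essentially a parameter calculation. The only place requiring care is verifying that the cited sparse cover construction for $K_t$-minor-free graphs really yields $\tau = O(t^2)$ (rather than, say, an extra factor polynomial in $\log t$ or $t$), since this is exactly what makes $\tau \log \tau$ land at $O(t^2 \log t)$; this dependence on $t$ is precisely what is claimed in \cite{Fil20}.
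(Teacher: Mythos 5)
Your proposal is correct and follows essentially the same route as the paper: the corollary is deduced exactly by feeding Filtser's $(O(2^t),O(t^2))$-sparse cover scheme for $K_t$-minor-free graphs into Theorem~\ref{thm:SCdist} and then applying Proposition~\ref{prop:eqbased} with $O(t^2)$ equality codes to get size $O(t^2\log t)$, with distance-invariance and one-sided error carrying over as you note.
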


\subsubsection{Padded Decompositions}
We will see now how to obtain improved sketches using \emph{padded decompositions}. These improved
sketches have some disadvantages: they have two-sided error whereas the other sketches in this paper
have one-sided error, and they are not equality-based.  For a graph $G$, a probability distribution
$\mathcal{P}$ over partitions of $V(G)$ is said to be \emph{$(\beta,\delta,\Delta)$-padded} if
\begin{itemize}
\item for each partition $P$ in the support of $\mathcal{P}$, each set of $P$ has
  weak diameter at most $\Delta$, and
\item for any $u\in V(G)$ and any $0\le \gamma\le \delta$,
  the ball $B(u,\gamma \Delta)$ is included in some set of a random
  partition from $\mathcal{P}$ with probability at least $2^{-\beta\gamma}$.
\end{itemize} 
\noindent
We say that a graph $G$ admits a \emph{$(\beta, \delta)$-padded decomposition scheme} if for any
$\Delta$, it admits a $(\beta, \delta, \Delta)$-padded distribution over partitions of $V(G)$. We say that a
graph class $\F$ has a $(\beta, \delta)$-padded decomposition scheme if any graph of $\F$ has such a
scheme. We note that padded decomposition schemes are incomparable with the
sparse cover schemes introduced above.

\begin{thm}\label{thm:PDSdist}
If a graph class $\F$ has  a $(\beta, \delta)$-padded decomposition scheme, then $\F$ admits a
distance-invariant $\alpha$-ADT sketch of size 2, where $\alpha=\max(\tfrac1{\delta},
\tfrac{\beta}{\log(3/2)})$.
\end{thm}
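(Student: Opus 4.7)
The plan is to sample, for each threshold $r$, a random partition of $V(G)$ whose parts have controlled weak diameter, then to color each part independently with a uniformly random $2$-bit color and sketch each vertex by the color of its part. The decoder will output $1$ if and only if the two sketches are equal; since it never references $r$, the resulting scheme is distance-invariant.

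Concretely, given $r$, I would set $\Delta \define \alpha r$ and draw a partition $P$ from the $(\beta,\delta,\Delta)$-padded decomposition scheme for $G$. Independently I would assign each $S \in P$ a uniformly random color $c(S) \in \{0,1,2,3\}$, and set $\sk(v) \define c(S)$ where $S$ is the (unique) part of $P$ containing $v$, so every sketch uses exactly $2$ bits.

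For correctness, I would split into two cases. If $\dist_G(x,y) \le r$, take $\gamma \define 1/\alpha$, so $\gamma\Delta = r$; since $\alpha \ge 1/\delta$ we have $\gamma \le \delta$, and the padding property yields
\[
  \Pr[B(x,r) \subseteq S \text{ for some } S \in P] \ge 2^{-\beta\gamma} = 2^{-\beta/\alpha} \ge 2^{-\log(3/2)} = 2/3,
\]
where the final inequality uses $\alpha \ge \beta/\log(3/2)$. Whenever $B(x,r)$ lies in a single part, so does $y$, giving $\sk(x) = \sk(y)$; hence $\Pr[\sk(x) = \sk(y)] \ge 2/3$. If instead $\dist_G(x,y) > \alpha r = \Delta$, then $x$ and $y$ must lie in distinct parts of $P$ (each part has weak diameter at most $\Delta$), so their colors are independent uniform $2$-bit strings and agree with probability exactly $1/4$; thus $\Pr[\sk(x) \neq \sk(y)] = 3/4 \ge 2/3$.

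The only subtle point is the choice of $2$ bits per part rather than $1$: a single random bit would still give collision probability at least $2/3$ in the close case, but the far-vertex collision probability would be $1/2 > 1/3$, breaking the error requirement. Using $4$ colors brings the far-vertex collision probability to $1/4$, accounting for the stated sketch size of $2$. As noted in the discussion preceding the theorem, this scheme has genuinely two-sided error and, unlike our other constructions, is not equality-based.
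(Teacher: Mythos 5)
Your proposal is correct and follows essentially the same argument as the paper's proof: draw a padded partition at scale $\Delta=\alpha r$, color parts with a random $2$-bit identifier, and have the decoder test equality of colors, with the close case handled by the padding probability $2^{-\beta\gamma}\ge 2/3$ and the far case by the weak-diameter bound. The only cosmetic difference is that you use $4$ colors (far-collision probability $1/4$) where the paper uses $3$ (probability exactly $1/3$); both fit in $2$ bits and meet the $2/3$ error requirement.
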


\begin{proof}
Fix some $r>0$, and some graph $G\in \F$. Let $\gamma=
\min(\delta,\tfrac{1}{\beta}\log(3/2))$. Note that $0\le \gamma\le
\delta$ and $2^{-\beta \gamma}\ge \tfrac23$. Let $\Delta=r/\gamma$,
and let $\mathcal{P}$ be a $(\beta,\delta,\Delta)$-padded distribution over partitions of
$V(G)$. Let $\mathbf{P}$ be a partition of $V(G)$ drawn according to
the distribution $\mathcal{P}$. Then each set of $\mathbf{P}$ has weak
diameter at most $\Delta$. Assign a random identifier
$\text{id}(S)$ to each set $S\in \mathbf{P}$, drawn uniformly at
random from the set $\{1,2,3\}$ (each $\text{id}(S)$ requires only 2 bits). The label of each vertex $u\in V(G)$ simply consists of the identifier
$\text{id}(S)$ of the unique set $S\in \mathbf{P}$ such that $u\in
S$. Given the labels of $u$ and $v$, the decoder outputs 1 if and
only the labels are equal. Note that the decoder is clearly distance-invariant.

Assume first that $d(u,v)\le r$, then since $v$ is in
the ball of radius $r=\gamma \Delta$ centered in $u$, it follows that $u$
and $v$ are in the same set $S\in \mathbf{P}$ with probability at
least $2^{-\beta \gamma}\ge \tfrac23$. If $u$
and $v$ are in the same set $S\in \mathbf{P}$, then their labels are
equal with probability 1. It follows that if $d(u,v)\le r$, the
decoder outputs 1
with probability at least $2/3$, as desired.

Assume now that $d(u,v)> r/\gamma$. Since each set in $\mathbf{P}$ has weak
diameter at most $\Delta=r/\gamma$, it follows that $u$ and $v$ are in different sets of
$\mathbf{P}$ with probability 1. As each set $S\in \mathbf{P}$
is assigned a random element from $\{1,2,3\}$, $u$ and $v$ have the
same label with probability $\tfrac13$. It follows that if $d(u,v)>
r/\gamma$, the decoder outputs 0 with probability
$1-\tfrac13=\tfrac23$, as desired.
\end{proof}

Although this sketch \emph{does} use randomization for an equality check, it also uses randomization
to construct the padded decomposition, and so it is not equality-based.

It was proved in \cite{AGGNT19} that for $t\ge 4$, the class of  $K_t$-minor free graphs admits a
$(320t, \tfrac1{160})$-padded decomposition scheme. It was also proved in~\cite{LS10} (see also
\cite{AGGNT19}) that for any $g\ge 0$, the class of graphs embeddable on a surface of Euler genus
$g$ admits a $(O(\log g), \Omega(1))$-padded decomposition scheme. We obtain the following two
corollaries, and again emphasize that these sketches have two-sided error.

\begin{corollary}\label{cor:minor2}
For any $t\ge 4$, the class of  $K_t$-minor free graphs has a
distance-invariant $O(t)$-ADT sketch with labels of at most 2 bits.
\end{corollary}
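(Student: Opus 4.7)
The proof is a direct application of Theorem~\ref{thm:PDSdist} to the padded decomposition scheme for $K_t$-minor-free graphs cited just above the statement. The plan is the following: invoke the result of \cite{AGGNT19}, stating that for $t\ge 4$, the class $\F_t$ of $K_t$-minor-free graphs admits a $(320t, \tfrac1{160})$-padded decomposition scheme, and then feed these parameters into Theorem~\ref{thm:PDSdist}.

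Applying Theorem~\ref{thm:PDSdist} with $\beta = 320t$ and $\delta = \tfrac{1}{160}$ yields a distance-invariant $\alpha$-ADT sketch for $\F_t$ of size $2$, where
\[
  \alpha = \max\!\left(\tfrac{1}{\delta}, \tfrac{\beta}{\log(3/2)}\right) = \max\!\left(160, \tfrac{320\,t}{\log(3/2)}\right).
\]
For $t$ at least some small absolute constant, the second term dominates, so $\alpha = O(t)$; for smaller $t$ in the range $4\le t \le O(1)$ the bound $\alpha \le 160 = O(t)$ still holds. Either way $\alpha = O(t)$, which is the claimed approximation factor.

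There is no real obstacle here beyond checking the arithmetic: both the size bound (which is $2$, directly from the conclusion of Theorem~\ref{thm:PDSdist}) and the distance-invariance are inherited from the general construction, and the only substantive input is the existence of the padded decomposition, which is quoted as a black box from \cite{AGGNT19}. I would emphasize in the write-up, for the reader's benefit, that unlike the equality-based sketches derived from sparse covers in Corollary~\ref{cor:minor}, this sketch has two-sided error and is not equality-based, since the randomness is used both to sample the partition and to pick the random identifiers in $\{1,2,3\}$; this was already noted in the discussion preceding the statement.
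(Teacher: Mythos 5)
Your proposal is correct and matches the paper's own derivation exactly: the corollary is obtained by plugging the $(320t,\tfrac{1}{160})$-padded decomposition scheme of \cite{AGGNT19} into Theorem~\ref{thm:PDSdist}, giving $\alpha=\max\bigl(160,\tfrac{320t}{\log(3/2)}\bigr)=O(t)$ with labels of 2 bits. Your added remarks on two-sided error and distance-invariance are consistent with the discussion already in the paper.
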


\begin{corollary}\label{cor:genus}
For any $g\ge 0$, the class of  graphs embeddable on a surface of
Euler genus $g$ has  a distance-invariant $O(\log g)$-ADT sketch with labels of at most 2 bits.
\end{corollary}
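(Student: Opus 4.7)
The plan is to derive the corollary as an immediate consequence of Theorem \ref{thm:PDSdist}, exactly in parallel with the proof of Corollary \ref{cor:minor2}. The only ingredient that is not yet in hand at the point of the statement is a padded decomposition scheme for graphs of bounded Euler genus, which is supplied by the cited works.

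First, I would invoke the result of \cite{LS10} (as restated in \cite{AGGNT19}): for any $g \ge 0$, the class of graphs embeddable on a surface of Euler genus $g$ admits a $(\beta, \delta)$-padded decomposition scheme with $\beta = O(\log g)$ and $\delta = \Omega(1)$. This is the only nontrivial ingredient; everything else is bookkeeping.

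Next, I would plug these parameters directly into Theorem \ref{thm:PDSdist}. That theorem produces a distance-invariant $\alpha$-ADT sketch of size $2$ with
\[
  \alpha = \max\!\left(\tfrac{1}{\delta},\, \tfrac{\beta}{\log(3/2)}\right).
\]
Since $\delta = \Omega(1)$, the first term is $O(1)$, and since $\beta = O(\log g)$, the second term is $O(\log g)$. The maximum is therefore $O(\log g)$, which yields the claimed approximation factor with sketches of at most $2$ bits per vertex, as in Theorem \ref{thm:PDSdist}.

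There is essentially no obstacle here beyond correctly citing the padded decomposition construction; the statement is a direct specialization of Theorem \ref{thm:PDSdist}, parallel to the minor-closed case handled in Corollary \ref{cor:minor2}. As with that corollary, it is worth flagging explicitly (as is done in the surrounding discussion) that the resulting sketch has two-sided error and is not equality-based, in contrast with the sketches produced via sparse covers in Theorem \ref{thm:SCdist}.
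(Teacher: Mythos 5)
Your proposal is correct and follows exactly the paper's route: the corollary is stated as an immediate consequence of Theorem~\ref{thm:PDSdist} together with the $(O(\log g), \Omega(1))$-padded decomposition scheme for graphs of Euler genus $g$ from \cite{LS10} (see also \cite{AGGNT19}). The parameter bookkeeping $\alpha=\max\bigl(\tfrac{1}{\delta},\tfrac{\beta}{\log(3/2)}\bigr)=O(\log g)$ and the remark about two-sided, non-equality-based error match the paper's treatment.
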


We remark that lower bounds for sketching therefore imply lower bounds
on sparse covers and padded
decompositions. For example, the communication complexity lower bound of
\cite{AK08} implies that bounded-degree expanders do not admit padded covers or sparse covers, which
is something that we were unable to prove directly.

\begin{acknowledgement}
We thank Gwena\"el Joret for many helpful discussions, and for allowing us to include his proof of
Theorem~\ref{thm:distr2}. We thank Viktor Zamaraev for leading us to Corollary~\ref{cor:hypercube
counterexample} and carefully proofreading our manuscript. We thank Alexandr Andoni for a helpful discussion and for sharing with us the
manuscript \cite{AK08}. We thank Renato Ferreira Pinto Jr. and Sebastian Wild for comments on the
presentation of this article. Finally, we thank the anonymous reviewers of the
journal and 
conference \cite{EHK22} versions of this paper for the many insightful
comments and suggestions. We especially thank a reviewer for suggesting the simplified and improved
version of Lemma~\ref{lemma:min degree}.
\end{acknowledgement}

\bibliographystyle{halpha}
\bibliography{references.bib}

\end{document}